\newif\ifllncs  %
\newif\iffull   %
\def\shownotes{1}
\newcommand{\authnote}[2]{\textcolor{red}{\textsf{#1 }\textcolor{blue}{ #2}}}
\newcommand{\authnote}[2]{}
\newcommand{\qnote}[1]{{\authnote{Qipeng}{#1}}}
\newcommand{\ynote}[1]{{\authnote{Yilei}{#1}}}
\newtheorem{theorem}{Theorem}
\newtheorem{lemma}[theorem]{Lemma}
\newtheorem{definition}[theorem]{Definition}
\newtheorem{remark}[theorem]{Remark}
\newtheorem{corollary}[theorem]{Corollary}
\newtheorem{proposition}[theorem]{Proposition}
\newtheorem{fact}[theorem]{Fact}
\newcommand{\bra}[1]{\langle#1|}
\newcommand{\ket}[1]{|#1\rangle}
\newcommand{\bk}[2]{\langle#1|#2\rangle}
\newcommand{\kb}[2]{ \ket{#1}\bra{#2} }
\newcommand{\diag}[1]{\mathrm{diag} \left( #1\right) }
\newcommand{\gs}[1]{\mathsf{GS} \left( #1\right) }
\newcommand{\ngs}[1]{\mathsf{NGS} \left( #1\right) }
\newcommand{\poly}{{\sf poly}}
\newcommand{\negl}{{\sf negl}}
\newcommand{\sis}{{\sf SIS}}
\newcommand{\sistwo}{{\sf SIS}^{\infty}}
\newcommand{\approxSVP}{\mathsf{approx.SVP}}
\newcommand{\EDCP}{\mathsf{EDCP}}
\newcommand{\LWE}{\mathsf{LWE}}
\newcommand{\QLWE}{\mathsf{S\ket{LWE}}}
\newcommand{\LWEstate}{\mathsf{C\ket{LWE}}}
\newcommand{\zo}{\{0,1\}}
\newcommand{\la}{\leftarrow}
\newcommand{\F}{\mathbb{F}}
\newcommand{\lowerrounding}[1]{\left \lfloor #1 \right \rfloor}
\newcommand{\innerprod}[2]{\left \langle {#1}, {#2} \right \rangle}
\newcommand{\mat}[1] { {#1} }		%
\newcommand{\ary}[1] { {#1} }		%
\newcommand{\pmat}[1]{\begin{pmatrix}#1\end{pmatrix}}
\newcommand{\C}{\mathbb{C}}
\newcommand{\N}{\mathbb{N}}
\newcommand{\R}{\mathbb{R}}
\newcommand{\set}[1]{ \left\{ #1 \right\}  }   %
\newcommand{\Z}{\mathbb{Z}}
\newcommand{\QFT}{\mathsf{QFT}}
\newcommand{\cB}{\mathcal{B}}
\begin{document}

\title{Quantum Algorithms for Variants of \\Average-Case Lattice Problems via Filtering} 

\iffull
\author{
Yilei Chen\thanks{Tsinghua University. \texttt{chenyilei@mail.tsinghua.edu.cn}. \texttt{chenyilei.ra@gmail.com}.}
\and Qipeng Liu\thanks{Simons Institute for the Theory of Computing. \texttt{qipengliu0@gmail.com}.}
\and Mark Zhandry\thanks{Princeton University and NTT Research. \texttt{mzhandry@gmail.com}.}
}
\else
\author{
}
\fi

\maketitle

\begin{abstract}
We show polynomial-time quantum algorithms for the following problems: 
\begin{enumerate}
\item Short integer solution (SIS) problem under the \emph{infinity} norm, where the public matrix is very wide, the modulus is a polynomially large prime, and the bound of infinity norm is set to be half of the modulus minus a constant. 
\item Learning with errors (LWE) problem given LWE-like quantum states with polynomially large moduli and certain error distributions, including bounded uniform distributions and Laplace distributions.
\item Extrapolated dihedral coset problem (EDCP) with certain parameters. 
\end{enumerate}
The SIS, LWE, and EDCP problems in their standard forms are as hard as solving lattice problems in the worst case. However, the variants that we can solve are not in the parameter regimes known to be as hard as solving worst-case lattice problems. Still, no classical or quantum polynomial-time algorithms were known for the variants of SIS and LWE we consider. For EDCP, our quantum algorithm slightly extends the result of Ivanyos et al. (2018).

Our algorithms for variants of SIS and EDCP use the existing quantum reductions from those problems to LWE, or more precisely, to the problem of solving LWE given LWE-like quantum states. Our main contribution is solving LWE given LWE-like quantum states with interesting parameters using a filtering technique. 
\end{abstract}

\iffalse
%
\begin{abstract}
We show polynomial-time quantum algorithms for the following problems: 
\begin{enumerate}
\item Short integer solution (SIS) problem under the \emph{infinity} norm, where the public matrix is very wide, the modulus is a polynomially large prime, and the bound of infinity norm is set to be half of the modulus minus a constant. 
\item Extrapolated dihedral coset problem (EDCP) with certain parameters. 
\end{enumerate}
Both SIS and EDCP in their standard forms are as hard as solving lattice problems in the worst case. However, the variants that we can solve are not in the parameter regimes known to be as hard as solving worst-case lattice problems. Still, no classical or quantum polynomial-time algorithms were known for those variants.

Our algorithms use the existing quantum reductions from those problems to the learning with errors problem (LWE), or more precisely, to the problem of solving LWE given LWE-like quantum states. Our main contributions are introducing a filtering technique and solving LWE given LWE-like quantum states with interesting parameters. 
\end{abstract}
\fi

%

\iffull
\newpage
\tableofcontents

\newpage
\fi

\iffull
\setcounter{page}{1}
\else
\pagestyle{plain}
\fi

\section{Introduction}

Solving the shortest vector problem (SVP) over lattices has been a target for designing efficient quantum algorithms for decades. In the literature, solving approximate SVP for \emph{all} lattices has been (classically or quantumly) reduced to the following problems:

\begin{enumerate}
	\item The short integer solution (SIS) problem, classically, initially shown by Ajtai~\cite{DBLP:conf/stoc/Ajtai96}.
	\item The dihedral coset problem (DCP), quantumly, initially shown by Regev~\cite{DBLP:conf/focs/Regev02}.
	\item The learning with errors problem (LWE), quantumly, initially shown by Regev~\cite{DBLP:conf/stoc/Regev05}.
\end{enumerate}

Therefore, to show an efficient quantum algorithm for approximate SVP in the worst-case, it suffices to construct an efficient quantum algorithm for any one of those average-case problems. However, no polynomial (or even subexponential) time quantum algorithms are known for SIS or LWE. For DCP, a subexponential quantum algorithm is given by Kuperberg~\cite{DBLP:journals/siamcomp/Kuperberg05}. But the quantum reduction shown by Regev~\cite{DBLP:conf/focs/Regev02} requires the DCP algorithm to be noise-tolerant, while the algorithm of Kuperberg is not. Let us also mention that over the past few years, efficient quantum algorithms for SVP for ideal lattices in certain parameter regimes have been shown in~\cite{CGS14,DBLP:conf/stoc/EisentragerHK014,biasse2016efficient,DBLP:conf/eurocrypt/CramerDPR16,DBLP:conf/eurocrypt/CramerDW17}. Still, showing a polynomial (or even subexponential) time quantum algorithm for SVP with polynomial approximation factors for \emph{all} lattices is widely open. 

The SIS and LWE problems are powerful tools for building cryptosystems, thus understanding the quantum hardness of those two problems is interesting in its own right. The SIS problem is typically used in constructing elementary cryptographic primitives such as one-way functions~\cite{DBLP:conf/stoc/Ajtai96}, collision-resistant hash functions~\cite{Goldreich-Goldwasser-Halevi96} 
digital signatures~\cite{DBLP:conf/stoc/GentryPV08}. The LWE problem is extremely versatile, yielding public-key cryptosystems~\cite{DBLP:conf/stoc/Regev05}, and advanced cryptographic capabilities such as fully homomorphic encryption (FHE)~\cite{DBLP:conf/focs/BrakerskiV11}, attribute-based encryption~\cite{DBLP:conf/stoc/GorbunovVW13}, and quantum FHE~\cite{DBLP:conf/focs/Mahadev18a}. The conjectured quantum hardness of SIS and LWE has also made lattice-based cryptosystems popular candidates for post-quantum cryptography standardization~\cite{DBLP:conf/africacrypt/DAnversKRV18,DBLP:conf/eurosp/BosDKLLSSSS18,DBLP:journals/tches/DucasKLLSSS18}.

\subsection{Background of SIS, LWE, DCP, and our main results}

We show polynomial-time quantum algorithms for certain variants of SIS, LWE, and DCP. 
Our quantum algorithms for the variants of SIS and DCP go through the existing quantum reductions from those problems to LWE, or more precisely, to the problems of \emph{Constructing quantum LWE states} ($\LWEstate$) and \emph{Solving LWE given LWE-like quantum states} ($\QLWE$). 
In fact, the heart of our results is showing a quantum filtering technique for solving those quantum versions of LWE. 

\begin{figure}
  \centering
    \begin{tikzpicture}
      \begin{scope}[xshift=1cm, yshift = 0.7cm]
        \draw node at (0  , 2  ) (C) {$\approxSVP$};
        \draw node at (0  , 0   ) (O) {$\sis$};
        \draw node at (-3 ,-1.3 ) (A) {$\EDCP$};
        \draw node at (3  ,-1.3 ) (B) {$\LWE$};

        \draw [<-] (O) -- node[auto,swap] {\scriptsize \cite{DBLP:conf/focs/Regev02} Q} (A);
        \draw [->] (C) -- node[auto,swap] {\scriptsize \cite{DBLP:conf/focs/Regev02} Q} (A);
        \draw [->] (B.140) -- node[auto,swap] {\scriptsize Easy} (O.350);
        \draw [->] (O) -- node[auto,swap] {\scriptsize \cite{DBLP:conf/asiacrypt/StehleSTX09} Q} (B);
        \draw [<->] (A) -- node[auto,swap] {\scriptsize \cite{DBLP:conf/pkc/BrakerskiKSW18} Q} (B);
        \draw [->] (O) -- node[auto,swap] {\scriptsize Easy} (C);
        \draw [->] (C.250) -- node[auto,swap] {\scriptsize \cite{DBLP:conf/stoc/Ajtai96}} (O.110);
        \draw [<-] (C) -- node[auto,swap] {\scriptsize Easy} (B);
        \draw [<-] (B.110) -- node[auto,swap] {\scriptsize \cite{DBLP:conf/stoc/Regev05} Q} (C.325);
      \end{scope}

      \begin{scope}[xshift=8cm, yshift = 1.7cm]
        \draw node at (0  , 0.6   ) (O) {$\sis$};
        \draw node at (-2 ,-1.3 ) (A) {$\EDCP$};
        \draw node at (2 ,-1.3 ) (B) {$\QLWE$, $\LWEstate$};

        \draw [->] (O) -- node[auto,swap] {\scriptsize \cite{DBLP:conf/asiacrypt/StehleSTX09} Q} (B.60);
        \draw [->] (A) -- node[auto,swap] {\scriptsize \cite{DBLP:conf/pkc/BrakerskiKSW18} Q} (B);
      \end{scope}
    \end{tikzpicture}
    \caption{Left: An overview of the reductions between SVP, SIS, EDCP, and LWE. ``$A\to B$'' means Problem $A$ reduces to Problem $B$. ``Q'' means quantum. Right: The reductions used in our quantum algorithms.   }
  \label{fig:reductions}
\end{figure}

Let us now provide more background of SIS, LWE, and DCP, then state our main results.

\subsubsection{SIS } 
Let us first recall the standard definition of the SIS problem.
\begin{definition}[Short integer solution (SIS) problem~\cite{DBLP:conf/stoc/Ajtai96}]
	Let $n, m, q$ be integers such that $m = \Omega(n\log q)\subseteq\poly(n)$. Let $\beta$ be a positive real number such that $\beta<q$. Let $A$ be a uniformly random matrix over $\Z_q^{n \times m}$. The SIS problem $\sis_{n, m, q, \beta}$ asks to find a nonzero vector $x \in \mathbb{Z}^m$ such that $\| x \|_2 \leq \beta$ and $A x \equiv 0 \pmod q$.
\end{definition}

The SIS problem is shown to be as hard as solving approximate SVP for all lattices~\cite{DBLP:conf/stoc/Ajtai96}. The reductions are improved via a series of works~\cite{DBLP:conf/focs/CaiN97,DBLP:conf/stoc/Micciancio02,MicciancioRegev07,DBLP:conf/stoc/GentryPV08,micciancio2013hardness}. %
Several variants of the SIS problem are studied in the literature. The most common variant is the one that changes the restriction of the solution. The solution is bounded in $\ell_p$ norm for some $p\geq 0$, or even the $\ell_\infty$ norm, instead of bounded in $\ell_2$ norm. In this paper, we look at the variant where the solution is bounded by its $\ell_\infty$ norm. More precisely, we use $\sistwo_{n,m,q,\beta}$ to denote the variant of SIS where the solution $x$ is required to satisfy $\|x\|_\infty\leq \beta$. When $\beta = 1$, it corresponds to the subset-sum problem where the solution is bounded in $\set{-1, 0, 1}$.

Bounding the SIS solution in its $\ell_\infty$ norm is used quite commonly in cryptography due to its simplicity (it is used, e.g., in~\cite{brakerski2015constrained}). When the parameters are set so that $\beta\sqrt{m}>q$, i.e., when $m$ is relatively large compared to $q/\beta$, we are not aware of any worst-case problem that is reducible to $\sistwo_{n,m,q,\beta}$. Still, such parameter settings are used in cryptosystems. In a recent practical signature scheme proposed by Ducas et al.~\cite{DBLP:journals/tches/DucasKLLSSS18}, the security of the scheme relies on (the ``Module'' version of) $\sistwo_{n,m,q,\beta}$ with $\beta\sqrt{m}>q$. In their security analysis, the authors mention that the problem of $\sistwo$ by itself has not been studied in-depth. Most of the algorithms they can think of for $\sistwo$ are the ones designed for solving SIS or SVP in the $\ell_2$ norm, such as BKZ~\cite{schnorr1994lattice}.

To date, the only algorithm we are aware of that takes advantage of the $\ell_\infty$-norm bound has the following features. It solves $\sistwo_{n,m,q,\beta}$ with a highly composite $q$ and a very large $m$. For example, it is a polynomial-time algorithm for $\sistwo_{n,O(n^c),2^c,1}$ when $c$ is a constant. The algorithm is classical, folklore, and we include a formal description of the algorithm in \Cref{sec:SIS_infinite_alg}. It was not clear how to solve $\sistwo_{n,m,q,\beta}$ when $q$ is a polynomial prime and $\beta$ is just slightly smaller than $q/2$, even if $m$ is allowed to be an arbitrary polynomial.

We show a polynomial-time quantum algorithm for $\sistwo_{n,m,q,\beta}$ where $q$ is a polynomial prime modulus, $\beta = \frac{q-c}{2}$ for some constant $c$, and $m$ is a large polynomial. 
\begin{theorem}\label{thm:SISwithpolymod_intro}
Let $c>0$ be a constant integer, $q>c$ be a polynomially large prime modulus. Let $m\in \Omega\left( (q-c)^{3} \cdot n^{{c + 1}}  \cdot q\cdot \log q \right)\subseteq \poly(n)$, there is a polynomial-time quantum algorithm that solves $\sistwo_{n,m,q,\frac{q-c}{2}}$. 
\end{theorem}
\begin{remark}
Note that if $\beta=q/2$, then a solution can be found classically by simply solving $Ax\equiv 0\pmod q$ over $\mathbb{Z}_q$ using Gaussian elimination. Then for each entry in $x$, pick the representative over $\Z$ that lies in the range $[-q/2,q/2)$. This classical algorithm also extends to $\beta=\frac{q-c}{2}$ when $q=\Omega(n)$. In particular, as long as all the entries of $x$ are at least $c/2$ far from $q/2$, $x$ will be a valid solution. In the regime $q=\Omega(n)$, a random solution to $Ax\equiv 0 \pmod q$ will satisfy this with probability at least $O((1-c/n)^n)=O(e^{-c})$, a constant.
Theorem~\ref{thm:SISwithpolymod_intro} thus gives a non-trivial algorithm for $\sistwo_{n,m,q,\frac{q-c}{2}}$ when $q\in o(n)$, for which (to the best of our knowledge), no prior classical or quantum algorithm was known. 
\end{remark}

\begin{remark}
Our algorithm can also solve a variant of SIS where the each entry of the solution is required to be in an arbitrary subset $S$ of $\Z_q$ such that $q - |S| = c$, where $c$ is a constant (instead of the subset $[-\beta, \beta]\cap\Z$ of $\Z_q$). The width of the $A$ matrix is required to satisfy $m\in \Omega\left( (q-c)^{3} \cdot n^{{c + 1}} \cdot q\cdot \log q \right)\subseteq \poly(n)$. For example, suppose $q = 3$ and $m \in\Omega(n^2)$, our algorithm is able to provide a $\set{0, 1}$-solution for SIS. 
\end{remark}

Let us remark that our algorithm does not improve upon the existing algorithms for breaking the signature scheme in~\cite{DBLP:journals/tches/DucasKLLSSS18} since we require $m$ to be very large, while the $m$ used in~\cite{DBLP:journals/tches/DucasKLLSSS18} is fairly small.

\subsubsection{LWE } 
Let us first recall the classical definition of the LWE problem.
\begin{definition}[Learning with errors (LWE)~\cite{DBLP:conf/stoc/Regev05}]
Let $n$, $m$, $q$ be positive integers. 
Let $u \in \mathbb{Z}_q^n$ be a secret vector. 
The learning with errors problem $\LWE_{n,m,q,\mathcal{D}_{\sf noise}}$ asks to find the secret vector $u$ given access to an oracle that outputs $a_i$, $a_i \cdot u + e_i \pmod q$ on its $i^{th}$ query, for $i = 1, ..., m$. Here each $a_i$ is a uniformly random vector in $\mathbb{Z}_q^n$, and each error term $e_i$ is sampled from a distribution $\mathcal{D}_{\sf noise}$ over $\Z_q$. 
\end{definition}

Regev~\cite{DBLP:conf/stoc/Regev05} shows if there is a polynomial-time algorithm that solves $\LWE_{n,m,q,\mathcal{D}_{\sf noise}}$ where $\mathcal{D}_{\sf noise}$ is Gaussian and $m$ can be an arbitrary polynomial, then there is a quantum algorithm that solves worst-case approximate SVP.
Note that in Regev's definition, the LWE samples are completely classical. In the variants of LWE we consider,  the error distribution appears in the amplitude of some quantum states. Those quantum variants of LWE were implicitly used in the quantum reductions in~\cite{DBLP:conf/asiacrypt/StehleSTX09,DBLP:conf/pkc/BrakerskiKSW18}, but they have not been made formal. Looking ahead, our new quantum algorithms make explicit use of the quantum nature of the noise distribution. 

Our quantum algorithm for $\sistwo$ adapts the quantum reduction from $\sis$ to the problem of \emph{constructing LWE states} implicitly used in~\cite{DBLP:conf/asiacrypt/StehleSTX09}.

\begin{definition}\label{def:LWEstateproblem}
Let $n$, $m$, $q$ be positive integers. Let  $f$ be a function from $\Z_q$ to $\R$. 
The problem of constructing LWE states $\LWEstate_{n,m,q,f}$ asks to construct a  quantum state of the form 
$\sum_{u\in\Z_q^n} \bigotimes_{i = 1}^m\left( \sum_{e_i\in\Z_q} f(e_i) \ket{ a_i \cdot u + e_i \bmod q}\right)$, given the input $\set{ a_i }_{i = 1, ..., m}$ where each $a_i$ is a uniformly random vector in $\mathbb{Z}_q^n$.
\end{definition}

Our quantum algorithm for $\EDCP$ adapts the quantum reduction from $\EDCP$ to the problem of \emph{solving LWE given LWE-like quantum states} implicitly used in~\cite{DBLP:conf/pkc/BrakerskiKSW18}.

\begin{definition}\label{def:QLWEproblem}
Let $n$, $m$, $q$ be positive integers. Let $f$ be a function from $\Z_q$ to $\R$.
Let $u \in \mathbb{Z}_q^n$ be a secret vector. 
The problem of solving LWE given LWE-like states $\QLWE_{n,m,q,f}$ asks to find $u$ given access to an oracle that outputs independent samples $a_i$, $\sum_{e_i\in\Z_q} f(e_i) \ket{ a_i \cdot u + e_i \bmod q}$ on its $i^{th}$ query, for $i = 1, ..., m$. Here each $a_i$ is a uniformly random vector in $\mathbb{Z}_q^n$. 
\end{definition}

We would like to remark that in the problem $\LWEstate$, there is no secret vector $u$; the goal is to construct an equal superposition of all LWE states for all possible $u$. Whereas for the problem $\QLWE$, the goal is to find the secret vector $u$ given samples of LWE states for this particular secret vector. 

Let us briefly discuss the relations among $\LWE$, $\QLWE$, and $\LWEstate$. 
If we set $f$ as $\sqrt{ \mathcal{D}_{\sf noise}}$, then an efficient algorithm for solving $\LWE_{n,m,q,\mathcal{D}_{\sf noise}}$ implies efficient algorithms for solving $\LWEstate_{n,m,q,f}$ and $\QLWE_{n,m,q,f}$. However, solving $\LWEstate_{n,m,q,f}$ or $\QLWE_{n,m,q,f}$ does not necessarily imply solving $\LWE_{n,m,q,\mathcal{D}_{\sf noise}}$ in general. An algorithm for solving  $\LWEstate_{n,m,q,f}$ only implies an efficient algorithm for solving $\LWE_{n,m,q,\mathcal{D}_{\sf noise}}$ when $m$ is small compared to the ratio of the ``widths'' of $f$ and $\mathcal{D}_{\sf noise}$; we will explain in details in~\S\ref{sec:intro:open}. 

Let us also remark that the $\LWEstate$ and $\QLWE$ problems 
we define are different from the problem of ``LWE with quantum samples'' defined in~\cite{grilo2019learning}. In their definition, the quantum LWE samples are of the form 
$\sum_{a\in\Z_q^n}\ket{a}\ket{a\cdot u+e}$, where the error $e$ is classical and $a$ is in the quantum state. This variant of quantum LWE is easy to solve~\cite{grilo2019learning}, but the idea in the algorithm does not carry to the quantum LWE variants we are interested in.

In~\cite{DBLP:conf/stoc/Regev05} (followed by~\cite{DBLP:conf/asiacrypt/StehleSTX09,DBLP:conf/pkc/BrakerskiKSW18} and most of the papers that use LWE), the noise distribution $\mathcal{D}_{\sf noise}$ or $f$ is chosen to be Gaussian. One of the nice features of a Gaussian function $f$ is that both $f$ and its discrete Fourier transform (DFT) %
(over $\Z_q$), defined as 
\[ \hat{f}: \Z_q \to \C, ~~~ 
   \hat{f}: y\mapsto \sum_{x\in\Z_q} \frac{1}{\sqrt{q}}\cdot e^{\frac{2 \pi i xy}{q}} \cdot f(x),   \] 
are negligible beyond their centers. Such a feature of $\hat{f}$ is crucial in establishing the quantum reductions among lattice problems in~\cite{DBLP:conf/stoc/Regev05,DBLP:conf/asiacrypt/StehleSTX09,DBLP:conf/pkc/BrakerskiKSW18}.

Other choices of noise distribution are also used for LWE in the literature. One popular option is to let $f$ be the bounded uniform distribution over $[-B, B]$ for some $0<B<\frac{q}{2}$. For certain choices of $n,m,q,B$, (classical) LWE with $B$-bounded uniform error is proven to be as hard as LWE with Gaussian noise~\cite{DBLP:conf/eurocrypt/DottlingM13}. 
On the other hand, Arora and Ge~\cite{DBLP:conf/icalp/AroraG11} present a classical algorithm for breaking LWE with a prime modulus $q$ when the support $S$ of the LWE error distribution is very small. It requires $m \in \Omega(n^{|S|})$ and runs in time $\poly(n^{|S|})$. When $B\in\omega(1)$ and $q$ is a prime, no polynomial-time quantum algorithm has been published for $\LWE$, $\LWEstate$, or $\QLWE$.

We show when the noise distribution $f$ is chosen such that $\hat{f}$ is \emph{non-negligible} over $\Z_q$, then we can solve both $\LWEstate$ and $\QLWE$ in quantum polynomial-time.
\begin{theorem}\label{thm:solvingLWEstate_intro}
Let $n\in\N$ and $q\in\poly(n)$. 
Let $f:\Z_q\to \R$ be the amplitude for the error state such that the state $\sum_{e\in\Z_q} f(e)\ket{e}$ is efficiently constructible and $\eta:=\min_{y\in\Z_q} |\hat{f}(y)|$ is non-negligible.
Let $m\in \Omega\left( n  \cdot q / \eta^2 \right)\subseteq \poly(n)$, there exist polynomial-time quantum algorithms that solve $\LWEstate_{n,m,q,f}$ and $\QLWE_{n,m,q,f}$.
\end{theorem}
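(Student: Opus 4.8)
The plan is to isolate one \emph{filtering} primitive and apply it to each given LWE-like state. Fix a single sample $a_i$ together with the state $\sum_{e}f(e)\ket{a_i\cdot u+e}$. Applying the quantum Fourier transform over $\Z_q$ turns this into $\sum_{y\in\Z_q}\hat f(y)\,\omega_q^{(a_i\cdot u)y}\ket{y}$, a state that carries the phase $\omega_q^{(a_i\cdot u)y}$ but whose amplitudes are ``distorted'' by the envelope $\hat f$. Since $|\hat f(y)|\ge\eta$ for every $y$, I append an ancilla qubit and apply the diagonal, controlled rotation $\ket y\ket0\mapsto\ket y\big(\tfrac{\eta}{\hat f(y)}\ket0+\sqrt{1-\eta^2/|\hat f(y)|^2}\,\ket1\big)$ (a valid controlled single-qubit unitary, efficiently implementable once the $q$ values $\hat f(y)=\tfrac1{\sqrt q}\sum_x\omega_q^{xy}f(x)$ are computed); measuring the ancilla and keeping the outcome $\ket0$ — which happens with probability $q\eta^2$, non-negligible because $\eta$ is non-negligible and $q$ is polynomial — collapses the register to $\tfrac1{\sqrt q}\sum_y\omega_q^{(a_i\cdot u)y}\ket y=\QFT\ket{a_i\cdot u}$, with the envelope $\hat f$ exactly cancelled. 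Applying $\QFT^{-1}$ and measuring then yields the \emph{exact} linear equation $a_i\cdot u\bmod q$.

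For $\QLWE_{n,m,q,f}$ this gives the algorithm directly: run the filtering step on each of the $m$ samples (a failed sample is simply discarded). The expected number of successes is $m\,q\eta^2=\Omega(nq^2)\gg n$, so by a Chernoff bound we obtain, with overwhelming probability, at least $n$ exact equations $a_i\cdot u=c_i$ whose coefficient vectors $a_i$ — being fresh uniform samples, independent of the success event — span $\Z_q^n$ with constant probability; solving the resulting linear system over $\Z_q$ by Gaussian elimination recovers $u$.

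For $\LWEstate_{n,m,q,f}$ the plan is first to build $\tfrac1{\sqrt{q^n}}\sum_{u\in\Z_q^n}\ket u\otimes\bigotimes_{i=1}^m\big(\sum_{e_i}f(e_i)\ket{a_i\cdot u+e_i}\big)$ — routine, since $\sum_e f(e)\ket e$ is efficiently constructible: prepare $\tfrac1{\sqrt{q^n}}\sum_u\ket u$, then for each $i$ load $\sum_e f(e)\ket e$ and add $a_i\cdot u$ controlled on the $\ket u$ register — and then to \emph{erase} the leading $\ket u$ register. The erasure uses the filtering primitive coherently: pick $n$ indices with linearly independent $a_i$, and on those registers run $\QFT$, the filtering rotation, amplitude amplification on the good ancilla outcome (boosting its weight to $1-\negl(n)$), and $\QFT^{-1}$, which coherently rewrites those registers as $\ket{a_i\cdot u}$; from them one computes $u$ into a scratch register, subtracts it from the leading register to zero it out, and then uncomputes all auxiliary registers by running the filtering steps in reverse. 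Because the prepared state is normalized and a successful erasure leaves exactly $\tfrac1{\sqrt{q^n}}\sum_u\bigotimes_i(\cdots)$ up to negligible error, this outputs the required $\LWEstate$.

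I expect the coherent erasure for $\LWEstate$ to be the main obstacle. Exact erasure is impossible — the states $\ket{\psi_u}:=\bigotimes_i\sum_{e_i}f(e_i)\ket{a_i\cdot u+e_i}$ are not mutually orthogonal, so no unitary sends $\ket u\ket{\psi_u}\mapsto\ket0\ket{\psi_u}$ — and the analysis must instead exploit that when $m$ is large the $\ket{\psi_u}$ are \emph{exponentially close} to orthonormal: their pairwise overlaps $\prod_i r_f(a_i\cdot(u-u'))$, with $r_f$ the autocorrelation of $f$, are $2^{-\Omega(m)}$ for random $a_i$ once $m\gg n\log q$, which is the real reason the hypothesis forces $m$ so large. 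The delicate part is then to implement the erasure to negligible error \emph{efficiently}: the amplitude-amplification reflections demanded by the coherent filtering are reflections about $u$-dependent intermediate states, so one has to track carefully which register plays the role of the ``source'' of $u$ at each stage, and restore — via the reversibility of the filtering — every register consumed in reading off $u$, so that no residual copy of $u$ remains entangled with the output. I expect the bulk of the technical work, and the origin of the precise exponent in $m\in\Omega(nq/\eta^2)$, to lie here.
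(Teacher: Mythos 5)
Your filtering primitive and the resulting algorithm for $\QLWE_{n,m,q,f}$ are correct, and they take a genuinely different route from the paper. The paper builds, for a random shift $y_i$, the unitary whose rows are the normalized Gram--Schmidt orthogonalization of the circulant matrix $[\ket{\psi_{y_i}},\dots,\ket{\psi_{y_i+q-1}}]$; only the outcome $q-1$ is unambiguous, and the per-sample success probability is bounded below by $\frac{1}{q}\|\gs{\ket{\psi_{y_i+q-1}}}\|_2^2\geq \eta^2/q$ via an eigenvalue/Vandermonde analysis of circulant matrices (the paper's \S4). Your QFT--flatten--postselect step instead identifies $a_i\cdot u$ outright whenever the ancilla reads $0$, with success probability exactly $\sum_y\eta^2=q\eta^2\leq 1$ (Parseval), independent of $a_i\cdot u$; this is a factor $q^2$ better per sample, needs no random shift, and bypasses the circulant/Gram--Schmidt machinery entirely. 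The $2q$-dimensional controlled rotation acts on $O(\log q)$ qubits, so it is efficiently implementable exactly as in the paper's preliminaries.

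The gap is in the erasure step for $\LWEstate_{n,m,q,f}$. Pre-selecting $n$ registers and demanding that all $n$ ancillas read $0$ gives a good-branch amplitude of $(q\eta^2)^{n/2}$, which is exponentially small in general, so amplitude amplification would need exponentially many iterations; amplifying one register at a time is not available either, because the required reflection is about the unknown, $u$-dependent state $\ket{\psi_{a_i\cdot u}}$ (and "running the amplification in reverse" to uncompute hits the same obstruction). Your diagnosis that the difficulty lies in the near-orthogonality of the $\ket{\psi_u}$ is also not the operative mechanism. The paper's resolution, which works verbatim with your (better) primitive, is: apply the $u$-independent unitaries $\QFT^{-1}\cdot(\text{ctrl-rot})\cdot\QFT$ coherently to \emph{all} $m$ registers without measuring or amplifying; then run a coherent classical circuit $D$ that scans the $m$ ancillas, treats each coordinate with ancilla $0$ as an exact equation $a_i\cdot u=c_i$, solves by Gaussian elimination when at least $n$ independent equations appear, and \emph{subtracts} the result from the leading $\ket{u}$ register; finally uncompute $D$ and the filtering unitaries. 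Correctness holds because, for every fixed $u$-branch, the total squared amplitude on outcome tuples yielding fewer than $n$ independent equations is $\negl(n)$ by a Chernoff bound once $m\cdot q\eta^2=\Omega(nq^2)\gg n$ (plus a union bound over $u\in\Z_q^n$) --- this, not near-orthogonality of the $\ket{\psi_u}$, is why $m\in\Omega(nq/\eta^2)$ suffices. With that replacement your argument closes.
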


Although the theorem does not cover the case where $f$ is Gaussian, it does cover some interesting error distributions $f$, such as when $f$ is super-Gaussian (i.e., when $f(x) = e^{-|x/B|^p}$, for $0<p<2$, $0<B<q$). It also covers the case where $f$ is the bounded uniform distribution. The following is a corollary of Theorem~\ref{thm:solvingLWEstate_intro} given that the DFT of bounded uniform distribution is non-negligible over $\Z_q$.

\begin{corollary}\label{thm:solvingLWEstate_Buniform_intro}
Let $n\in\N$ and $q\in\poly(n)$. Let $B\in\Z$ such that $0<2B+1<q$ and $\gcd(2B+1, q) = 1$. Let $f:\Z_q\to \R$ be $f(x) := 1/\sqrt{2B+1}$ when $x\in [-B, B]\cap\Z$ and $0$ elsewhere.
Let $m\in \Omega\left( n  \cdot q^4 \cdot(2B+1) \right)\subseteq \poly(n)$, there exist polynomial-time quantum algorithms that solve $\LWEstate_{n,m,q,f}$ and $\QLWE_{n,m,q,f}$.
\end{corollary}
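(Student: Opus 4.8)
The plan is to obtain the corollary as a direct instantiation of Theorem~\ref{thm:solvingLWEstate_intro}, so the whole task reduces to checking the two hypotheses of that theorem for the bounded uniform amplitude $f$ and then tracking how the quantity $\eta$ propagates into the bound $m\in\Omega(nq/\eta^2)$.

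First I would observe that the error state $\sum_{e\in[-B,B]\cap\Z} \tfrac{1}{\sqrt{2B+1}}\ket{e}$ is just a uniform superposition over a contiguous block of $2B+1$ integers (namely $\{0,\dots,2B\}$ shifted by $-B \bmod q$), which is preparable in time $\poly(\log q)$ by standard state-preparation techniques. Next I would compute the discrete Fourier transform of $f$ over $\Z_q$ explicitly as a geometric series: for $y\not\equiv 0 \pmod q$,
\[
  \hat f(y) \;=\; \frac{1}{\sqrt{q(2B+1)}}\sum_{x=-B}^{B} e^{2\pi i x y/q}
  \;=\; \frac{1}{\sqrt{q(2B+1)}}\cdot\frac{\sin\!\big(\pi y(2B+1)/q\big)}{\sin(\pi y/q)},
\]
while $\hat f(0)=\sqrt{(2B+1)/q}$; the middle equality is the usual Dirichlet-kernel evaluation, and it is unaffected by whether we represent the interval by $\{-B,\dots,B\}$ or by residues mod $q$ since $e^{2\pi i xy/q}$ depends only on $xy\bmod q$.

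The heart of the argument is lower bounding $|\hat f(y)|$ uniformly in $y$. I would bound the denominator by $|\sin(\pi y/q)|\le 1$, and for the numerator use the coprimality hypothesis $\gcd(2B+1,q)=1$: this guarantees that $y(2B+1)\bmod q$ is a \emph{nonzero} residue in $\{1,\dots,q-1\}$ whenever $y\not\equiv 0$, hence $\big|\sin(\pi y(2B+1)/q)\big|\ge \sin(\pi/q)\ge 2/q$, using $\sin\theta\ge 2\theta/\pi$ on $[0,\pi/2]$ and $q\ge 2$. This yields $|\hat f(y)|\ge \tfrac{2}{q^{3/2}\sqrt{2B+1}}$ for every $y\not\equiv 0$, and a one-line comparison (squaring and using $q\ge2$, $2B+1\ge1$) shows $|\hat f(0)|=\sqrt{(2B+1)/q}$ is at least this large as well. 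Therefore $\eta=\min_{y\in\Z_q}|\hat f(y)|\ge \tfrac{2}{q^{3/2}\sqrt{2B+1}}$, which, since $2B+1<q$ and $q\in\poly(n)$, is $\ge 1/q^2 = 1/\poly(n)$ and hence non-negligible.

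Finally I would substitute $1/\eta^2\le \tfrac{q^3(2B+1)}{4}$ into the requirement $m\in\Omega(nq/\eta^2)$ of Theorem~\ref{thm:solvingLWEstate_intro}, obtaining that $m\in\Omega\big(nq^4(2B+1)\big)$ suffices (the constant is absorbed into $\Omega$), so the theorem supplies the claimed polynomial-time quantum algorithms for both $\LWEstate_{n,m,q,f}$ and $\QLWE_{n,m,q,f}$. Essentially all of this is bookkeeping; the one genuine (though mild) obstacle is the Dirichlet-kernel lower bound, and precisely there is where $\gcd(2B+1,q)=1$ is indispensable — without it some nonzero $y$ could make $y(2B+1)$ a multiple of $q$, forcing $\hat f(y)=0$ and collapsing the bound.
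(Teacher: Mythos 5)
Your proposal is correct and matches the paper's own proof of this corollary essentially step for step: both instantiate Theorem~\ref{thm:solvingLWEstate_intro} by evaluating $\hat f$ as a Dirichlet kernel, use $\gcd(2B+1,q)=1$ to lower bound the numerator by $\sin(\pi/q)=\Omega(1/q)$ while bounding the denominator by $1$, and then substitute $\eta=\Omega\bigl(q^{-3/2}(2B+1)^{-1/2}\bigr)$ into $m\in\Omega(nq/\eta^2)$. The only differences are cosmetic constants (your $2/q$ versus the paper's $1/q$).
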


Our quantum algorithms for $\sistwo$ and $\EDCP$ (i.e., Theorem~\ref{thm:SISwithpolymod_intro} and Theorem~\ref{thm:DCPwithpolymod_intro}) are obtained from the following variant of Theorem~\ref{thm:solvingLWEstate_intro}, where the noise amplitude for the quantum LWE problems is set to be the DFT of the bounded uniform distribution. 
\begin{theorem}\label{thm:theLWEstatementneeded_intro}
Let $q$ be a polynomially large prime modulus. Let $B\in\Z$ be such that $q-(2B+1) = c$ is a constant. 
Let $f:Z_q\to \R$ be the bounded uniform distribution over $[-B, B]\cap\Z$. 
Let $m\in \Omega\left( (q-c)^{3} \cdot n^{{c + 1}} \cdot q\cdot \log q \right)\subseteq \poly(n)$.
There exist polynomial-time quantum algorithms that solve $\LWEstate_{n,m,q,\hat f}$ and $\QLWE_{n,m,q,\hat f}$. 
\end{theorem}

\subsubsection{DCP}\label{sec:EDCPintro}
Let us introduce the variant of DCP defined by Brakerski et al.~\cite{DBLP:conf/pkc/BrakerskiKSW18}.
\begin{definition}[Extrapolated Dihedral Coset Problem (EDCP)]\label{def:EDCP}
Let $n\in\N$ be the dimension, $q\geq 2$ be the modulus, and a function $D:\Z_q\to \R$, consists of $m$ input states of the form
\[ \sum_{j\in\Z_q} D(j) \ket{j} \ket{ x+j\cdot s}, \]
where $x \in \Z_q^n$ is arbitrary and $s\in\Z_q^n$ is fixed for all $m$ states. We say that an algorithm solves $\EDCP_{n,m,q,D}$ if it outputs $s$ with probability $\poly(1/(n \log q))$ in time $\poly(n \log q)$.
\end{definition}

In this paper we are interested in the parameter setting where $n$ is the security parameter and $q\in\poly(n)$. Although not strictly needed in this paper, let us briefly recall how the variants of the dihedral coset problem evolve. The original dihedral coset problem is a special case of EDCP where $n=1$, $q$ is exponentially large, and $D$ is the uniform distribution over $\zo$. 
Solving DCP implies solving the dihedral hidden subgroup problem. 
The two-point problem defined by Regev~\cite{DBLP:conf/focs/Regev02} is another special case of EDCP where $D$ is the uniform distribution over $\zo$, and $n$ is the security parameter. It was used as an intermediate step for establishing the reduction from approximate SVP to DCP.
When the distribution $D$ is non-zero beyond $\zo$, the EDCP problem does not necessarily correspond to any versions of the hidden subgroup problem. The reason that Brakerski et al.~\cite{DBLP:conf/pkc/BrakerskiKSW18} considers a distribution $D$ supported beyond $\zo$ is to establish a reduction from EDCP to LWE. Therefore, combining with the reduction from LWE to EDCP (by adapting Regev's reduction~\cite{DBLP:conf/focs/Regev02}), they show that EDCP, as a natural generalization of DCP, is equivalent to LWE. 

Efficient quantum algorithms are known for variants of EDCP when the modulus $q$ and the distribution $D$ satisfy certain conditions~\cite{DBLP:conf/stoc/FriedlIMSS03,DBLP:conf/soda/ChildsD07,DBLP:conf/esa/IvanyosPS18}. Let us remark that EDCP with those parameter settings are not known to be as hard as worse-case SVP or LWE through the reductions of~\cite{DBLP:conf/focs/Regev02} or~\cite{DBLP:conf/pkc/BrakerskiKSW18}.  

In this paper we show polynomial-time quantum algorithms that solve EDCP with the following parameter settings.

\begin{theorem}\label{thm:DCPwithpolymod_fg_intro}
Let $n\in\N$ and $q \in \poly(n)$. 
Let $f:\Z_q\to \R$ be such that the state $\sum_{e\in\Z_q} f(e)\ket{e}$ is efficiently constructible and $\eta:=\min_{z\in\Z_q}|\hat{f}(z)|$ is non-negligible. 
Let $m\in \Omega\left( n \cdot q / \eta^2 \right)\subseteq \poly(n)$. 
There is a polynomial time quantum algorithm that solves $\EDCP_{n,m,q,\hat f}$
\end{theorem}

\begin{theorem}\label{thm:DCPwithpolymod_intro}
Let $n\in\N$ and $q \in \poly(n)$. Let $c$ be a constant integer such that $0<c<q$. Let $m\in \Omega\left( (q-c)^{3} \cdot n^{{c + 1}}   \cdot q\cdot \log q \right)\subseteq \poly(n)$, there is a quantum algorithm running in time $\poly(n)$ that solves $\EDCP_{n,m,q,D}$ where $D$ is the uniform distribution over $[0,q-c)\cap\Z$. 
\end{theorem}

We note that EDCP with the parameters in Theorem~\ref{thm:DCPwithpolymod_intro} has already been solved in the work of Ivanyos et al.~\cite{DBLP:conf/esa/IvanyosPS18} by a quantum algorithm with similar complexity. The parameters in Theorem~\ref{thm:DCPwithpolymod_fg_intro} are not covered by the result in~\cite{DBLP:conf/esa/IvanyosPS18}, but the implication of such a parameter setting is unclear. Nevertheless, we include our result to demonstrate the wide applicability of our techniques. We will compare our algorithm with the one in \cite{DBLP:conf/esa/IvanyosPS18} in \S\ref{sec:related}.

\subsection{Solving the quantum versions of LWE via filtering}

As mentioned, our main technical contribution is to solve $\QLWE$ and $\LWEstate$ (the quantum versions of LWE we define) with interesting parameters using  a \emph{filtering} technique. Let us first explain the basic idea of filtering, then extend it to the general case.

\paragraph{The basic idea of filtering. }
To illustrate the basic idea of filtering, let us focus on how to use it to solve $\QLWE$, namely, learning the secret $u\in\Z_q^n$ given a uniformly random matrix $A\in\Z_q^{n\times m}$ and the following state:
\begin{equation}\label{eqn:inputstate}
\ket{\phi_u} := \bigotimes_{i = 1}^{m} \sum_{e_i\in\Z_q}  f(e_i) \ket{\; (u^T A)_i + e_i \pmod q\;}.   
\end{equation}
Let us remark that an efficient quantum algorithm for $\QLWE_{n,m,q,f}$ does not necessarily imply an efficient quantum algorithm for $\LWEstate_{n,m,q,f}$,
since the quantum algorithm for $\QLWE_{n,m,q,f}$ may, for example, destroy the input state. However, the quantum algorithm we show for $\QLWE_{n,m,q,f}$ directly works for $\LWEstate_{n,m,q,f}$, so we focus on $\QLWE_{n,m,q,f}$.

Let us assume $m$ can be an arbitrary polynomial of $n$, $q$ is a constant prime. The readers can think of $f$ as any distribution. For readers who would like to have a concrete example, you can think of $f$ as the QFT of bounded uniform distribution, i.e., let $g(z) := 1/\sqrt{2\beta+1}$ for $z\in[-\beta, \beta]\cap\Z$ and $0$ elsewhere, then set $f := \hat g$ ($f$ is then the discrete sinc function, but in the analysis we will not use the expression of $f$ at all, we will only use $g$). %
By solving $\QLWE_{n,m,q,f}$ and $\LWEstate_{n,m,q,f}$ with such a choice of $f$, we can get a polynomial quantum algorithm for $\sistwo_{n,m,q,\beta}$ with a constant prime $q$ and any $\beta\in [1, q/2)$, which was not known before. 
All the details of the analysis will be given in \S\ref{sec:idea_filtering}. 
Here let us explain the basic idea of filtering using this example.

Let us define %
\begin{equation*}
    \text{for }v\in\Z_q\text{,  }\ket{ \psi_v } := \sum_{e\in\Z_q} f(e) \ket{ (v + e) \pmod q }.
\end{equation*}
Therefore the input state in Eqn~\eqref{eqn:inputstate} can also be written as 
\begin{equation*}
     \ket{\phi_u} = \bigotimes_{i = 1}^{m}\ket{ \psi_{(u^T A)_i} }.
\end{equation*}
To learn $u$ from $\ket{\phi_u}$, our algorithm proceeds in two stages: first we look at each coordinate $\ket{ \psi_{(u^T A)_i} }$ for $i = 1, ..., m$ separately, with the goal of learning some classical information about each coordinate of $u^T A$. We then continue with a classical step, which uses the information obtained about each coordinate of $u^T A$ to learn $u$.

\paragraph{Warm-up 1: Orthogonal states.} Suppose the vectors in the set $\set{ \ket{\psi_v} }_{v\in\Z_q}$ were orthogonal. Then we could define a unitary $U$ such that $U\ket{\psi_v}=\ket{v}$. We could then apply this unitary component-wise to $\ket{\phi_u}$ and measure the results in the computational basis, learning $u^T A$. Gaussian elimination then recovers $u$.

\paragraph{Warm-up 2: Filtering out a single value.} Unfortunately, the $\ket{\psi_v}$ will typically not be orthogonal, so such a unitary as above will not exist. This means we cannot learn $v$ with certainty from $\ket{\psi_v}$.

Nevertheless, we can learn \emph{some} information about $v$ from $\ket{\psi_v}$. Concretely, pick some value $y\in\Z_q$, and consider an arbitrary unitary $U_y$ such that
\[
U_y\ket{\psi_y}=\ket{0}.
\]
Now imagine applying $U_y$ to $\ket{\psi_v}$, and measuring in the computational basis. If $v=y$, then the measurement will always give 0. Unfortunately,  since the $\ket{\psi_v}$ are not orthogonal, measuring $U_y\ket{\psi_v}$ for $v\neq y$ may also give 0. Therefore, while a 0 outcome gives us some prior on the value of $v$, it does not let us conclude anything for certain.

On the other hand, if a measurement gives a \emph{non-zero} value, then we know for certain that $v\neq y$. This is the basic idea of our filtering approach: we filter out the case where $v=y$, learning an inequality constraint on $v$. This can be seen as a weak form of unambiguous state discrimination~\cite{Peres1988HowTD}, where the measurement either gives unambiguous information about the unknowns or is thrown away. It turns out that, in some parameter regimes, learning such non-equality constraints will let us compute $u$.

Concretely, given an unknown state $\ket{\phi_u}$, we choose an independent random $y_i$ for each coordinate, apply the unitary $U_{y_i}$ to the $i$th coordinate, and measure. Any measurement result that gives 0, we throw away; for typical $\ket{\psi_v}$, few measurements will give 0. The remaining results yield inequality constraints of the form $(u^T A)_i\neq y_i$. We then apply the classical Arora-Ge algorithm~\cite{DBLP:conf/icalp/AroraG11} to these constraints. This algorithm works by viewing the inequality constraints as degree $q-1$ constraints and then re-linearizing them. This process converts the inequality constraints into equality constraints, but at the cost of blowing up the number of unknowns to $\approx n^{q-1}$. In the regime where $q$ is a constant and $m$ is a sufficiently large polynomial, the system can be solved in polynomial-time using Gaussian elimination.

\paragraph{Our algorithm: filtering out multiple values.} Our algorithm so far is limited to filtering out a single value, which in turn limits us to a constant $q$, due to our use of Arora-Ge.

In order to get a polynomial-time algorithm for larger $q$, we must filter out more points; in fact, in order to use Arora-Ge, we need our constraints to have constant degree, which in turn means we must filter out all but a constant number of elements of $\Z_q$. Filtering out so many points requires care.

Consider the goal of filtering out two values. If there exists, for $y_0,y_1\in\Z_q$, a unitary $U_{y_0,y_1}$ such that
\[U_{y_0,y_1}\ket{\psi_{y_b}}=\ket{b},\]
then we could apply $U_{y_0,y_1}$ and measure in the computational basis. If the result is not equal to 0 or 1, then we know that $v\notin\{y_0,y_1\}$, thus filtering out two values.

In general such a unitary does not exist, as it would require $\ket{\psi_{y_0}}$ and $\ket{\psi_{y_1}}$ to be orthogonal. Instead what we do is to define a unitary $U_{y_0,y_1}$ such that 
\[U_{y_0,y_1}\ket{\psi_{y_b}}\in\textrm{Span}(\ket{0},\ket{1})\enspace.\]
This method also naturally extends to filtering a larger number of $y$ values. The limitation is that, as the number of $y$ increases, the probability of getting a successful measurement (where ``success'' means, e.g. getting a result other than 0,1) decreases. For example, suppose the $\ket{\psi_v}$ all lie in the space of dimension $d\ll q$. Then after excluding $d$ values, the probability of a successful measurement will be 0. Even if the vectors are technically linear independent but close to a $d$-dimensional subspace, the probability will be non-zero but negligible. This, for example, rules out an algorithm for the case where $f$ is discrete Gaussian.

Therefore, whether or not the algorithm will succeed depends crucially on the ``shape'' of the states $\ket{\psi_v}$, and in particular, the distribution $f$. Our applications roughly follow the outline above, analyzing specific cases of $\ket{\psi_v}$. Our main observation is that, since all the vectors $\ket{\psi_v}$ are just shifts of a single fixed vector, we can construct a unitary operator by taking the normalized Gram-Schmidt orthogonalization of a circulant matrix $M_f$, defined by
\begin{equation}\label{eqn:M_f}
M_f:= [ \ket{\psi_v}, \ket{\psi_{v+1}}, ..., \ket{\psi_{v+q-1}} ]. 
\end{equation}
This allows us to relate the success probability of filtering out $q-1$ values to the length of the last Gram-Schmidt vector of $M_f$ (before normalization). The length of the last Gram-Schmidt vector is related to the eigenvalues of the circulant matrix $M_f$, and it can be bounded in terms of the discrete Fourier transform $\hat{f}$ of $f$. Our calculation suggests that when $\hat{f}$ is \emph{non-negligible} over all the values in $\Z_q$, the success probability of correctly guessing each coordinate is non-negligible. Therefore when $m$ is a sufficiently large polynomial, we get a polynomial-time algorithm for $\QLWE_{n,m,q,f}$. In Figure~\ref{fig:examplesf} we give four examples of error amplitudes. It shows that if the minimum of $\hat f$ is non-negligible, then the length of the last Gram-Schmidt of $M_f$ is non-negligible.

\iffull
\begin{figure}[ht]
\centering
\includegraphics[scale=0.24]{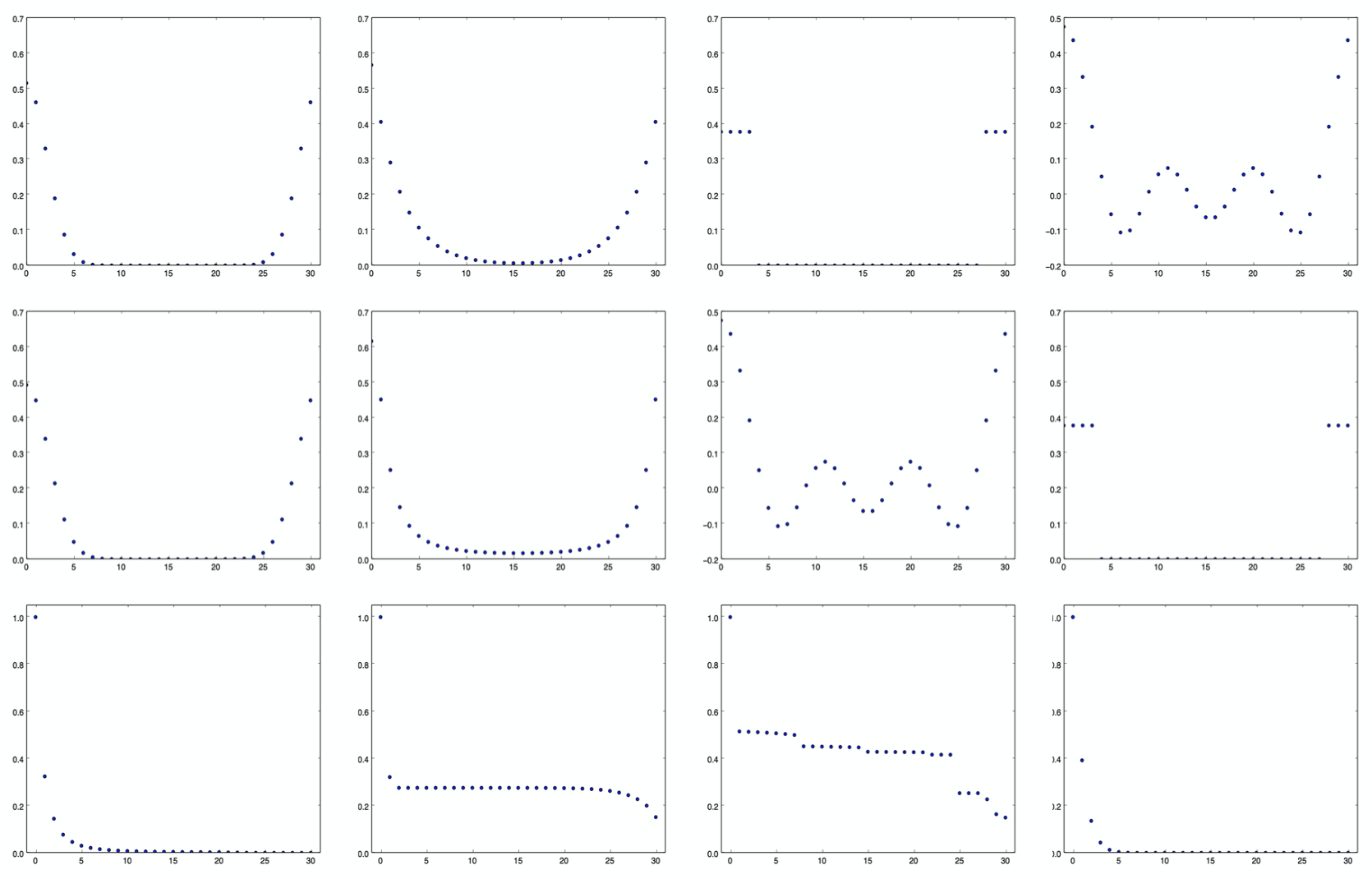}
\caption{Examples of error amplitude $f$ (top), its DFT $\hat f$ (middle), and the length of the $i^{th}$ Gram-Schmidt vector of $M_f$ in Eqn.~\eqref{eqn:M_f} for $0\leq i < q$ (bottom). 
Let $q = 31$ for all examples.
The error amplitude $f$ is (from left to right): (1) Gaussian: $f(x) = \exp(-(x/3)^2)$; (2) Laplacian: $f(x) = \exp(-|x/3|)$; (3) Uniform over $[-3, 3]\cap \Z$; (4) The DFT of Uniform over $[-3, 3]\cap \Z$. }
\end{figure}\label{fig:examplesf}
\else
\begin{figure}[ht]
\centering
\includegraphics[scale=0.20]{pics.png}
\caption{Examples of error amplitude $f$ (top), its DFT $\hat f$ (middle), and the length of the $i^{th}$ Gram-Schmidt vector of $M_f$ in Eqn.~\eqref{eqn:M_f} for $0\leq i < q$ (bottom). 
Let $q = 31$ for all examples.
The error amplitude $f$ is (from left to right): (1) Gaussian: $f(x) = \exp(-(x/3)^2)$; (2) Laplacian: $f(x) = \exp(-|x/3|)$; (3) Uniform over $[-3, 3]\cap \Z$; (4) The DFT of Uniform over $[-3, 3]\cap \Z$. }
\end{figure}\label{fig:examplesf}
\fi

\subsection{The related work of Ivanyos et al.}\label{sec:related}
Let us briefly compare our paper with the work of Ivanyos et al.~\cite{DBLP:conf/esa/IvanyosPS18}\footnote{In the initial version of our paper (August 25, 2021) we were not aware of the results in~\cite{DBLP:conf/esa/IvanyosPS18}. We sincerely thank G\'{a}bor Ivanyos for telling us the results in~\cite{DBLP:conf/esa/IvanyosPS18}. }.
As mentioned in \S\ref{sec:EDCPintro}, EDCP with the parameters in Theorem~\ref{thm:DCPwithpolymod_intro} has already been solved in~\cite{DBLP:conf/esa/IvanyosPS18} by a quantum algorithm with similar complexity.
While we solve EDCP using the quantum reduction from EDCP to $\QLWE$ with sinc error distribution, Ivanyos et al. used a reduction from EDCP to a problem called ``learning from disequations'' (LSF), defined as follows: the goal is to learn a secret $s\in Z_q^n$ by querying an oracle which outputs some $a\in \Z_q^n$ such that $\innerprod{a}{s} \in A$, where $A$ is a known subset of $\Z_q$. Given the set $A$ and $m\in n^{O(|A|)}$ samples $a_1, ..., a_m$, they solve LSF in time $n^{O(|A|)}$ classically (using the Arora-Ge algorithm). This means when $|A|$ is a constant, the problem of LSF is solvable in $\poly(n)$ time. 

In their algorithm they also used an idea similar to what we called ``filtering''. While we use filtering to solve $\QLWE$, they used the idea of filtering in the quantum reduction from EDCP to the LSF problem. 

Overall, both papers use the idea of filtering to solve EDCP for the parameters settings in Theorem~\ref{thm:DCPwithpolymod_intro}, but the intermediate problems we reduced to are different. It appears that solving $\QLWE$ allows us to obtain a richer variety of algorithms. In particular, it allows us to obtain a quantum algorithm for $\sistwo$, which was not achieved in~\cite{DBLP:conf/esa/IvanyosPS18}. 
Furthermore, our results give evidence that the $\QLWE$ and $\LWEstate$ problems are quantumly easier to solve than the classical LWE problem, which shows another hope of solving the worst-case lattice problems. Let us elaborate on this point in the next section. 

\subsection{Future directions}\label{sec:intro:open}

Our results show polynomial time quantum algorithms for variants of average-case lattice problems. They do not appear to affect the security of any lattice-based cryptosystems in use. 
One may ask how far are we from solving standard LWE or approximate SVP for all lattices? 
Here we discuss two potential approaches of extending our results towards those ultimate goals. 
Our first observation is that in order to solve standard LWE, ``all'' what we need to do is to solve $\LWEstate_{n,m,q,f}$ with a smaller $m$ than what we have achieved in Theorem~\ref{thm:solvingLWEstate_intro} or Corollary~\ref{thm:solvingLWEstate_Buniform_intro}. For the simplicity of explanation, assume the parameters $\sigma, B, q$ satisfy $\sigma<B\ll q\in \poly(n)$. To solve decisional $\LWE_{n,m,q,D}$ where the noise distribution $D$ is uniform over $[-\sigma, \sigma]\cap\Z$, it suffices to solve $\LWEstate_{n,m,q,f}$ where $f$ is the uniform distribution over $[-B, B]\cap\Z$, and with $m\leq B/\sigma$. Currently, using our result in Corollary~\ref{thm:solvingLWEstate_Buniform_intro}, we need $m\in \Omega\left( n \cdot q^4 \cdot(2B+1) \right)$, which is polynomial in $n$ but way larger than $B/\sigma$. 

The algorithm of breaking decisional LWE via solving $\LWEstate$ is well-known and was implicitly used in the attempt of designing quantum algorithms for lattice problems in~\cite{eldar2016efficient}. 
Let the decisional $\LWE_{n,m,q,D}$ instance be $(A \in\Z_q^{n\times m}, y\in\Z_q^m)$ where $y$ is either an LWE sample or uniformly random. 
We solve $\LWEstate_{n,m,q,f}$, i.e., construct a state 
\[ \ket{\rho}:= \sum_{u\in\Z_q^n}\bigotimes_{i=1}^m \left(\sum_{e_i\in\Z_q} f(e_i) \ket{ a_i \cdot u + e_i \pmod q}\right).\] 
Let $U_{y}$ denote a unitary operator that maps any $x\in\Z_q^m$ to $x+y$. Then we compute $ \bra{\rho} U_y \ket{\rho}$ by performing a Hadamard test. If $y$ is an LWE sample, we expect the overlap between $\ket{\rho}$ and $\ket{\rho+y}$ to be at least $(1 - \sigma/B)^m$, whereas if $y$ is uniform, we expect the overlap to be $0$. Therefore, if we are able to solve $\LWEstate_{n,m,q,f}$ with $m\leq B/\sigma$, then we can solve decisional $\LWE_{n,m,q,D}$. The distributions $f$ and $D$ in the example can be changed to other ones, but all of them require $m$ to be relatively small in order to break standard LWE.

If we are not able to decrease the number of samples in our solutions of $\LWEstate_{n,m,q,f}$ or $\QLWE_{n,m,q,f}$, another hope of solving worst-case approximate SVP is to modify Regev's reduction~\cite{DBLP:conf/stoc/Regev05}. Recall that Regev reduces worst-case approximate SVP to LWE with Gaussian noise and \emph{arbitrarily polynomially} many classical samples. Suppose we can replace LWE with classical samples by its quantum variants $\LWEstate$ or $\QLWE$, and replace Gaussian distribution by distributions with non-negligible DFT (like bounded uniform or Laplace distributions). Then approximate SVP can be solved using Theorem~\ref{thm:solvingLWEstate_intro} without decreasing the number of samples. However, it is not clear to us whether modifying Regev's reduction is feasible or not. 

\iffalse
\paragraph{Filtering, Gram-Schmidt orthogonalization, and quantum computation.}
To implement the filtering technique in its general form, we construct unitary operators by taking the normalized Gram-Schmidt orthogonalization (GSO) of circulant matrices. 
The GSO unitary operators we use are of \emph{polynomial} dimensions (recall the dimension is the same as the modulus $q$), so we can efficiently construct them ``for free'' since any unitary operators of polynomial dimensions can be efficiently approximated.

We ask the following question: 
\begin{center}
\emph{Is there a family of matrices of super-polynomial dimensions such that we can compute their Gram-Schmidt orthogonalization in quantum polynomial-time? }
\end{center}
Note that the classical algorithms of computing the GSO of a set of vectors are sequential, i.e., the algorithms work vector-by-vector, therefore the running time is inherently polynomial in the dimension of the matrix. To give a hope of efficiently computing the GSO of a matrix of super-polynomial dimension, we assume the matrix has a succinct representation, e.g., the vectors in the matrix can be efficiently stored in a small quantum state, similar to the setting in the quantum linear system solver~\cite{harrow2009quantum}. Of course, there are families of matrices where the answer is trivially true (e.g., matrices that are unitary to begin with). If we can find a non-trivial family of (potentially structured) matrix with efficiently computable GSO, it may have interesting applications in quantum computation.
\fi
%

\subsection{Organization and readers' guide}

The rest of the paper is organized as follows. In \S\ref{sec:prelim} we provide more background of quantum computation and algorithms for lattice problems. In \S\ref{sec:idea_filtering} we provide details of the basic idea of filtering and a mini result for $\sistwo$. The mini result in \S\ref{sec:idea_filtering} will be subsumed by the result in \S\ref{sec:sis_polyq_const_gap}, but the analysis in \S\ref{sec:idea_filtering} is simple and instructive for understanding the main results.
The main results in this paper require some mathematical statements about the Gram-Schmidt orthogonalization of circulant matrices, which will be presented in~\S\ref{sec:GSO_Circulant}. Then we present the quantum algorithms for solving $\QLWE$ and $\LWEstate$ in~\S\ref{sec:filtering_multiple}. The quantum algorithms for $\sistwo$ and $\EDCP$ are given in \S\ref{sec:sis_polyq_const_gap} and \S\ref{sec:EDCP}.

\iffalse
\paragraph{How to choose the LWE error state?}
Given that our primary goal is to solve $\sis$ with non-trivial parameters, we set the LWE error state as the QFT of ``useful" SIS solution states. However, beyond the application of solving SIS, it also makes sense to understand how large is $\| \gs{\ket{\psi_{y_{q-1}}}} \|_2$ for general LWE error states. On a high level, the result from Section~\ref{sec:GSO_Circulant} suggests that when the QFT of the LWE error state is non-negligible everywhere over $0, ..., q-c$ for some constantly large $c$, then $\| \gs{\ket{\psi_{y_{q-c}}}} \|_2$ is non-negligible.

Before jumping into the estimation, let us understand what is the best hope of $\| \gs{\ket{\psi_{y_{q-1}}}} \|_2$. When all the $\set{ \ket{\psi_i} }_{i = 0}^{q-1}$ are orthogonal, $\| \gs{\ket{\psi_{y_{q-1}}}} \|_2 = 1$, we have 
\begin{equation}
\text{Prob[outputting } q-1 ] = \frac{1}{q} 
\end{equation}
This means $m = O(nq)$ samples is enough. This is the best we can hope, but we don't expect the Fourier transform for all orthogonal error state to be useful for solving SIS.
\fi

\section{Preliminaries}\label{sec:prelim}

\paragraph{Notation and terminology.} 
Let $\R, \Z, \N$ be the set of real numbers, integers and positive integers. 
For $q\in\N_{\geq 2}$, denote $\Z/q\Z$ by $\Z_q$. 
For $n\in\N$, $[n] := \set{1, ..., n}$. 
When a variable $v$ is drawn uniformly at random from the set $S$, we denote by $v\la U(S)$. 

A vector in $\R^n$ is represented in column form by default. For a vector $v$, the $i^{th}$ component of $v$ will be denoted by $v_i$. %
For a matrix $A$, the $i^{th}$ column vector of $A$ is denoted $a_i$. We use $A^T$ to denote the transpose of $A$, $A^H$ to denote the conjugate transpose of $A$.
The length of a vector is the $\ell_p$-norm $\|v\|_p := (\sum v_i^p)^{1/p}$, or the infinity norm given by its largest entry $\| v \|_{\infty} := \max_i\{|v_i|\}$. 
The length of a matrix is the norm of its longest column: $\|A\|_p := \max_i \|a_i\|_p$. 
By default, we use $\ell_2$-norm unless explicitly mentioned. 

%

\iffalse
\subsection{Discrete math}

\begin{lemma}[Schwartz-Zippel]\label{lemma:SZ}
Let $f\in\F[x_1, ..., x_n]$ be a non-zero polynomial of total degree $d\geq 0$ over a field $\F$. Let $S$ be a finite subset of $\F$. Then 
\[ \Pr_{r_1, ..., r_n\la U(S^n)}[ f(r_1, ..., r_n) = 0 ]\leq d/|S|  \] 
\end{lemma}

We need a generalization of Schwartz-Zippel over $\Z_N$ where $N$ is a composite modulus.
\ynote{12/29/2020: not needed anymore since it seems that the best algorithm for AG11 over composite moduli is to reduce the linear equations into prime factors and then CRT reconstruct the secret.}
\begin{lemma}[Schwartz-Zippel over composite moduli]\label{lemma:SZcomposite}
Let $N$ be a composite modulus where $p$ is its smallest prime factor. Let $f\in\Z_N[x_1, ..., x_n]$ be a non-zero polynomial of total degree $d\geq 0$. Then 
\[ \Pr_{r_1, ..., r_n\la U(\Z_N^n)}[ f(r_1, ..., r_n) = 0 ]\leq d/p  \] 
\end{lemma}
\begin{proof}
Since $f$ is non-zero over $\Z_N$, it is at least non-zero modulo one of the prime factors of $N$.
\end{proof}
\fi

\subsection{Quantum background}

We assume the readers are familiar with the basic concepts of quantum computation. All the background we need in this paper is available in standard textbooks of quantum computation, e.g., \cite{DBLP:books/daglib/0046438}. When writing a quantum state such as $\sum_{x \in S } f(x)\ket{x}$, we typically omit the normalization factor except when needed. When a state can be approximately constructed within a negligible distance, we sometimes say the state is constructible and not mention the negligible distance.

\paragraph{Efficiently constructible unitary operators. }
In this paper we will use the fact that all the unitary matrices of polynomial dimension can be efficiently approximated within exponentially small distance. 
\begin{proposition}[Page~191~of~\cite{DBLP:books/daglib/0046438}] 
Any unitary matrix $U$ on an $n$-qubit system can be written as a product of at most $2^{n-1}(2^n-1)$ two-level unitary matrices. 
\end{proposition}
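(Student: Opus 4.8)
The plan is to prove this by an explicit iterative construction that zeroes out off-diagonal entries one at a time, mimicking Gaussian elimination but with unitaries that act on only two coordinates at a time. First I would recall what a \emph{two-level} unitary is: a matrix that acts nontrivially only on a two-dimensional coordinate subspace spanned by two standard basis vectors $\ket{i}, \ket{j}$ (and as the identity elsewhere); on that subspace it is an arbitrary $2\times 2$ unitary. The key elementary fact is that given any two complex numbers $a, b$ (not both zero), there is a $2\times 2$ unitary $\begin{pmatrix} \alpha & \beta \\ \gamma & \delta\end{pmatrix}$ sending $(a,b)^T$ to $(\sqrt{|a|^2+|b|^2}, 0)^T$ — a complex Givens rotation. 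Embedding this in the appropriate two coordinates gives a two-level unitary.

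The main body of the argument is the counting, so I would set it up carefully. Let $U$ be $d \times d$ with $d = 2^n$. I would show that one can find two-level unitaries $V_1, \dots, V_{d-1}$ such that $V_{d-1}\cdots V_1 U$ has first column equal to $\ket{1}$ (i.e. $e_1$): using $V_1$ to combine rows $1$ and $2$ so as to kill the entry $U_{2,1}$, then $V_2$ to combine rows $1$ and $3$ to kill $U_{3,1}$, and so on — each step preserves the zeros already created in the first column because it only touches row $1$ and one fresh row. Actually $d-1$ such steps suffice (and if $U_{1,1}$ is already $1$ with the rest of the column zero, fewer are needed, but $d-1$ is the worst-case bound). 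Since $V_{d-1}\cdots V_1 U$ is unitary with first column $e_1$, its first row is also $e_1^T$, so it is block-diagonal $1 \oplus U'$ with $U'$ a $(d-1)\times(d-1)$ unitary. Then I would recurse: clearing column $2$ of the remaining block costs at most $d-2$ two-level unitaries, column $3$ costs at most $d-3$, etc., down to the last $2\times 2$ block which costs $1$. The total is at most $(d-1) + (d-2) + \cdots + 1 = d(d-1)/2$. Substituting $d = 2^n$ gives $2^n(2^n-1)/2 = 2^{n-1}(2^n-1)$, and finally $U = V_1^\dagger V_2^\dagger \cdots$ (the product of the inverses in reverse order), each inverse of a two-level unitary being again two-level, yielding the claimed bound.

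I do not expect any genuine obstacle here — this is a standard linear-algebra fact. The one place that needs mild care is the \emph{bookkeeping} of the count: one must verify that clearing the $k$-th column really costs at most $d-k$ two-level unitaries (not $d-k+1$), which holds because after the recursion the $(k,k)$ diagonal entry together with its column is handled jointly, and that zeros created in earlier steps are genuinely preserved (each $V$ in the $k$-th round touches only row $k$ and one lower row, both of which have zeros in columns $1,\dots,k-1$, so those columns are untouched). A secondary point worth a sentence is the degenerate case where the pivot entry is already in the desired position (the relevant sub-column is already zero), in which case we simply skip that $V$ — this only decreases the count, so the upper bound $2^{n-1}(2^n-1)$ still holds. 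No induction subtlety beyond the obvious one on $d$ is needed, and the complex (as opposed to real) case is handled uniformly by allowing the $2\times 2$ blocks to be arbitrary unitaries rather than rotations.
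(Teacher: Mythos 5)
Your argument is correct and is exactly the standard decomposition the paper is citing (Nielsen--Chuang, p.~191): complex Givens rotations clearing one column at a time, with the count $(d-1)+(d-2)+\cdots+1 = 2^{n-1}(2^n-1)$ for $d=2^n$. The paper gives no proof of its own --- it only cites the textbook --- so there is nothing further to compare.
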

Then, using Solovay-Kitaev Theorem, all the unitary matrices of $\poly(n)$ dimensions (therefore, applied on $O(\log n)$ qubits) can be approximated by $2^{O(\log n)}\in\poly(n)$ elementary quantum gates. 
\begin{proposition}
Let $\mathcal{G}$ denote set of unitary matrices that are universal for two-level gates.
Given a unitary matrix $U\in\C^{d\times d}$, there is a classical algorithm that runs in time $\poly(d)$, outputs a sequence of two-level unitary matrices $U_1, ..., U_m\in \mathcal{G}$ such that $\prod_{i = 1}^m U_i$ approximates $U$ within distance negligible in $d$, and $m\in\poly(d)$.
\end{proposition}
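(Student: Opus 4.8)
The plan is to combine the two preceding propositions with a careful accounting of precision, in three steps.

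\textbf{Step 1: Reduce to two-level unitaries.} First I would invoke the decomposition of the previous proposition in its constructive form: the standard ``Gaussian-elimination''-style procedure repeatedly multiplies $U$ on the left by a two-level unitary chosen to zero out one subdiagonal entry of the current matrix, until the identity is reached; inverting this sequence expresses $U = V_1 V_2 \cdots V_{m'}$, where each $V_j$ is a two-level unitary and $m' \le d(d-1)/2$. (If $d$ is not a power of two, first embed $U$ into $U \oplus I$ acting on $\lceil \log_2 d \rceil$ qubits, so the new dimension is less than $2d$ and $m' < 2d^2$.) Each elimination step touches $O(d)$ matrix entries and there are $O(d^2)$ of them, so the $V_j$ are produced in classical time $\poly(d)$. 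These $V_j$ are general $2\times 2$ unitaries (embedded in $d$ dimensions) and need not lie in $\mathcal{G}$.

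\textbf{Step 2: Approximate each two-level unitary using $\mathcal{G}$.} Fix the target precision $\delta = \delta(d)$, an arbitrary negligible function of $d$ (e.g.\ $\delta = 2^{-d}$), and set $\epsilon := \delta / m'$. Since $\mathcal{G}$ is universal for two-level gates, the Solovay--Kitaev theorem applies to each $V_j$ (viewed as a single-qubit-type gate on its two-dimensional active subspace): there is a classical algorithm running in time $\polylog(1/\epsilon)$ that outputs a word $W_j = U_{j,1}\cdots U_{j,\ell_j}$ over $\mathcal{G}$ with $\| W_j - V_j \| \le \epsilon$ and $\ell_j = O(\log^{c}(1/\epsilon))$ for an absolute constant $c$. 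Running this for all $j \le m'$ costs classical time $m' \cdot \polylog(1/\epsilon) = \poly(d)$, since $\log(1/\epsilon) = \log(m'/\delta) = O(d)$.

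\textbf{Step 3: Bound the accumulated error and the gate count.} Concatenating the words gives the sequence $U_1, \ldots, U_m$ with $m = \sum_{j=1}^{m'} \ell_j \le m' \cdot O(\log^c(1/\epsilon)) = O(d^2)\cdot O(d^c) = \poly(d)$. A telescoping argument — writing $\prod_j W_j - \prod_j V_j = \sum_j (W_1\cdots W_{j-1})(W_j - V_j)(V_{j+1}\cdots V_{m'})$ and using that left/right multiplication by unitaries preserves the operator norm — gives overall error at most $\sum_j \|W_j - V_j\| \le m' \epsilon = \delta$. Combined with $U = \prod_j V_j$ exactly from Step 1, this yields $\| \prod_{i=1}^m U_i - U\| \le \delta$, negligible in $d$, as required.

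I do not expect a genuine obstacle here; the statement is essentially bookkeeping on top of the two cited propositions. The one point needing care is the interplay between the number of pieces and the per-piece precision: we must take $\epsilon$ small enough that $m'$ copies of error $\epsilon$ still sum to something negligible, yet Solovay--Kitaev's overhead is only $\polylog(1/\epsilon)$, so shrinking $\epsilon$ to $\delta/m'$ with $\delta = 2^{-d}$ costs only a $\poly(d)$ blow-up in both gate count and classical running time. A secondary detail is handling $d$ that is not a power of two (padding, as above) and noting that the decomposition of the first proposition is itself efficiently computable, which is standard.
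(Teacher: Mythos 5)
Your proposal is correct and follows exactly the route the paper intends: the paper offers no explicit proof, only the remark that the two-level decomposition of the preceding proposition combined with the Solovay--Kitaev theorem yields the claim, and your three steps are precisely that argument with the precision bookkeeping (per-piece error $\epsilon=\delta/m'$, telescoping bound, $\polylog(1/\epsilon)=\poly(d)$ overhead) made explicit.
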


Looking ahead, the quantum algorithms in this work require quantum Fourier transform, superposition evaluations of classical circuits on quantum states and quantum gates that operate on $O(\log n)$ qubits. Thus, all quantum algorithms in the work can be efficiently approximated.

\paragraph{Quantum Fourier Transform.}
For any integer $q\geq 2$, let $\omega_q = e^{2 \pi i / q}$ denote a primitive $q$-th root of unity. 
Define a unitary matrix $F_q\in\C^{q\times q}$ where $(F_q)_{i,j} := \frac{1}{\sqrt{q}}\cdot \omega_n^{ij}$, for $i,j\in\Z_q$. 
\begin{theorem}[QFT]
	The unitary operator $\QFT_q:=F_q$ can be implemented by $\poly(\log q)$ elementary quantum gates. When $\QFT_q$ is applied on a quantum state $\ket{\phi} := \sum_{x\in\Z_q} f(x)\ket{x}$, we have  
	\[\QFT_q \ket{\phi} = \sum_{y\in\Z_q} \hat f(y) \ket{y} := \sum_{y\in\Z_q} \sum_{x\in\Z_q} \frac{1}{\sqrt{q}}\cdot \omega_q^{xy}\cdot f(x)  \ket{y}. \]
\end{theorem}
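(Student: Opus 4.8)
The plan is to separate the statement into its two assertions — the action formula for $\QFT_q$, and the $\poly(\log q)$ gate bound — and dispatch them independently, since only the second is substantive.

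First I would verify the action formula, which is immediate from the definition of $F_q$. Writing $\ket{\phi}=\sum_{x\in\Z_q}f(x)\ket{x}$ and expanding $F_q\ket{\phi}$ in the computational basis, the amplitude of $\ket{y}$ is $\sum_{x\in\Z_q}(F_q)_{y,x}f(x)=\sum_{x\in\Z_q}\frac{1}{\sqrt q}\,\omega_q^{xy}f(x)$, which is exactly $\hat f(y)$ by the definition of the DFT recalled in the introduction; hence $\QFT_q\ket{\phi}=\sum_{y\in\Z_q}\hat f(y)\ket{y}$. Along the way I would record that $F_q$ is genuinely unitary: the inner product of columns $y$ and $y'$ is $\tfrac1q\sum_{x\in\Z_q}\omega_q^{x(y-y')}$, a geometric sum equal to $1$ when $y=y'$ and $0$ otherwise, so $F_q^HF_q=I$.

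For the gate bound when $q=2^k$ I would exhibit the standard recursive circuit (as in \cite{DBLP:books/daglib/0046438}): on input $\ket{j_1\cdots j_k}$, $\QFT_{2^k}$ outputs, up to reversing the qubit order, the product state $\bigotimes_{\ell=1}^{k}\tfrac1{\sqrt2}\big(\ket0+e^{2\pi i\,(0.j_{k-\ell+1}\cdots j_k)}\ket1\big)$, which is built from $k$ Hadamard gates, $\binom k2$ controlled-phase gates $R_m=\diag{1,e^{2\pi i/2^m}}$ with $2\le m\le k$, and $O(k)$ swaps — in total $O(k^2)=O((\log q)^2)$ gates over the extended set $\{H,\mathrm{CNOT},R_m\}$. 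To reduce to a fixed universal gate set I would either (i) invoke Solovay--Kitaev to approximate each $R_m$ with $\polylog$ overhead, or (ii) use Coppersmith's approximate QFT, discarding every $R_m$ with $m>c\log k$; the latter perturbs the output by at most $\binom k2\cdot 2\pi/2^{c\log k}$ in operator norm, below any fixed inverse polynomial for suitable constant $c$. Either way the count stays $\poly(\log q)$.

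The general-$q$ case is the step I expect to be the real obstacle, since the halving recursion is special to powers of two and there is no comparably elementary \emph{exact} circuit for an arbitrary modulus. The cleaner route for this paper is to note that in every application $q\in\poly(n)$, so $F_q$ is a unitary on $\lceil\log q\rceil$ qubits and the two propositions on efficiently constructible unitary operators already approximate it within exponentially small distance using $\poly(q)\subseteq\poly(n)$ elementary gates — weaker than $\poly(\log q)$, but sufficient wherever it is used. For the stronger $\poly(\log q)$ claim as literally worded I would instead cite and sketch one of the standard constructions of an approximate $\QFT_q$ for arbitrary $q$ — Kitaev's phase-estimation approach, or the Hales--Hallgren reduction simulating $\QFT_q$ via $\QFT_{2^k}$ for a sufficiently large (polynomially bounded) power of two — each achieving inverse-superpolynomially small error with $\poly(\log q)$ elementary gates, consistent with the convention already adopted that states are prepared up to negligible distance.
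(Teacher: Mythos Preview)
Your proposal is correct and substantially more detailed than what the paper itself does. The paper does not prove this theorem at all: it is stated in the preliminaries as a standard background fact, with the surrounding text pointing the reader to a quantum computation textbook (the reference \cite{DBLP:books/daglib/0046438}) for any details. There is no accompanying proof, sketch, or argument in the paper.

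So the comparison is simply that you have written an actual proof where the paper offers none. Your treatment of the action formula and of the $q=2^k$ circuit is standard and fine; your discussion of general $q$ via Kitaev or Hales--Hallgren, and your observation that for the paper's purposes $q\in\poly(n)$ so the weaker $\poly(q)$ bound from the two propositions on efficiently constructible unitaries already suffices, are both correct and more than the paper asks for. If anything, you could trim the proposal to a one-line citation, matching the paper's own level of detail for this background statement.
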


\iffalse
Let $S$ be a subspace of $\mathbb{Z}_q^n$ spanned by basis $v_1, v_2, \cdots, v_k$ and $|S\rangle$ is a uniform superposition on the subspace $S$. 
Without loss of generality, $|S\rangle$ can be written as
\begin{equation*}
	|S\rangle = \frac{1}{\sqrt{|S|}} \sum_{x \in S} |x\rangle = \frac{1}{\sqrt{|S|}} \sum_{z_1, \cdots, z_k}   | z_1 v_1 + \cdots z_k v_k \rangle 
\end{equation*}

Assume the basis of the whole space is $v_1, v_2, \cdots, v_k, \cdots, v_n$. 
By applying QFT to each qubit (in other words, applying ${\sf QFT}^n$ to $|S\rangle$), we get 
\begin{eqnarray*}
	{\sf QFT}^n |S\rangle &=& \frac{1}{\sqrt{|S|}} \frac{1}{\sqrt{q^n}}  \sum_{y_1, \cdots, y_n} \sum_{z_1, \cdots, z_k} \omega^{y_1 z_1 + \cdots + y_k z_k}   | y_1 v_1 + \cdots + y_n v_n \rangle  \\
	&=& \frac{1}{\sqrt{|S|}} \frac{1}{\sqrt{q^n}}  \sum_{y_1, \cdots, y_n} \prod_{j=1}^k  \left(1 + \omega^{y_j} + \cdots + \omega^{y_j (q-1)} \right)  \left| y_1 v_1 + \cdots + y_n v_n \right\rangle  \\
	&=& \frac{1}{\sqrt{|S|}} \frac{1}{\sqrt{q^n}}  \sum_{y_1, \cdots, y_n} q^k \prod_{j=1}^k   \delta_{y_j, 0}  \left| y_1 v_1 + \cdots + y_n v_n \right\rangle  \\
	&=& \frac{1}{\sqrt{|S^{\perp}|}}  \sum_{y_{k+1}, \cdots, y_n}   \left| y_{k+1} v_{k+1} + \cdots + y_n v_n \right\rangle  = \frac{1}{\sqrt{|S^{\perp}|}} \sum_{x \in S^{\perp}} |x\rangle = |{S^{\perp}}\rangle
\end{eqnarray*}
which is a uniform superposition in the dual space of $S$. 
\fi

\subsection{Arora-Ge algorithm for solving LWE}

\iffalse
\ynote{added in the intro}

Let us recall the definition of the learing with errors problem~\cite{DBLP:journals/jacm/Regev09}.

\begin{definition}[Learning with errors]
Let $q$, $n$ be positive integers. 
Let $u \in \mathbb{Z}_q^n$ be a secret vector. 
The learning with errors problem asks to find the secret vector $u$ given access to an oracle that outputs $a_i \cdot u + e_i \pmod q$ on its $i^{th}$ query. Here each $a_i$ is a uniformly random vector in $\mathbb{Z}_q^n$, and each error term $e_i$ is sampled from a distribution $\mathcal{D}_{\sf noise}$ over $\Z_q$. 
\end{definition}
\fi
%

We have defined the SIS, DCP, and LWE problems in the introduction. 
Here let us mention the Arora-Ge algorithm for solving LWE when the support of the error distribution is small.
The following theorem is implicitly proven in~\cite[Section~3]{DBLP:conf/icalp/AroraG11}.
\begin{theorem}\label{thm:AG11}
Let $q$ be a prime, $n$ be an integer. Let $\mathcal{D}_{\sf noise}$ be an error distribution which satisfies: 
\begin{enumerate}
    \item The support of $\mathcal{D}_{\sf noise}$ is of size $D<q$. 
    \item $\Pr[e = 0,\,e \gets \mathcal{D}_{\sf noise}] = \frac{1}{\delta}$ for some $\delta>1$. %
\end{enumerate}
Then, let $N$ be $\binom{n + D}{D}$ and $C$ be a sufficiently large constant. Let $m := C N \delta q \log q $. There is a classical algorithm that solves $\LWE_{n,m,q,\mathcal{D}_{\sf noise}}$ in time $\poly(m)$ and succeeds with probability $1 - q^{-N}$.  %
\end{theorem}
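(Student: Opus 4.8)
The plan is to reconstruct the re-linearization argument of Arora and Ge~\cite{DBLP:conf/icalp/AroraG11}. Let $S$ be the support of $\mathcal{D}_{\sf noise}$, so $|S|=D<q$, and put $P(y):=\prod_{s\in S}(y-s)\in\Z_q[y]$, a polynomial of degree $D$. For every sample $(a_i,b_i)$ with $b_i=a_i\cdot u+e_i$, $e_i\in S$, the polynomial $P(b_i-a_i\cdot x)$ in the formal vector of unknowns $x=(x_1,\dots,x_n)$ has total degree at most $D$ and vanishes at $x=u$, since $b_i-a_i\cdot u=e_i\in S$. Introduce a variable $z_\alpha$ for each monomial $x^\alpha$ with $|\alpha|\le D$ --- there are $N=\binom{n+D}{D}$ of them --- and re-linearize: each identity $P(b_i-a_i\cdot x)=0$ becomes a linear equation $\langle c_i,z\rangle=0$ over $\Z_q$, where $c_i\in\Z_q^N$ is the coefficient vector of $P(b_i-a_i\cdot x)$. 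This produces a homogeneous linear system $Cz=0$ with $C\in\Z_q^{m\times N}$, and the honest lift $z^\ast:=(u^\alpha)_{|\alpha|\le D}$, whose coordinate for the constant monomial equals $1$, is a solution. The algorithm runs Gaussian elimination (cost $\poly(m)$, since $N\le m$ and $q$ is prime), takes a basis vector of $\ker(C)$ with nonzero coordinate $z_0$ at the constant monomial, rescales it so that $z_0=1$, and outputs its coordinates $z_1,\dots,z_n$ (those for $x_1,\dots,x_n$) as the guess for $u$.

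Correctness reduces to showing that, with probability at least $1-q^{-N}$, every $z\in\ker(C)$ has its coordinates $z_0,z_1,\dots,z_n$ (for $1,x_1,\dots,x_n$) proportional, as a tuple, to $(1,u_1,\dots,u_n)$: since $z^\ast\in\ker(C)$ realizes this tuple nontrivially, the extraction then returns exactly $u$. Call $z$ \emph{harmful} if $(z_0,\dots,z_n)$ is \emph{not} a scalar multiple of $(1,u_1,\dots,u_n)$; there are fewer than $q^N$ harmful vectors, and I would bound, for each fixed harmful $z$, the probability that all $m$ constraints vanish on it. Conditioning on $e_i=0$ --- which happens with probability $1/\delta$ and leaves $a_i$ uniform --- the $i$-th constraint reads $\langle\bar c(a_i),z\rangle=0$, where $\bar c(a)$ is the coefficient vector, in the variables $x$, of $P\bigl(a\cdot(u-x)\bigr)$; each entry of $\bar c(a)$ is a polynomial in $a$ of degree at most $D$.

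The crux is that $a\mapsto\langle\bar c(a),z\rangle$ is a \emph{nonzero} polynomial (of degree $\le D$) whenever $z$ is harmful. Writing $P(w)=\sum_{k=0}^{D}(-1)^{D-k}\sigma_{D-k}(S)\,w^k$ with $\sigma_j(S)$ the $j$-th elementary symmetric function of $S$, and substituting $w=a\cdot(u-x)$, the monomial $a_j$ in $a$ can arise only from the degree-one term $(-1)^{D-1}\sigma_{D-1}(S)\,a\cdot(u-x)$; pairing with $z$, one finds that the coefficient of $a_j$ in $\langle\bar c(a),z\rangle$ equals $(-1)^{D-1}\sigma_{D-1}(S)\,(u_j z_0-z_j)$. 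Because $0\in S$, every term of $\sigma_{D-1}(S)$ other than the one omitting the factor $0$ vanishes, so $\sigma_{D-1}(S)=\prod_{s\in S\setminus\{0\}}s$, a unit of $\Z_q$ since $q$ is prime and the nonzero elements of $S$ are invertible residues. Hence $\langle\bar c(a),z\rangle\equiv 0$ forces $z_j=u_j z_0$ for all $j$, i.e. $z$ is not harmful. Consequently, for a harmful $z$, Schwartz--Zippel over the uniform $a_i$ gives $\Pr[\langle\bar c(a_i),z\rangle\ne 0]\ge 1-D/q$, and since $D<q$ gives $q-D\ge 1$, we get $\Pr[\langle c_i,z\rangle\ne 0]\ge\tfrac1\delta\cdot\tfrac{q-D}{q}\ge\tfrac1{\delta q}$.

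By independence of the samples, a fixed harmful $z$ survives all $m$ constraints with probability at most $(1-\tfrac1{\delta q})^m\le e^{-m/(\delta q)}$, so a union bound over the fewer than $q^N$ harmful vectors bounds the overall failure probability by $q^N e^{-m/(\delta q)}$, which is at most $q^{-N}$ once $m\ge 2N\delta q\ln q$ --- exactly what the hypothesis $m=CN\delta q\log q$ with $C$ a sufficiently large constant provides; the running time is dominated by the Gaussian elimination, hence $\poly(m)$. I expect the only real work to lie in the third step: carrying out the substitution $w=a\cdot(u-x)$ and the pairing with $z$ carefully enough to identify the coefficient of $a_j$ as a nonzero scalar times $u_j z_0-z_j$, and confirming that primality of $q$ (together with $D<q$) is exactly what makes that scalar a unit. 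The linearization set-up, the Schwartz--Zippel estimate, and the union bound are routine.
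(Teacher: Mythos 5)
Your proof is correct, and it follows the same linearization-plus-Schwartz--Zippel route as the Arora--Ge argument that the paper invokes for Theorem~\ref{thm:AG11} only by citation (the paper itself supplies no proof of this statement). The two delicate points in reconstructing that argument --- that one must control \emph{arbitrary} vectors in the kernel of the relinearized system, not merely tensor powers $u'^{\otimes \le D}$ of candidate secrets, for the extraction step to be sound, and that the coefficient of $a_j$ in the conditioned constraint is a \emph{unit} multiple of $u_j z_0 - z_j$ because $0\in S$ forces $\sigma_{D-1}(S)=\prod_{s\in S\setminus\{0\}} s$, which is invertible modulo the prime $q$ --- are both handled correctly, so the Chernoff/union-bound calibration of $m = C N\delta q\log q$ goes through as you state.
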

Note that the probability is only taken over the randomness of samples. The algorithm is deterministic.

Suppose the error distribution $\mathcal{D}_{\sf noise}$ is known (which is always the case in our application). We can remove the second condition in Theorem~\ref{thm:AG11} by shifting the error distribution such that the probability of getting $0$ is maximized. More precisely, suppose $\mathcal{D}_{\sf noise}$ outputs some $e'\in\Z_q$ with the highest probability; we can always change an LWE sample $(a_i, y_i)$ to $(a_i, y_i - e')$, and apply Arora-Ge on the shifted samples. 
Thus, we can shift the error distribution so that the probability of getting zero error is at least $1/q$. This transformation gives the following simple corollary.

\begin{corollary}\label{coro:AG11}
Let $q$ be a prime, $n$ be an integer. Let $\mathcal{D}_{\sf noise}$ be an error distribution whose support is of size $D<q$ and known to the algorithm. 
Let $m := C \cdot n^D q^2 \log q $ where $C$ is a sufficiently large constant.
There is a classical algorithm that solves
$\LWE_{n,m,q,\mathcal{D}_{\sf noise}}$ in time $\poly(m)$ and succeeds with probability $1 - O(q^{-n^D})$. 
\end{corollary}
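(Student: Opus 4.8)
The plan is to obtain Corollary~\ref{coro:AG11} from Theorem~\ref{thm:AG11} via the shifting reduction sketched just above it, so the work reduces to discharging the second hypothesis of that theorem and to bookkeeping the parameters. First I would normalize the error distribution: since $\mathcal{D}_{\sf noise}$ is known to the algorithm, compute $e^\ast := \arg\max_{e\in\Z_q}\Pr[e'=e,\ e'\gets\mathcal{D}_{\sf noise}]$ and replace each sample $(a_i,y_i)$ by $(a_i, y_i-e^\ast \bmod q)$. This affine map on the label coordinate is efficiently computable and invertible, and it leaves both the instance matrix and the secret $u$ untouched, so a solver for $\LWE$ under the shifted distribution $\mathcal{D}'$ yields a solver for $\LWE_{n,m,q,\mathcal{D}_{\sf noise}}$. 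Moreover $\mathcal{D}'$ has the same support size $D<q$, and since the largest of $D$ nonnegative reals summing to $1$ is at least $1/D$, we have $\Pr[e'=0,\ e'\gets\mathcal{D}'] \ge 1/D > 1/q$; hence the quantity $\delta$ in Theorem~\ref{thm:AG11} satisfies $\delta \le D < q$.

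Next I would apply Theorem~\ref{thm:AG11} to $\mathcal{D}'$. Writing $N := \binom{n+D}{D}$, the theorem succeeds (deterministically, over the sample randomness) with probability $1-q^{-N}$ provided $m \ge C_0 N \delta q \log q$ for a large enough constant $C_0$. Using the elementary estimate $\binom{n+D}{D}\le (e(n+D)/D)^D = O(n^D)$, where the hidden constant depends only on $D$, together with $\delta q \le q^2$, the sample requirement is met by $m = C\cdot n^D q^2\log q$ once $C$ is taken large enough to absorb $C_0$ and the $D$-dependent combinatorial factor. The running time is $\poly(m)$ as in the theorem, and since $N=\binom{n+D}{D}$ grows like $n^{\Theta(D)}$ the failure probability $q^{-N}$ is the exponentially small quantity $O(q^{-n^D})$ recorded in the statement.

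I do not anticipate a genuine obstacle: the corollary is essentially a repackaging of Theorem~\ref{thm:AG11}. The two points needing mild care are (i) verifying that the shift by $e^\ast$ is a legitimate reduction—it is reversible and alters neither the instance matrix nor the secret—and (ii) the arithmetic that converts the bound $C_0 N\delta q\log q$ with $N=O(n^D)$ and $\delta\le q$ into the clean form $C\, n^D q^2\log q$ quoted in the statement, folding all constants depending on $D$ into $C$.
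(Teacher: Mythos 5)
Your proposal is correct and follows essentially the same route as the paper: shift each sample by the most likely error value so that Theorem~\ref{thm:AG11}'s second hypothesis holds with $\delta\le D<q$, then apply that theorem with $N=\binom{n+D}{D}=O(n^D)$ and absorb the $D$-dependent constants into $C$. Your bookkeeping (including the slightly informal identification of $q^{-N}$ with $O(q^{-n^D})$) matches the paper's own treatment.
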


\section{The Idea of Filtering and a Mini Result for \texorpdfstring{$\sistwo$}{SIS-INF}}\label{sec:idea_filtering}

In this section we give more details of the basic idea of \emph{filtering}. Using the basic filtering technique, we obtain a polynomial-time quantum algorithm for $\sistwo_{n,m,q,\beta}$ with $q$ being a constant prime, $m$ being as large as $\Omega(n^{q-1})$, and $1\leq \beta<q/2$. Quantum polynomial-time algorithms for $\sistwo$ with such parameter settings have not been given before. 

\begin{theorem}\label{thm:sisinftyconstantmodulus}
Let $n$ be an integer, $q$ be a constant prime modulus. There is a quantum algorithm running in time $\poly(n)$ that solves $\sistwo_{n,m,q,\beta}$ with $m\in \Omega\left(  \frac{ n^{q-1} q^2 \log q}{0.9 - 1/(2\beta+1)} \right) \subseteq \poly(n)$ and any $\beta\in\Z$ such that $1 \leq \beta < q / 2$. %
\end{theorem}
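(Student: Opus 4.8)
The plan is to route $\sistwo_{n,m,q,\beta}$ through the problem of \emph{constructing an LWE state}, and then to solve that by basic filtering together with Arora--Ge. Set $g:\Z_q\to\R$ to be the normalized indicator of $[-\beta,\beta]\cap\Z$ (so $g(z)=1/\sqrt{2\beta+1}$ there and $0$ elsewhere; it is symmetric and efficiently preparable), put $f:=\hat g$, and for $v\in\Z_q$ write $\ket{\psi_v}:=\sum_{e\in\Z_q}f(e)\ket{v+e}$. The engine of the reduction is the following QFT identity: since $g$ is symmetric we have $\hat f=g$, and writing $a_1,\dots,a_m$ for the columns of $A$ and using $\sum_{u\in\Z_q^n}\omega_q^{u\cdot(Ay)}=q^n\cdot[Ay\equiv 0\!\!\pmod q]$,
\[
\QFT_q^{\otimes m}\Big(\sum_{u\in\Z_q^n}\bigotimes_{i=1}^m\ket{\psi_{a_i\cdot u}}\Big)\;=\;q^n\!\!\sum_{y\in\Z_q^m:\ Ay\equiv 0}\Big(\prod_{i}g(y_i)\Big)\ket{y}\;\propto\!\!\sum_{\substack{y\in\Z^m:\ Ay\equiv 0\,(q)\\ \|y\|_\infty\le\beta}}\!\!\ket{y},
\]
where the last step uses that $\prod_i g(y_i)\neq 0$ iff every $y_i\in[-\beta,\beta]\cap\Z$ (well defined as $\beta<q/2$). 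So if I can construct (a state negligibly close to) the LWE state $\ket{\rho}=\sum_u\bigotimes_i\ket{\psi_{a_i\cdot u}}$ from the public matrix $A$, then applying $\QFT_q^{\otimes m}$ and measuring produces a uniformly random element of $\{y:Ay\equiv 0,\ \|y\|_\infty\le\beta\}$. That set contains $0$ together with at least $(\beta+1)^m/q^n$ other elements (partition $\{0,\dots,\beta\}^m$ into $\le q^n$ cosets of the kernel lattice of $A$ and take differences within a popular coset), which is exponentially large for $m$ as in the theorem; hence the measurement returns a \emph{nonzero} short kernel vector, i.e.\ a valid $\sistwo$ solution, except with negligible probability.

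It remains to build $\ket{\rho}$. The companion state $\sum_u\ket{u}\bigotimes_i\ket{\psi_{a_i\cdot u}}$ with an extra $\ket u$ register is trivial to prepare (prepare the uniform superposition over $u$, then for each $i$ prepare $\ket{\psi_0}=\sum_e f(e)\ket e=\QFT_q\ket g$ and add $a_i\cdot u$), so the real task is to coherently erase the $\ket u$ register, and this is exactly what filtering buys. For each coordinate $i$ I pick an independent uniform $y_i\la U(\Z_q)$ and apply the two-outcome ``try to project onto $\ket{\psi_{y_i}}$'' test: choose any unitary $U_{y_i}$ on the $O(\log q)$ qubits of register $i$ with $U_{y_i}\ket{\psi_{y_i}}=\ket 0$, measure $\{|0\rangle\langle 0|,\ I-|0\rangle\langle 0|\}$, then apply $U_{y_i}^{-1}$. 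A ``$\neq 0$'' outcome certifies the inequality $a_i\cdot u\neq y_i$, because $a_i\cdot u=y_i$ would force the ``$0$'' outcome. The quantitative claim specific to $f=\hat g$ is that for uniform $y_i$ the ``$\neq 0$'' outcome has probability at least $0.9-\tfrac1{2\beta+1}$: here $\tfrac1{2\beta+1}=g(0)^2$ is precisely the average over $y_i$ of $|\langle\psi_{y_i}|\psi_v\rangle|^2$ (this average equals $\tfrac1q\sum_t|R(t)|^2$ for the autocorrelation $R$ of $\hat g$, which Parseval turns into $\tfrac1q\cdot q\sum_y g(y)^4=\tfrac1{2\beta+1}$, independently of $v$), and a second-moment/tail estimate on $|\langle\psi_{y_i}|\psi_v\rangle|^2$ supplies the $0.9$ slack; note $0.9-\tfrac1{2\beta+1}>0.9-\tfrac13>0$ since $\beta\ge 1$. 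Thus for $m\in\Omega\!\big(n^{q-1}q^2\log q/(0.9-\tfrac1{2\beta+1})\big)$ we certify at least $C\,n^{q-1}q^2\log q$ such inequalities, except with negligible probability.

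I then feed these to Arora--Ge. Over the prime field $\Z_q$, the constraint $a_i\cdot u\neq y_i$ is equivalent to the degree-$(q-1)$ polynomial equation $(a_i\cdot u-y_i)^{q-1}=1$ in the unknown $u$, so the collected constraints constitute an LWE instance whose error has support of size $D=q-1<q$; by Corollary~\ref{coro:AG11} (re-linearize to $\approx n^{q-1}$ monomials and run Gaussian elimination, polynomial-time because $q=O(1)$) this recovers $u$ except with probability $q^{-n^{q-1}}$. Performing the whole filtering$+$Arora--Ge pipeline coherently and reversibly, XORing the recovered value into the original $\ket u$ register, and then uncomputing the pipeline, turns $\sum_u\ket u\bigotimes_i\ket{\psi_{a_i\cdot u}}$ into (a state negligibly close to) $\ket 0\otimes\ket\rho$; one final $\QFT_q^{\otimes m}$ and a measurement output the desired nonzero $\sistwo$ solution. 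Every quantum operation acts either on $O(\log q)=O(1)$ qubits or is a QFT / an arithmetic circuit over $\Z_q$, so the algorithm runs in time $\poly(n)$.

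The step I expect to be the main obstacle is precisely the assembly: showing that the per-coordinate filtering tests can be combined into a genuine construction of a state close to $\ket\rho$, i.e.\ that the $\QLWE$-solver built from filtering $+$ Arora--Ge is ``non-destructive'' enough that the coherent uncomputation of the $\ket u$ register succeeds up to negligible error. This requires a gentle-measurement-style accounting of the disturbance inflicted on the registers $\ket{\psi_{a_i\cdot u}}$ across all $m$ coordinates, and it is the technical heart of the basic-filtering analysis. The second nontrivial ingredient is the per-coordinate success bound $0.9-\tfrac1{2\beta+1}$, which is exactly where the concrete shape of $\hat g$ (equivalently, the non-negligibility of $\hat f=g$ at every point of $\Z_q$) is exploited; the remaining pieces — the QFT identity, the counting bound for nonzero short kernel vectors, and the invocation of Arora--Ge — are routine.
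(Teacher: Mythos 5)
Your overall architecture coincides with the paper's: the QFT identity reducing $\sistwo$ to constructing the LWE state $\ket{\rho}=\sum_u\bigotimes_i\ket{\psi_{a_i\cdot u}}$ (Proposition~\ref{prop:SIStoLWEquant} and Lemma~\ref{lemma:SIStoLWEstate}), per-coordinate filtering with a unitary sending $\ket{\psi_{y_i}}$ to $\ket{0}$, the average bound $\frac1q\sum_{y}|\bk{\psi_y}{\psi_v}|^2=\frac{1}{2\beta+1}$ (the paper derives it by computing $\sum_y\kb{\hat\psi_y}{\hat\psi_y}=\frac{q}{2\beta+1}\sum_{x=-\beta}^{\beta}\kb{x}{x}$, which is equivalent to your Parseval/autocorrelation computation), and Arora--Ge on the resulting degree-$(q-1)$ inequality constraints. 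One small correction of emphasis: the $0.9$ is not a per-coordinate tail estimate on $|\bk{\psi_{y_i}}{\psi_v}|^2$; it is Chernoff slack across the $m$ coordinates (Lemma~\ref{lem:random_measure} gives only the expectation $1-\frac{1}{2\beta+1}$ per coordinate, and concentration is then taken over the joint choice of $y_1,\dots,y_m$ and the outcomes).

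The genuine gap is the step you explicitly defer, and the tool you propose for it would not work. You cannot physically perform the two-outcome test $\{\kb{0}{0},\,I-\kb{0}{0}\}$ on each coordinate and then appeal to a gentle-measurement argument: each coordinate is disturbed with \emph{constant} probability (the ``$0$'' branch occurs with probability about $\frac{1}{2\beta+1}$ even when $a_i\cdot u\neq y_i$), so across $m$ coordinates the cumulative disturbance is total and $\ket{\rho}$ cannot be recovered. The paper's resolution (Theorem~\ref{thm:LWEstateuncomputeu}) measures nothing: it applies the basis-change unitaries $U_i$, treats the would-be outcomes $s_{u,i}$ as a coherent register, runs the classical Arora--Ge procedure $D_{y_1,\dots,y_m}$ in superposition, and subtracts its output from the $\ket{u}$ register. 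Correctness of the erasure then rests on showing that for every $u$ \emph{simultaneously} the amplitude on outcome strings $s_u$ from which $D$ fails to recover $u$ is negligible; this needs a union bound over all $q^n$ secrets (Corollary~\ref{lem:random_all_good_measure}), costing a factor $q^n e^{-m}$ that is absorbed only because $m\in\Omega(n^{q-1})\gg n$. After this, applying $\bigotimes_i U_i^{-1}$ restores the product state up to negligible error. So the missing piece is not a disturbance bound but a ``bad-set amplitude'' bound uniform over $u$; with that substitution your proof matches the paper's.
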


Note that in the above theorem, $0.9 - 1/(2 \beta + 1)$ is at least $0.56$ for $\beta\geq 1$. Thus, $m$ is in the order of $n^{q-1} q^2 \log q$.

Let us first recall the existing quantum reduction from SIS to the problem of constructing certain LWE states presented implicitly in~\cite{DBLP:conf/asiacrypt/StehleSTX09}, then show the filtering technique and explain how to construct the required LWE states.

\subsection{Recalling the quantum reduction from SIS to LWE}

To give a quantum algorithm for solving $\sistwo_{n,m,q,\beta}$ w.r.t. a uniformly random matrix $A\in\Z_q^{n \times m}$, it suffices to produce a state $\sum_{z\in ([-\beta, \beta]\cap\Z)^m \text{ s.t. } Az = 0 \pmod q} \ket{z}$. As long as the set $([-\beta, \beta]\cap\Z)^m$ contains a non-zero solution for $Az = 0 \pmod q$, we can solve $\sistwo_{n,m,q,\beta}$ with probability $\geq 1/2$ by simply measuring the state.

The following is a quantum reduction from SIS to LWE where the distribution of $z$ is general. Let $f:\Z_q\to \R$ be a function (in the example above, $f$ is the uniform distribution over $[-\beta, \beta]\cap\Z$). We abuse the notation to let $f: \Z_q^m\to R$ be defined as $f(x) = \prod_{i = 1}^m f(x_i)$ (we will clearly state the domain when using $f$). 

\begin{proposition}\label{prop:SIStoLWEquant}
To construct an \emph{SIS state} of the form
\begin{equation*}
\ket{\phi_{\sf SIS}} := \sum_{z\in \Z_q^m \text{ s.t. } Az = 0 \pmod q} f(z) \ket{z}.
\end{equation*}
It suffices to construct an \emph{LWE state} of the following form
\begin{equation*}
\ket{\phi_{\sf LWE}} := \sum_{u\in\Z_q^n} \sum_{e\in\Z_q^m} \hat f(e)  \ket{ u^T A + e^T \pmod q},
\end{equation*}
where $\hat f(e_i) = \sum_{x_i\in\Z_q} \frac{1}{\sqrt{q}}\cdot \omega_q^{e_i x_i}  f(x_i)$, for $i = 1, ..., m$, and $\hat f(e) = \prod_{i = 1, ..., m} \hat f(e_i) = \sum_{x\in\Z_q^m} \frac{1}{\sqrt{q^m}}\cdot\omega_q^{\innerprod{e}{x}} f(x)$. 
\end{proposition}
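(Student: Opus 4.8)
The plan is to start from the "LWE state" $\ket{\phi_{\sf LWE}}$ and transform it into the "SIS state" $\ket{\phi_{\sf SIS}}$ by applying an $m$-fold quantum Fourier transform over $\Z_q$, followed by a measurement-free manipulation to strip off the $u$-register. First I would rewrite $\ket{\phi_{\sf LWE}}$ by collapsing the inner sum over $e$: since $\hat f$ on $\Z_q^m$ factorizes as $\hat f(e)=\prod_i \hat f(e_i)$, and the only place $e$ enters is in the ket $\ket{u^TA+e^T}$, a change of variables $w := u^TA+e \pmod q$ (coordinate-wise) gives $\ket{\phi_{\sf LWE}} = \sum_{u\in\Z_q^n}\sum_{w\in\Z_q^m}\hat f(w-u^TA)\,\ket{w}$. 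The idea is that each coordinate carries a shifted copy of $\hat f$, with the shift dictated by $u^TA$.

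Next I would apply $\QFT_q^{\otimes m}$ to the $w$-register. Using the QFT theorem from the preliminaries coordinate-wise, $\QFT_q$ applied to $\sum_{w_i} \hat f(w_i - (u^TA)_i)\ket{w_i}$ produces $\sum_{z_i} \omega_q^{-z_i (u^TA)_i}\, \hat{\hat f}(z_i)\,\ket{z_i}$, where the shift in the input turns into a phase $\omega_q^{-z_i(u^TA)_i}$ (I should be careful here about the exact sign/direction convention of the QFT and the shift, and about the fact that the DFT of $\hat f$ recovers $f$ up to the normalization and a possible sign flip of the argument — this is the one place where I need to check conventions rather than wave hands, so I would verify that $\widehat{\hat f}(z) = f(-z)$ or $f(z)$ under the definition $\hat f(y)=\sum_x q^{-1/2}\omega_q^{xy}f(x)$; if a sign flip appears it is harmless since $f$ in our application is symmetric, but I would state it carefully). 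Collecting the phases across all $m$ coordinates gives a global factor $\omega_q^{-\langle z, u^TA\rangle} = \omega_q^{-u^T(Az)}$, so after the QFT the state becomes (up to the argument-sign caveat) $\sum_{z\in\Z_q^m} f(z)\,\Big(\sum_{u\in\Z_q^n}\omega_q^{-u^T(Az)}\Big)\ket{z}$.

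Now the inner sum over $u$ is exactly the Gauss-sum/orthogonality identity: $\sum_{u\in\Z_q^n}\omega_q^{-u^T(Az)} = q^n$ if $Az\equiv 0\pmod q$ and $0$ otherwise (this uses that $q$ is prime, or more generally that $\omega_q$ has order $q$, so each coordinate of $Az$ being nonzero kills the sum). Hence the state collapses to $q^n\sum_{z:\,Az\equiv 0}f(z)\ket{z}$, which after normalization is precisely $\ket{\phi_{\sf SIS}}$. Since $\QFT_q^{\otimes m}$ is efficiently implementable (by the QFT theorem) and no measurements or non-unitary steps were used, this shows that constructing $\ket{\phi_{\sf LWE}}$ suffices to construct $\ket{\phi_{\sf SIS}}$, as claimed.

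The main obstacle I anticipate is bookkeeping rather than conceptual: getting the Fourier conventions consistent (direction of the transform, sign of the shift-to-phase rule, and the fact that $\hat f$ as written is already one DFT away from $f$, so a second DFT must land back on $f$ possibly with a reflected argument), and making sure the normalization factors from $m$ copies of the $1/\sqrt q$ in the QFT and the $1/\sqrt{q^m}$ in the definition of $\hat f$ all cancel correctly against the $q^n$ from the orthogonality sum. None of these should cause real trouble, but they are the steps where an error would most plausibly creep in, so I would write them out explicitly coordinate-by-coordinate before combining.
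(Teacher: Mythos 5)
Your proposal is correct and is exactly the paper's argument: the paper's entire proof is the one-line identity $\QFT_q^m\ket{\phi_{\sf LWE}}=\ket{\phi_{\sf SIS}}$, and you have simply written out the computation behind it (shift-to-phase, the orthogonality sum over $u$ enforcing $Az\equiv 0$, and the double-DFT returning $f$ up to a reflected argument, which is indeed harmless here). The only cosmetic slip is the opening mention of ``stripping off the $u$-register''---there is no separate $u$-register in $\ket{\phi_{\sf LWE}}$, the sum over $u$ lives in the amplitude and collapses via the orthogonality identity exactly as your calculation then shows.
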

\begin{proof} $ \QFT_q^m \ket{\phi_{\sf LWE}} = \ket{\phi_{\sf SIS}}. $\end{proof}

The following lemma is immediate from Proposition~\ref{prop:SIStoLWEquant}. 

\begin{lemma}\label{lemma:SIStoLWEstate}
Let $n,m,q$ be any integers such that $m\in\Omega(n\log q)\subseteq \poly(n)$. Let $0<\beta<q/2$. 
Let $f$ be the uniform distribution over $([-\beta, \beta]\cap\Z)^m$. Let $A$ be a %
matrix in $\Z_q^{n\times m}$. 
If there is a polynomial-time quantum algorithm that generates a state negligibly close to $\sum_{u\in\Z_q^n} \sum_{e\in\Z_q^m} \hat f(e)  \ket{ u^T A + e} $, then there is a polynomial-time quantum algorithm that solves $\sistwo_{n,m,q,\beta}$ for $A$.
\end{lemma}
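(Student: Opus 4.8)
The plan is to chain together Proposition~\ref{prop:SIStoLWEquant} with the simple observation at the start of the section that measuring an SIS state solves $\sistwo$. First I would take the hypothesized quantum algorithm that produces a state $\ket{\tilde\phi}$ with $\|\ket{\tilde\phi}-\ket{\phi_{\sf LWE}}\|$ negligible, where $\ket{\phi_{\sf LWE}} = \sum_{u\in\Z_q^n}\sum_{e\in\Z_q^m}\hat f(e)\ket{u^TA+e}$ and $f$ is the uniform distribution over $([-\beta,\beta]\cap\Z)^m$. Then I would apply $\QFT_q^m$ to $\ket{\tilde\phi}$. Since $\QFT_q^m$ is unitary, it preserves Euclidean distance, so the result is negligibly close to $\QFT_q^m\ket{\phi_{\sf LWE}} = \ket{\phi_{\sf SIS}} = \sum_{z\in\Z_q^m:\,Az=0}f(z)\ket{z}$ by Proposition~\ref{prop:SIStoLWEquant}. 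Because $f$ is (proportional to) the uniform distribution over $([-\beta,\beta]\cap\Z)^m$, the state $\ket{\phi_{\sf SIS}}$ is, after renormalization, the uniform superposition over all $z\in([-\beta,\beta]\cap\Z)^m$ with $Az\equiv 0\pmod q$.

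Next I would measure this state in the computational basis. The zero vector $z=0$ is always in the solution set, but the set $([-\beta,\beta]\cap\Z)^m$ with $m\in\Omega(n\log q)$ and $\beta\ge 1$ is large enough (a counting/pigeonhole argument: the number of $z\in\{-\beta,\dots,\beta\}^m$ is $(2\beta+1)^m \gg q^n$, so the map $z\mapsto Az$ has many collisions with $0$) that the nonzero solutions vastly outnumber $\{0\}$; hence a measurement returns a nonzero $z$ with $\|z\|_\infty\le\beta$ and $Az\equiv0\pmod q$ with probability close to $1$. Accounting for the negligible error incurred by using $\ket{\tilde\phi}$ instead of $\ket{\phi_{\sf LWE}}$, the overall success probability is still $\ge 1/2$ (in fact $1-\negl(n)$), which suffices to solve $\sistwo_{n,m,q,\beta}$ for $A$. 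The running time is that of the assumed algorithm plus the cost of $\QFT_q^m$, which is $\poly(m,\log q)\subseteq\poly(n)$.

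I do not anticipate a genuine obstacle here: the lemma is essentially a bookkeeping corollary of Proposition~\ref{prop:SIStoLWEquant}. The only point requiring a line of care is verifying that the solution set $\{z\in([-\beta,\beta]\cap\Z)^m : Az\equiv0\}$ is dominated by nonzero vectors, so that measurement yields a \emph{nonzero} short solution with good probability; this follows from the size bound $(2\beta+1)^m > q^n$ guaranteed by $m\in\Omega(n\log q)$ and $\beta\ge1$. A second minor point is propagating the negligible approximation error through the unitary $\QFT_q^m$ and through the final measurement, which is routine since unitaries are isometries and measurement statistics are $1$-Lipschitz in trace distance.
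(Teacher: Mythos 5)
Your proposal is correct and follows essentially the same route as the paper: the paper's proof is literally ``immediate from Proposition~\ref{prop:SIStoLWEquant}'' together with the observation preceding it that applying $\QFT_q^m$ yields the SIS state and that measuring it gives a nonzero solution with probability at least $1/2$ whenever the box $([-\beta,\beta]\cap\Z)^m$ contains a nonzero solution. Your only wobble is the phrase ``many collisions with $0$'' --- pigeonhole on $(2\beta+1)^m>q^n$ directly gives only a collision between two distinct box elements (hence a nonzero solution of norm at most $2\beta$; to stay within $\beta$ one should pigeonhole on $\{0,\dots,\beta\}^m$) --- but since you fall back to the probability-$\geq 1/2$ bound, which is all the lemma needs, this does not affect correctness.
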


\subsection{Constructing the LWE state via filtering}\label{sec:learn_u}

Now let us describe an algorithm for $\LWEstate$. %
\begin{enumerate}
    \item The algorithm first prepares the following state: 
    \begin{equation*}
        \sum_{x \in \Z_q^m } f(x)\ket{x} \otimes \sum_{u \in \Z_q^n} \ket{u},
    \end{equation*}
    where we assume we work with a function $f$ such that $\sum_{x \in \Z_q^m } f(x)\ket{x}$ can be efficiently generated. If so, then the whole state can be efficiently generated.   
    
    \item It then applies ${\sf QFT}_q^m$ on the $x$ registers and gets: 
    \begin{equation*}
        \left( {\sf QFT}_q^m \sum_{x \in \Z_q^m } f(x)\ket{x} \right) \otimes \sum_{u \in \Z_q^n} \ket{u} = \left( \sum_{e\in\Z_q^m} \hat f(e)  \ket{e}\right) \otimes  \left(\sum_u  \ket{u}\right).
    \end{equation*}
    
    \item It then adds $u^T A$ to the $e$ registers in superposition, the state becomes: 
    \begin{equation}\label{eqn:LWEstate_pre}
          \sum_{u\in\Z_q^n} \sum_{e\in\Z_q^m} \hat f(e)  \ket{ u^T A + e} \otimes \ket{u}. 
    \end{equation}

    \item Suppose there is a quantum algorithm that takes a state $\sum_{u\in\Z_q^n} \sum_{e\in\Z_q^m} \hat f(e) \ket{ u^T A + e} \otimes \ket{u}$, outputs a state that is negligibly close to 
    \begin{equation}\label{eqn:LWEstate}
          \sum_{u\in\Z_q^n} \sum_{e\in\Z_q^m} \hat f(e)  \ket{ u^T A + e} \otimes \ket{0}, 
    \end{equation}
    then we are done.
\end{enumerate}

Let us now explain how to learn the secret $u$ from the following state
\begin{equation}
\label{eq:phiu}
\ket{\phi_u} := \sum_{e\in\Z_q^m} \hat f(e) \ket{ u^T A + e}.
\end{equation}
For convenience, although $\ket{\phi_u}$ depends on $A$, we ignore the subscript since $A$ will be clear from the context. We focus on the case where $q$ is a constant prime, and for $i = 1, ..., m$, $f(e_i) = 1/\sqrt{2\beta+1}$ for $e_i\in[-\beta, \beta]\cap\Z$ and $0$ elsewhere. At the end of this subsection we will prove Theorem~\ref{thm:sisinftyconstantmodulus}. 

For the convenience of the rest of the presentation, let us also define %
\begin{equation}\label{eqn:defpsi_v_mini}
    \text{for }v\in\Z_q\text{,  }\ket{ \psi_v } := \sum_{e\in\Z_q} \hat f(e) \ket{ (v + e) \bmod q }.
\end{equation}
Therefore Eqn~\eqref{eqn:LWEstate_pre} can also be written as 
\begin{equation}\label{eqn:LWEstate_mini}
     \sum_{u\in\Z_q^n} \left( \ket{\phi_u} \otimes \ket{u} \right) = \sum_{u\in\Z_q^n} \left( \bigotimes_{i =1, ..., m}\ket{ \psi_{(u^T A)_i} } \otimes \ket{u} \right).
\end{equation}

Now let us fix a vector $u\in\Z_q^n$. To learn $u$ from $\ket{\phi_u}$, we look at each coordinate $\ket{ \psi_{(u^T A)_i} }$ for $i = 1, ..., m$ separately. 
Let us (classically) pick a uniformly random $y_i\in\Z_q$, then define a $q$-dimensional unitary matrix that always maps $\ket {\psi_{y_i}}$ to $\ket 0$; more precisely,
\begin{equation*}
U_{y_i} := \sum_{j=0}^{q-1} \ket{ j }\bra{ \alpha_{i,j} },
\end{equation*} 
where $\ket{ \alpha_{i,0} } := \ket{ \psi_{y_i} }$ and the rest of the vectors $\{\ket{ \alpha_{i,j} }\}_{j=1}^{q-1}$ are picked arbitrarily as long as $U_{y_i}$ is unitary. 

Looking ahead, we will apply $U_{y_i}$ on $\ket{ \psi_{(u^T A)_i} }$. Suppose we measure $U_{y_i}\ket{ \psi_{(u^T A)_i} }$ and get an outcome in $\set{0, 1, ..., q-1}$. If the outcome is not $0$, then we are 100\% sure that $(u^T A)_i \neq y_i$. This is the basic idea of \emph{filtering}, namely, we will filter out the case where $(u^T A)_i = y_i$ for a randomly chosen $y_i\in\Z_q$. Then we will handle the rest of the $q-1$ possibilities of $(u^T A)_i$ using the Arora-Ge algorithm (recall that we assume $q$ is a constant in this subsection).

To explain why filtering works, consider for any $x, y\in\Z_q$. 
Let $U_{y} := \sum_{j=0}^{q-1} \ket{ j }\bra{ \alpha_{j} }$
where $\ket{ \alpha_{0} } := \ket{ \psi_{y} }$ and the rest of the vectors $\{\ket{ \alpha_{j} }\}_{j=1}^{q-1}$ span the rest of the space which are orthogonal to $\ket {\alpha_0}$. 
Then for any $x$, $\ket{\psi_x}$ can be written as a linear combination of basis $\{\ket {\alpha_j}\}_{j=0}^{q-1}$ (which contains $\ket{\psi_y}$), i.e, 
\begin{equation*}
    \ket{\psi_x} = \sum_{j=0}^{q-1} \bk{ \alpha_{j} }{ \psi_x } \cdot \ket{ \alpha_{j} }.
\end{equation*}

Imagine if we apply $U_{y}$ on $\ket{\psi_x}$ and measure, we will get $q$ different possible outcomes. 
\begin{itemize}
\item  If the outcome is $0$, we know that both $x=y$ and $x\ne y$ can happen. 
        \begin{itemize}
            \item If $x = y$, the outcome is $0$ with probability $1$;
            \item Otherwise, the outcome is $0$ with probability $|\bk{\psi_x}{\psi_y}|^2$, which can still be non-zero.         
        \end{itemize}
\item  If the outcome is not $0$, we know that it can only be the case: $x \ne y$. Because when $x=y$, the measurement will always give outcome $0$. 
\end{itemize}
In the next lemma, we show that if we choose $y$ uniformly at random, the above measurement will give a non-zero outcome with ``good'' probability. 

\begin{lemma}\label{lem:random_measure}
Let $y$ be a uniformly random value in $\Z_q$. Then for any $x\in\Z_q$, the probability of measuring $U_y\ket{\psi_x}$ and getting an outcome not equal to $0$ is at least $1-1/(2\beta+1)$:
\begin{align*}
    \forall x, \Pr_{y \in \Z_q}[s \ne 0 \wedge s \gets M_{\sf st} \circ U_y \ket {\psi_x}] \geq 1 - \frac{1}{2 \beta + 1},
\end{align*}
where $M_{\sf st}$ is the measurement operator in the computational basis. 
\end{lemma}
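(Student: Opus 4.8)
The plan is to compute the probability that the measurement outcome equals $0$ and show it is at most $1/(2\beta+1)$ on average over $y$, so the complementary event (outcome $\neq 0$) has probability at least $1 - 1/(2\beta+1)$. The key observation is that when we apply $U_y$ to $\ket{\psi_x}$ and measure in the computational basis, the probability of outcome $0$ is exactly $|\bra{\alpha_0}\psi_x\rangle|^2 = |\bk{\psi_y}{\psi_x}|^2$, since $\ket{\alpha_0} = \ket{\psi_y}$ and $U_y$ maps $\ket{\alpha_0}$ to $\ket 0$. So it suffices to bound $\mathbb{E}_{y}\big[|\bk{\psi_y}{\psi_x}|^2\big] \leq 1/(2\beta+1)$ for every fixed $x$.

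\textbf{Key steps.} First I would write $\ket{\psi_v} = \sum_{e\in\Z_q}\hat f(e)\ket{(v+e)\bmod q}$ and compute the inner product $\bk{\psi_y}{\psi_x}$ explicitly; by a change of variables it depends only on the shift $x-y$, say $\bk{\psi_y}{\psi_x} = c(x-y)$ where $c(t) = \sum_{e}\overline{\hat f(e)}\,\hat f(e+t)$ is the autocorrelation of $\hat f$. Since $y$ is uniform over $\Z_q$ and $x$ is fixed, $x-y$ is also uniform over $\Z_q$, so $\mathbb{E}_y[|\bk{\psi_y}{\psi_x}|^2] = \frac{1}{q}\sum_{t\in\Z_q}|c(t)|^2$. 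Second, I would evaluate $\sum_t |c(t)|^2$ using Parseval/Plancherel: the autocorrelation function $c$ has discrete Fourier transform whose value at $z$ is (up to normalization) $|f(z)|^2$ — more precisely, $c$ is the inverse transform of $|\,\text{something}\,|^2$ — and since $f$ is the normalized indicator of $[-\beta,\beta]\cap\Z$, we have $\sum_z |f(z)|^2 = 1$ and $\sum_z |f(z)|^4 = \sum_{z\in[-\beta,\beta]} (2\beta+1)^{-2} = 1/(2\beta+1)$. Tracking the QFT normalization constants carefully, this yields $\frac{1}{q}\sum_t |c(t)|^2 = \sum_z |f(z)|^4 = 1/(2\beta+1)$. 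Alternatively, and perhaps more transparently, I would avoid Fourier bookkeeping entirely by noting $\mathbb{E}_y|\bk{\psi_y}{\psi_x}|^2 = \mathbb{E}_y \bra{\psi_x}\big(\kb{\psi_y}{\psi_y}\big)\ket{\psi_x} = \bra{\psi_x}\Big(\frac{1}{q}\sum_{v}\kb{\psi_v}{\psi_v}\Big)\ket{\psi_x}$, and the operator $\frac{1}{q}\sum_v \kb{\psi_v}{\psi_v}$ is a circulant-type average. Applying $\QFT_q$ conjugation diagonalizes it: in the Fourier basis $\ket{\psi_v}$ becomes $\sum_e \hat f(e)\omega_q^{v\cdot(\cdot)}\ket{\cdot}$ suitably, and the average over $v$ kills cross terms, leaving a diagonal operator with entries $|f(\cdot)|^2$ (again up to normalization). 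Its operator norm is $\max_z |f(z)|^2 = 1/(2\beta+1)$, so $\mathbb{E}_y|\bk{\psi_y}{\psi_x}|^2 \leq 1/(2\beta+1)$ for all unit $\ket{\psi_x}$, and in fact equality holds since $\ket{\psi_x}$ is itself a shift with the right Fourier support.

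\textbf{Main obstacle.} The computation itself is routine linear algebra; the only real care needed is keeping the QFT normalization factors ($1/\sqrt q$ per application, and the fact that $\hat f$ as defined is not $\ell_2$-normalized unless $f$ is) consistent so that the final bound comes out as exactly $1/(2\beta+1)$ rather than off by a factor of $q$ or $2\beta+1$. A secondary subtlety is confirming that the measurement-outcome-$0$ probability is precisely $|\bk{\psi_y}{\psi_x}|^2$ and not something involving the (arbitrary) choice of the other basis vectors $\ket{\alpha_j}$ — but this is immediate because only the $\ket{0}$-component of $U_y\ket{\psi_x}$ matters, and that component's amplitude is $\bra{\alpha_0}\psi_x\rangle = \bk{\psi_y}{\psi_x}$ regardless of how $\ket{\alpha_1},\dots,\ket{\alpha_{q-1}}$ are chosen. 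So the diagonalization/Parseval step is where I would spend the effort, and I expect the bound to be tight.
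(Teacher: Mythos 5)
Your proposal is correct, and your second ("more transparent") route is exactly the paper's proof: the paper computes $\sum_{y}\kb{\hat\psi_y}{\hat\psi_y} = q\sum_{x}f(x)^2\kb{x}{x}$ in the Fourier basis and bounds the resulting trace by $\max_x f(x)^2 = 1/(2\beta+1)$, which is precisely your operator-averaging argument; your autocorrelation/Parseval variant is an equivalent computation of the same quantity. Your added observation that the bound is tight (since $\ket{\hat\psi_x}$ is supported on $[-\beta,\beta]$, the trace equals $1$ exactly) is correct, though the paper only needs the inequality.
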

\begin{proof}
Fixing $y$, the probability of getting outcome $0$ is $|\langle {\psi_y} | {\psi_x} \rangle |^2$. 
The probability of getting a non-zero outcome (when $y$ is chosen uniformly at random) is: $ \frac{1}{q} \sum_{y=0}^{q-1} \left(  1 - |\bk{ \psi_y }{ \psi_x }|^2  \right)$. 

To bound the probability, we define $\ket{ \hat\psi_a } = \sum_{x=0}^{q-1} f(x) \omega_{q}^{- x a} \ket{x}$, we have $\ket{ \psi_a } = \QFT_q\ket{ \hat\psi_a }$. 
For any $x, y$, the inner product $\bk{ \psi_x }{ \psi_y } = \bk{ \hat\psi_x }{ \hat\psi_y }$.  The probability we want to bound is, 
\begin{align*}
  1 -  \frac{1}{q} \sum_{y=0}^{q-1} \bk{ \psi_y }{ \psi_x } \bk{ \psi_x }{ \psi_y }   
=~&  1 -  \frac{1}{q} \sum_{y=0}^{q-1} \text{Tr}\left[  \ket{ \hat\psi_x } \bra{ \hat\psi_x }  \, \ket{ \hat\psi_y } \bra{ \hat\psi_y }  \right]   \\
=~&  1 -  \frac{1}{q} \cdot \text{Tr}\left[  \ket{ \hat\psi_x } \bra{ \hat\psi_x }  \,  \left(\sum_{y=0}^{q-1} \ket{ \hat\psi_y } \bra{ \hat\psi_y }  \right) \right]  .
\end{align*}
Let $\Psi :=\sum_{y=0}^{q-1} \ket{ \hat\psi_y } \bra{ \hat\psi_y }$. It can be simplified as follows: 
\begin{eqnarray*}
\Psi = \sum_{y=0}^{q-1} \ket{ \hat\psi_y } \bra{ \hat\psi_y }  &=& \sum_{y=0}^{q-1} \sum_{x\in\Z_q, x'\in\Z_q} f(x) f(x') \omega_{q}^{(x'- x) y} \ket{x} \bra{x'} \\
    &=&  \sum_{x\in\Z_q, x'\in\Z_q} f(x) f(x')  \sum_{y=0}^{q-1} \omega_{q}^{(x'- x) y} \ket{x} \bra{x'} \\
    &=& q \cdot  \sum_{x=0}^{q-1}  f(x)^2 \ket{x} \bra{x} \\
    &=& \frac{q}{2\beta+1} \sum_{x=-\beta}^{\beta} \ket{x} \bra{x}. 
\end{eqnarray*}
Here, we use the fact that $f(x) = 1/\sqrt{2 \beta + 1}$ for any $x \in [-\beta, \beta] \cap \Z$ and $f(x) = 0$ otherwise.Therefore,
\iffull
\begin{align*}
    1 -  \frac{1}{q} \cdot \text{Tr}\left[  \ket{ \hat\psi_x } \bra{ \hat\psi_x }  \,  \left(\sum_{y=0}^{q-1} \ket{ \hat\psi_y } \bra{ \hat\psi_y }  \right) \right]  
    = 1 - \frac{1}{2\beta+1} \cdot \text{Tr}\left[  \ket{ \hat\psi_x } \bra{ \hat\psi_x } \left(\sum_{x=-\beta}^{\beta} \ket{x} \bra{x}\right) \right] \geq 1 - \frac{1}{2\beta+1},
\end{align*}
where the last inequality follows from $\text{Tr}\left[  \ket{ \hat\psi_x } \bra{ \hat\psi_x } \left(\sum_{x=-\beta}^{\beta} \ket{x} \bra{x}\right) \right]\leq 1$. 
\else
\begin{align*}
    1 -  \frac{1}{q} \text{Tr}\left[  \ket{ \hat\psi_x } \bra{ \hat\psi_x }  \,  \left(\sum_{y=0}^{q-1} \ket{ \hat\psi_y } \bra{ \hat\psi_y }  \right) \right]  
    = 1 - \frac{1}{2\beta+1}  \text{Tr}\left[  \ket{ \hat\psi_x } \bra{ \hat\psi_x } \left(\sum_{x=-\beta}^{\beta} \ket{x} \bra{x}\right) \right],
\end{align*}
which is at least $1 - \frac{1}{2 \beta + 1}$. This follows from $\text{Tr}\left[  \ket{ \hat\psi_x } \bra{ \hat\psi_x } \left(\sum_{x=-\beta}^{\beta} \ket{x} \bra{x}\right) \right]\leq 1$. 
\fi
\end{proof}

Thus, if we measure the superposition $\ket{\phi_u}$ entry-by-entry, with overwhelming probability, we will get at least $(1 - 1/(2\beta+1) - \varepsilon) m$ outcomes which are  not $0$ and at most $(1/(2\beta+1)+\varepsilon) m$ outcomes are $0$ (for any constant $\varepsilon > 0$). Here we choose $\varepsilon = 0.1$.
\begin{lemma}
For any fixed $x_1, \cdots, x_m \in\Z_q$, uniformly random $y_1, \cdots, y_m \in \Z_q$, the probability of measuring $U_{y_i}\ket{\psi_{x_i}}$ for all $i \in [m]$ and at least $(0.9 - 1 / (2\beta + 1)) m$ outcomes being non-zero is at least $1 - O(e^{-m})$. Namely, for any fixed $x_1, \cdots, x_m \in\Z_q$,
\begin{align*} \Pr_{y_1, \cdots, y_m \in \Z_q}\left[ z \geq \left(0.9 - \frac{1}{2 \beta + 1}\right)\cdot m  \,\wedge\, \forall i, s_i \gets M_{\sf st} \circ U_{y_i} \ket {\psi_{x_i}}\right] \geq 1 - O(e^{-m}),
\end{align*}
where $z$ is defined as the number of non-zero outcomes among all $s_1, \cdots, s_m$ and $M_{\sf st}$ is the measurement operator in the computational basis. 
\end{lemma}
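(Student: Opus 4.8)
The plan is to reduce the claim to a standard Chernoff/Hoeffding concentration bound applied to independent Bernoulli trials. First I would observe that, by construction, the measurements performed on the $m$ coordinates are mutually independent: the $y_i$ are chosen independently and uniformly at random, the states $\ket{\psi_{x_i}}$ live on disjoint registers, and each unitary $U_{y_i}$ followed by a computational-basis measurement $M_{\sf st}$ acts only on the $i$th register. Hence, conditioning on the fixed values $x_1,\dots,x_m$, the indicator random variables $Z_i := \mathbf{1}[s_i \neq 0]$ are independent (though not identically distributed, since they depend on both $x_i$ and the random $y_i$).

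Next I would invoke Lemma~\ref{lem:random_measure}, which states that for every fixed $x_i$, taking the probability over the uniform choice of $y_i$, we have $\Pr[Z_i = 1] \geq 1 - \frac{1}{2\beta+1}$. Therefore $\mathbb{E}[z] = \sum_{i=1}^m \Pr[Z_i=1] \geq \bigl(1 - \frac{1}{2\beta+1}\bigr) m$. We want to show $z \geq \bigl(0.9 - \frac{1}{2\beta+1}\bigr) m$ with probability $1 - O(e^{-m})$. Since $0.9 - \frac{1}{2\beta+1} = \bigl(1 - \frac{1}{2\beta+1}\bigr) - 0.1$, this is exactly the event that $z$ falls below its expectation by at least $0.1\,m$ — an additive deviation that is a constant fraction of $m$. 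A standard Hoeffding bound for sums of independent $[0,1]$-valued random variables gives $\Pr[z < \mathbb{E}[z] - t] \leq \exp(-2t^2/m)$; plugging in $t = 0.1\,m$ yields $\Pr\bigl[z < \bigl(0.9 - \frac{1}{2\beta+1}\bigr)m\bigr] \leq \exp(-0.02\,m) = O(e^{-m})$, which is what we need. (One should of course note the harmless technicality that the lower bound on $\Pr[Z_i=1]$ may be a pessimistic estimate of the true mean, but since Hoeffding controls deviations below the mean, using a valid lower bound for $\mathbb{E}[z]$ only helps.)

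There is essentially no hard step here — the statement is a routine concentration argument — but the one point that deserves a careful sentence is the \emph{independence} of the measurement outcomes, because the $\ket{\psi_{x_i}}$ arise as the coordinate-wise tensor factors of the single global state $\ket{\phi_u}$ in Eqn.~\eqref{eqn:LWEstate_mini}, and one must be explicit that applying $\bigotimes_i U_{y_i}$ and measuring all registers in the computational basis factorizes into $m$ independent experiments. Once that is said, the proof is two lines: compute $\mathbb{E}[z]$ from Lemma~\ref{lem:random_measure}, then apply Hoeffding. I would write it as: by Lemma~\ref{lem:random_measure} and independence, $z$ stochastically dominates a sum of $m$ independent Bernoulli variables each with success probability $1 - \frac{1}{2\beta+1}$; applying the Chernoff–Hoeffding inequality to this dominating sum with deviation parameter $0.1\,m$ gives the claimed bound $1 - O(e^{-m})$.
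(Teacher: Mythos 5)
Your proposal is correct and matches the paper's argument, which proves this lemma in one line as "a direct consequence of Lemma~\ref{lem:random_measure} and Chernoff bound"; you have simply spelled out the independence across registers and the stochastic-domination step that the paper leaves implicit. No gaps.
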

\begin{proof}
This is a direct consequence of Lemma~\ref{lem:random_measure} and Chernoff bound. 
\end{proof}

By union bound, it can also be shown that, with probability at least $1 - O(q^n e^{-m})$, when the measurements on each bit are chosen uniformly at random, we will get at least $(0.9-1/(2\beta+1)) \cdot m$ {non-zeros} for all $u \in \Z_q^n$. 
\begin{corollary} \label{lem:random_all_good_measure}
For any fixed $A$ in $\mathbb{Z}_q^{n \times m}$, the probability that for all $u \in\Z_q^n$, measuring $U_{y_i}\ket{\psi_{(u^T A)_i}}$ for all $i \in [m]$ and at least $(0.9 - 1 / (2\beta + 1)) m$ outcomes being non-zero is at least $1 - O(q^n e^{-m})$. Namely, \begin{align*}
    \Pr_{y_1, \cdots, y_m \in \Z_q}\left[\forall u \in \Z_q^n, z_u \geq \left(0.9 - \frac{1}{2 \beta + 1} \right) m \right] \geq 1 - O(q^n e^{-m}), 
\end{align*}
where $z_u$ is defined as the number of non-zero outcomes among all $s_{u,1}, \cdots, s_{u, m}$, each $s_{u, i}$ is defined as the measurement outcome of $U_{y_i} \ket {\psi_{(u^T A)_i}}$.
\end{corollary}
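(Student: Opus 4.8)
The plan is to obtain Corollary~\ref{lem:random_all_good_measure} from the preceding lemma by a simple union bound over the $q^n$ candidate secrets $u\in\Z_q^n$. Since $A$ is fixed, each $u$ pins down a fixed tuple $\big((u^TA)_1,\dots,(u^TA)_m\big)\in\Z_q^m$; feeding this tuple into the preceding lemma (which is stated for arbitrary fixed $x_1,\dots,x_m$) shows that, over the random choice of $y_1,\dots,y_m$ together with the measurement outcomes, the number $z_u$ of non-zero results among $s_{u,1},\dots,s_{u,m}$ is at least $(0.9-\tfrac{1}{2\beta+1})m$ except with probability $O(e^{-m})$. In particular I would set $x_i:=(u^TA)_i$ and quote the preceding lemma verbatim, once for each $u$.

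The only point that needs a moment's thought is that the corollary insists on a single shared draw of $y_1,\dots,y_m$ serving all $u$ at once, so the bad events $B_u:=\{z_u<(0.9-\tfrac{1}{2\beta+1})m\}$ are correlated through the common $\vec y$. This is harmless: the union bound needs no independence, so $\Pr\big[\bigcup_{u\in\Z_q^n}B_u\big]\le\sum_{u\in\Z_q^n}\Pr[B_u]\le q^n\cdot O(e^{-m})=O(q^n e^{-m})$, and taking the complement gives exactly the stated bound $\Pr_{y_1,\dots,y_m}[\,\forall u,\ z_u\ge(0.9-\tfrac{1}{2\beta+1})m\,]\ge 1-O(q^n e^{-m})$.

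I expect no real obstacle: the entire analytic content is already packaged in the preceding lemma (itself a Chernoff-bound consequence of Lemma~\ref{lem:random_measure}), and this corollary is just the routine step of pulling the $\forall u$ quantifier outside. The one bit of bookkeeping is that the probability is implicitly over both $\vec y$ and all measurement outcomes, but since the preceding lemma already folds the measurement randomness into its statement, nothing new is required. It is worth remarking, for sanity, that $O(q^n e^{-m})$ is negligible in the regime we care about---e.g.\ for $m\in\Omega(n^{q-1}q^2\log q)=\omega(n\log q)$ as in Theorem~\ref{thm:sisinftyconstantmodulus}, one has $q^n e^{-m}=e^{\,n\ln q-m}$, which is negligible.
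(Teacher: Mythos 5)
Your proposal is correct and matches the paper's argument exactly: the paper likewise obtains this corollary from the preceding lemma by a union bound over all $q^n$ values of $u$, with each $u$ fixing the tuple $x_i=(u^TA)_i$, and no independence of the bad events is needed. Nothing further is required.
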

The above corollary implies that, for an overwhelming fraction (at least $1-O(q^n e^{-m})$) of $y_1, \cdots, y_m$, the following event happens with probability at least $1-O(q^n e^{-m})$: for all $u \in \Z_q^n$, measuring $U_{y_i} \ket{\psi_{(u^T A))_i}}$ for all $i\in [m]$ and getting at least $(0.9-1/(2\beta +1) m$ outcomes being non-zero.

We are now ready to state the main theorem.
\begin{theorem}\label{thm:LWEstateuncomputeu}
Let $n$ be an integer, $q$ be a constant prime, $C$ be a sufficiently large constant.
Let $m \geq (0.9 - 1/(2\beta+1))^{-1} \cdot C \cdot n^{q-1} q^2 \log q$. Then there exists a QPT algorithm that with overwhelming probability, given a random $A \in \Z_q^{n \times m}$ and $\sum_{u\in\Z_q^n} \ket{ \phi_{u} } \otimes \ket{u}$,
outputs a state negligibly close to $\sum_{u\in\Z_q^n} \ket{ \phi_{u} }$. Here $\ket{\phi_u}$ is defined in Eqn.~\eqref{eq:phiu}. 
\end{theorem}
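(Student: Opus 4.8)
The plan is to build the desired QPT algorithm in two phases, mirroring the two-stage strategy sketched in the text: a quantum \emph{filtering} phase that extracts, for each coordinate $i$, a non-equality constraint $(u^T A)_i \neq y_i$, followed by a classical \emph{Arora--Ge} phase that recovers $u$ from these constraints; the whole thing is then run coherently and in superposition over $u$ so that, once $u$ is computed into an ancilla, it can be uncomputed to disentangle the $\ket{u}$ register. First I would fix a choice of uniformly random $y_1,\dots,y_m \in \Z_q$ (chosen classically, once and for all, before looking at the state) and, for each $i$, fix the unitary $U_{y_i}$ as in Eqn.~\eqref{eqn:defpsi_v_mini}ff., which maps $\ket{\psi_{y_i}}$ to $\ket{0}$. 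Applying $\bigotimes_i U_{y_i}$ to the state $\sum_u \bigl(\bigotimes_i \ket{\psi_{(u^T A)_i}}\bigr)\otimes\ket{u}$ (Eqn.~\eqref{eqn:LWEstate_mini}) and measuring each coordinate in the computational basis yields outcomes $s_{u,1},\dots,s_{u,m}$; by Corollary~\ref{lem:random_all_good_measure}, with probability $1-O(q^n e^{-m})$ over the $y_i$'s, for \emph{every} $u$ at least $(0.9-1/(2\beta+1))m$ of these outcomes are nonzero. But I do not want to actually measure, since that destroys $\ket{\phi_u}$; instead I would run this entire process \emph{coherently}: apply the $U_{y_i}$'s, copy (via CNOTs) the indicator bit ``$s_i\neq 0$'' of each coordinate into a fresh register, then apply $U_{y_i}^\dagger$ to restore $\ket{\psi_{(u^TA)_i}}$. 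This leaves a state that is a superposition over $u$ of $\ket{\phi_u}\otimes\ket{u}\otimes\ket{\text{indicator pattern}}$, where the indicator pattern records which coordinates gave a guaranteed inequality.

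Next I would argue that the indicator pattern, together with the (classically known) $y_i$'s and $A$, determines $u$ via Arora--Ge. Restricting attention to the coordinates $i$ flagged as ``$\neq 0$,'' we have at least $(0.9-1/(2\beta+1))m \geq C\,n^{q-1}q^2\log q$ constraints of the form $(u^T A)_i \neq y_i$, i.e. $a_i \cdot u + (-y_i) \neq 0 \pmod q$ with a \emph{trivial} error distribution supported on $\Z_q\setminus\{0\}$ (size $D = q-1 < q$, and after the shift in Corollary~\ref{coro:AG11} the zero-error probability is at least $1/q$). Since $q$ is a constant, $n^{D} = n^{q-1}\in\poly(n)$, and Corollary~\ref{coro:AG11} gives a \emph{deterministic} classical polynomial-time algorithm that, from these $m' \geq C\,n^{q-1}q^2\log q$ samples, outputs the correct $u$ with probability $1-O(q^{-n^{q-1}})$ over the randomness of the $a_i$'s (which here is the randomness of $A$). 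Crucially Arora--Ge is deterministic, so I can run it \emph{reversibly in superposition}: from the registers holding $A$, the $y_i$'s, and the indicator bits, compute $u' = \mathrm{AG}(\cdots)$ into a fresh register; for the overwhelming fraction of branches (those $u$ for which the measurement pattern has enough nonzeros, and those $A$ for which Arora--Ge succeeds) we get $u' = u$. Then apply the map $\ket{u}\ket{u'}\mapsto\ket{u - u'}\ket{u'}$, zeroing the original $\ket{u}$ register on the good branches, and finally uncompute $u'$ by running Arora--Ge backwards, and uncompute the indicator bits and the $U_{y_i}$ applications (the latter already undone above).

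The last step is the error analysis: I would show the output state is negligibly close to $\sum_u \ket{\phi_u}$. The ``bad'' branches come from two sources: (a) a constant fraction of $y$-choices are bad, but we fix a \emph{good} choice up front — with probability $1-O(q^n e^{-m}) = 1-\negl(n)$ over the classical choice of $y_1,\dots,y_m$, every $u$ has $\geq (0.9-1/(2\beta+1))m$ flagged coordinates, so condition on that; (b) even with a good $y$-choice, Arora--Ge could fail for some $A$, but with probability $1-O(q^{-n^{q-1}}) = 1-\negl(n)$ over $A$ it succeeds for the (single, relevant) secret in each branch. Putting these together, except with negligible probability over $A$ and the classically-chosen $y_i$, on every branch of the superposition the algorithm correctly computes and uncomputes $u$, so the residual amplitude on states where the $\ket{u}$ register is not zeroed is negligible; hence the trace distance to $\sum_u \ket{\phi_u}\otimes\ket{0}$ is negligible. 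I expect the main obstacle — and the place needing the most care — to be the \emph{coherent} bookkeeping: one must be sure that the quantity being measured/copied (the indicator bits) is a function only of the coordinate states and the fixed $y_i$'s, not of anything that would remain entangled after uncomputation, and that $U_{y_i}^\dagger$ genuinely restores $\ket{\psi_{(u^TA)_i}}$ on every branch (it does, since $U_{y_i}$ is a fixed unitary independent of $u$); and one must handle the negligible-weight bad branches so that they only contribute negligibly to the final fidelity rather than corrupting the argument. Everything else is an application of Corollary~\ref{lem:random_all_good_measure}, Corollary~\ref{coro:AG11}, and the fact (Preliminaries) that any $\poly(n)$-dimensional unitary — in particular each $U_{y_i}$ on $O(\log q)=O(1)$ qubits — can be implemented efficiently.
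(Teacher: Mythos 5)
Your overall strategy is the same as the paper's: apply the fixed unitaries $U_{y_i}$, run Arora--Ge coherently on the resulting registers to compute $u$ and cancel the $\ket{u}$ register, and uncompute at the end, with the error analysis resting on Corollary~\ref{lem:random_all_good_measure} and Corollary~\ref{coro:AG11}. However, there is a genuine error at the step you yourself flag as the crux. After you CNOT the indicator bit ``$s_i\neq 0$'' into a fresh register, the coordinate register is \emph{entangled} with that indicator register: writing $U_{y_i}\ket{\psi_{(u^TA)_i}}=w_0\ket{0}+\ket{\chi}$ with $\ket{\chi}=\sum_{j\neq 0}w_j\ket{j}$, the state after the CNOT is $w_0\ket{0}\ket{0}_{\mathrm{ind}}+\ket{\chi}\ket{1}_{\mathrm{ind}}$, and applying $U_{y_i}^\dagger$ yields $w_0\ket{\psi_{y_i}}\ket{0}_{\mathrm{ind}}+U_{y_i}^\dagger\ket{\chi}\ket{1}_{\mathrm{ind}}$, which is \emph{not} $\ket{\psi_{(u^TA)_i}}\otimes(\cdots)$. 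Your claim that the state is restored ``since $U_{y_i}$ is a fixed unitary independent of $u$'' misses the point: the obstruction is not $u$-dependence of the unitary but the fact that the indicator bit is not a deterministic function of $u$ --- for a fixed $u$ the outcome $s_{u,i}$ is a genuine superposition over zero and non-zero values, so no classical ``indicator pattern'' can be factored out. Consequently the intermediate state is not ``$\ket{\phi_u}\otimes\ket{u}\otimes\ket{\text{indicator pattern}}$'' as you assert.

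The circuit you describe can still be made to work, but the argument must be run the way the paper runs it: keep all registers coherent, apply the (deterministic, reversible) Arora--Ge map branch-by-branch to the superposition $\sum_{s_u}w_{s_u}\ket{s_u}\otimes\ket{u}$, observe that on every branch in the support a flagged coordinate really does give a valid inequality, and bound the total squared amplitude of the ``bad'' branches (too few flags, or Arora--Ge failure) by $O(q^ne^{-m}+q^{-n^{q-1}})$ via Chernoff plus a union bound over all $q^n$ values of $u$; only then uncompute the indicators and the $U_{y_i}$'s, absorbing the bad branches into a $\negl(n)\ket{\mathrm{err}}$ term. The paper avoids your extra copy-and-restore step entirely by leaving the registers in the $U_i$-rotated basis until the very last uncomputation, which sidesteps the false intermediate claim. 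You should either adopt that ordering or replace your product-state claim with an explicit amplitude accounting of the entangled state.
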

\begin{proof}
Our algorithm works as follows on state $\sum_{u \in \Z_q^n} \ket {\phi_u} \ket u = \sum_u \bigotimes_{i=1}^m \ket{\psi_{(u^T A)_i}} \ket u$:
\begin{enumerate}
    \item Pick $m$ uniformly random values $y_1, ..., y_m\in\Z_q$. For each $i\in [m]$, construct a unitary 
    $U_{i} := \sum_{j=0}^{q-1} \ket{ j }\bra{ \alpha_{i,j} }$ where for $j = 0, ..., q-1$, 
    \begin{equation}\label{eqn:U_i_mini}
        \ket{ \alpha_{i,j} } := 
        \begin{cases}
        \ket{ \psi_{y_i} }, & \text{ when } j = 0; \\
        \text{An arbitrary $q$-dim unit vector orthogonal to } \set{ \ket{\alpha_{i,k}} }_{k = 0}^{j-1}, & \text{ for } 1\leq j \leq q-1;
        \end{cases}
    \end{equation}

    \item For $i = 1, ..., m$, apply $U_{i}$ to the $i^{th}$ register, we get 
        \begin{eqnarray*}
            U_i \ket{ \psi_{(u^T A)_i} } &=& U_i \left(  \sum_{j=0}^{q-1} \bk{\alpha_{i,j} }{ \psi_{(u^T A)_i} } \cdot \ket{\alpha_{i,j}} \right)  \\
            &=&  \left(  \sum_{j=0}^{q-1} \bk{\alpha_{i,j} }{ \psi_{(u^T A)_i} } \cdot \ket{ j } \right) =: \sum_{s_{u,i}\in\Z_q} w_{s_{u,i}} \ket{s_{u,i}}.
        \end{eqnarray*}
    Here, $s_{u, i}$ denotes the `measurement outcome' of $U_i \ket {\psi_{(u^T A)_i}}$, but we do not physically measure the register $s_{u, i}$.  We denote the vector $(s_{u, 1}, \cdots, s_{u, m})$ by $s_u$.

    \item Then we apply the quantum unitary implementation of the classical algorithm in~\cite{DBLP:conf/icalp/AroraG11} to $\sum_u \sum_{s_{u}\in\Z_q^m} w_{s_{u}} \ket{s_{u}} \otimes \ket{u} := \sum_u \bigotimes_{i=1}^m \sum_{s_{u,i}\in\Z_q} w_{s_{u,i}} \ket{s_{u,i}} \otimes \ket{u} $. 
    Let the algorithm $D_{y_1, y_2, \cdots, y_m}$ be the following in Fig \ref{alg:D}: 
        \begin{algorithm}
        \caption{Learning $u$ from $u^T A$}\label{alg:D}
        \begin{algorithmic}[1]
        \Procedure{$D_{y_1, y_2, \cdots, y_m}$}{$s_u$}
        \For {each $i = 1 , 2, \cdots, m$}
            \If{ $s_{u, i}\neq 0$ (meaning that $(u^T A)_i\neq y_i$)}
                \State{Let $a_i$ and $y_i \pmod q$ } be a sample of LWE
            \EndIf
        \EndFor
        \State{If there are more than $m' = (0.9 - 1/(2\beta + 1)) m$ samples, it runs  Arora-Ge algorithm over those samples to learn $u$ and outputs $u$.}
        \EndProcedure
        \end{algorithmic}
        \end{algorithm}
        
    For any $y_1, \cdots, y_m \in Z_q$ and $u \in \Z_q^n$, if $s_{u, i} \ne 0$, then $(u^T A)_i \ne y_i$; moreover, the LWE sample $(a_i, y_i)$ has an error distribution with support $\set{1, ..., q-1}$, so Corollary~\ref{coro:AG11} applies here.

    We apply $D_{y_1, y_2, \cdots, y_m}$ in superposition to $\sum_u \sum_{s_{u}\in\Z_q^m} w_{s_{u}} \ket{s_{u}} \otimes \ket{u}$. For every fixed $u \in \Z_q^n$, let ${\sf Bad}_u$ be the set such that if all $s_u \in {\sf Bad}_u$, when we apply this algorithm to $s_u$, it does not compute $u$ correctly.

    By Corollaries \ref{lem:random_all_good_measure} and \ref{coro:AG11}, for an overwhelming fraction ($1-O(q^n e^{-m})$) of $y_1, \cdots, y_m$, for every $u$, $\sum_{s_u \in {\sf Bad}_u} |w_{s_u}|^2 \leq \negl(n)$. This is because:
    \begin{enumerate}
        \item By Corollary \ref{lem:random_all_good_measure}, for an overwhelming fraction of $y_1, \cdots, y_m$, for every $u$, $s_u$ provides at least $(0.9 - 1/(2\beta+1)) m = C \cdot n^{q-1} q^2 \log q$ samples with probability more than $1 - O(q^n e^{-m})$. Since $m \gg n$, it happens with overwhelming probability. 
        
        \item By Corollary \ref{coro:AG11},   as long as there are more than $C\cdot n^{q-1} q^2 \log q$ random samples, Arora-Ge algorithm succeeds with probability more than $1 - O(q^{-n^{q-1}})$. Note that the probability is taken over these random samples; in our case, the probability is taken over $A, y_1, \cdots, y_m$.
    \end{enumerate}
    Thus, for an overwhelming fraction of $A, y_1, \cdots, y_m$, for every $u$, the weight $\sum_{s_u \in {\sf Bad}_u} |w_{s_u}|^2 \leq O(q^n e^{-m} + q^{-n^{q-1}}) = \negl(n)$. 

Therefore, for an overwhelming fraction of $A, y_1, \cdots, y_m$, the resulting state is:
\iffalse
\begin{eqnarray*}
\ket{\phi}&:=& D_{y_1, y_2, \cdots, y_m} \sum_{u \in \mathbb{Z}_q^n} \sum_{s_u \in \mathbb{Z}_q^m} w_{s_u} \ket{ s_u, u }  \\
    &=& \sum_{u \in \mathbb{Z}_q^n} \left( \sum_{s_u \not\in {\sf Bad}_u} w_{s_u}  \ket{ s_u, 0 } + \sum_{s_u \in {\sf Bad}_u} w_{s_u}  \ket{ s_u, D_{y_1,\cdots,y_m}(s_u) }  \right)  \\
    &=& \sum_{u \in \mathbb{Z}_q^n} \left( \sum_{s_u} w_{s_u}  \ket{ s_u, 0 } + \negl_u(m) \ket{ {\sf err}_u } \right)  \\
    &=& \sum_{u \in \mathbb{Z}_q^n} \sum_{s_u} w_{s_u} \ket{s_u, 0}  + \negl(n) \ket{ {\sf err} }.
\end{eqnarray*}
\qnote{We ignore a global phase $q^{-n/2}$. We can only say $\ket {\sf er}$ has a negligible phase when taking account of $q^{-n/2}$. Should we do something like the following?}\ynote{I think both above and below are okay as long as we spell out the expression for err (including the factor of $q^{-n/2}$)}
\fi

\begin{eqnarray*}
\ket{\phi}&:=& D_{y_1, y_2, \cdots, y_m} \cdot q^{-n/2} \sum_{u \in \mathbb{Z}_q^n} \sum_{s_u \in \mathbb{Z}_q^m} w_{s_u} \ket{ s_u, u }  \\
    &=& q^{-n/2} \sum_{u \in \mathbb{Z}_q^n} \left( \sum_{s_u \not\in {\sf Bad}_u} w_{s_u}  \ket{ s_u, 0 } + \sum_{s_u \in {\sf Bad}_u} w_{s_u}  \ket{ s_u, D_{y_1,\cdots,y_m}(s_u) }  \right)  \\
    &=& q^{-n/2} \sum_{u \in \mathbb{Z}_q^n} \left( \sum_{s_u} w_{s_u}  \ket{ s_u, 0 } + \negl_u(n) \ket{ {\sf err}_u } \right)  \\
    &=& q^{-n/2} \sum_{u \in \mathbb{Z}_q^n} \sum_{s_u} w_{s_u} \ket{s_u, 0}  + \negl(n) \ket{ {\sf err} }.
\end{eqnarray*}
Here $\negl_u(n)$ is a complex number whose norm is negligible in $n$, $\ket {{\sf err}_u}$ is some unit vector. Similarly, it is the case for $\negl(n)$ and $\ket{\sf err}$.

\item Finally, we apply $\bigotimes_{i=1}^m U_{i}^{-1}$ to uncompute the projections and get 
\begin{equation*}
     \bigotimes_{i=1}^m U_{i}^{-1} \ket{ \phi } = \sum_{u\in\Z_q^n} \bigotimes_{i =1, ..., m}\ket{ \psi_{(u^T A)_i} } \otimes \ket{0}  + \negl(n) \ket{{\sf err}'} = \sum_{u\in\Z_q^n} \ket{ \phi_{u} } \otimes \ket{0}  + \negl(n) \ket{{\sf err}'}.
\end{equation*}
\end{enumerate}
Therefore we get a state negligibly close to $\sum_{u\in\Z_q^n} \ket {\phi_u} = \sum_{u\in\Z_q^n} \sum_{e\in\Z_q^m} \hat{f}(e) \ket{ u^T A + e}$. This completes the proof of Theorem~\ref{thm:LWEstateuncomputeu}.
\end{proof}

Finally, by Lemma~\ref{lemma:SIStoLWEstate} and Theorem~\ref{thm:LWEstateuncomputeu}, we complete the proof of Theorem~\ref{thm:sisinftyconstantmodulus}. 

\section{Gram-Schmidt for Circulant Matrices}\label{sec:GSO_Circulant}

The general filtering algorithms used later in this paper construct unitary matrices obtained from applying the normalized Gram-Schmidt orthogonalization (GSO) on circulant matrices. The success probabilities of the general filtering algorithms are related to the norm of the columns in the matrices obtained from GSO. Thus, let us provide some related mathematical background in this section.

Given an ordered set of $k\leq n$ linearly independent vectors $\set{ b_0, ..., b_{k-1} }$ in $\R^n$, 
let $B:= \pmat{ b_0, ..., b_{k-1} } \in \R^{n\times k}$. 
For convenience, we sometimes denote $b_i$ by $B_i$. 
Recall the Gram-Schmidt orthogonalization process.
\begin{definition}[GSO]\label{def:GSO}
The Gram-Schmidt orthogonalization of $B$, denoted as $\gs{B} = \pmat{\gs{b_0}, \cdots, \gs{b_{k-1}}}$, is defined iteratively for $i = 0, ..., k-1$ as
\[ \gs{b_i} = b_i - \sum_{j = 0}^{i-1} \frac{ \innerprod{b_i}{\gs{b_j}} }{ \innerprod{\gs{b_j}}{\gs{b_j}} }\cdot \gs{b_j}. \]
\end{definition}

Let us also define the normalized version of Gram-Schmidt orthogonalization.
\begin{definition}
Given an ordered set of $k\leq n$ linearly independent vectors $\set{ b_0, ..., b_{k-1} }$ in $\R^n$, 
let $B:= \pmat{ b_0, ..., b_{k-1} } \in \R^{n\times k}$.
The normalized Gram-Schmidt orthogonalization of $B$, denoted as $\ngs{B} = \pmat{\ngs{b_0}, \cdots, \ngs{b_{k-1}}}$, is defined for $i = 0, ..., k-1$ as $\ngs{b_i} := \gs{b_i}/\|\gs{b_i}\|_2$ where $\gs{b_i}$ is defined in Definition \ref{def:GSO}. 
\end{definition}

The following lemma is helpful for bounding the length of GSO vectors. %
\begin{lemma}[Derived from Corollary~14 of~\cite{MicciancioLN2012}]\label{lemma:gs_dual}
Let $\mat{D} = \pmat{ d_0, ..., d_{k-1} } := \mat{B}\cdot( \mat{B}^T \cdot \mat{B} )^{-1}$. 
Then we have $\|\gs{b_{k-1}}\|_2 = 1/\| d_{k-1}\|_2$.
\end{lemma}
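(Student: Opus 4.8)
\textbf{Proof proposal for Lemma~\ref{lemma:gs_dual}.}
The plan is to reduce the statement to a small, clean fact about orthogonal projections. Write $\pi$ for the orthogonal projection of $\R^n$ onto $\mathrm{Span}(b_0,\dots,b_{k-2})^\perp$, i.e.\ onto the orthogonal complement of the span of the first $k-1$ columns of $B$. By the definition of the Gram-Schmidt process, $\gs{b_{k-1}} = \pi(b_{k-1})$, so what I must show is $\|\pi(b_{k-1})\|_2 = 1/\|d_{k-1}\|_2$, where $d_{k-1}$ is the last column of $D = B(B^TB)^{-1}$. The matrix $D$ is the \emph{dual basis} (or pseudo-dual) of $B$: it is characterized by $B^T D = I_k$, equivalently $\innerprod{b_i}{d_j} = \delta_{ij}$, and its columns lie in the column span of $B$. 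I would take this characterization as the starting point, either citing Corollary~14 of~\cite{MicciancioLN2012} directly or re-deriving it in one line: $B^TD = B^T B (B^TB)^{-1} = I_k$, and $\mathrm{colspan}(D)\subseteq\mathrm{colspan}(B)$ is immediate from the factored form.

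The key step is the following: $\|d_{k-1}\|_2 = 1/\|\pi(b_{k-1})\|_2$. First I would show $\pi(d_{k-1}) = d_{k-1}$ is \emph{not} what we want; rather, I claim $d_{k-1}$ is parallel to $\pi(b_{k-1})$. Indeed, $d_{k-1}\in\mathrm{colspan}(B)$ and $d_{k-1}\perp b_i$ for all $i\le k-2$ (from $\innerprod{b_i}{d_{k-1}}=\delta_{i,k-1}=0$), so $d_{k-1}$ lies in the one-dimensional space $\mathrm{colspan}(B)\cap \mathrm{Span}(b_0,\dots,b_{k-2})^\perp$, which is exactly the line spanned by $\pi(b_{k-1}) = \gs{b_{k-1}}$ (this vector is nonzero by linear independence of the $b_i$). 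Write $d_{k-1} = \lambda\,\gs{b_{k-1}}$ for some scalar $\lambda$. To pin down $\lambda$, use the remaining normalization $\innerprod{b_{k-1}}{d_{k-1}} = 1$: since $b_{k-1} - \gs{b_{k-1}} \in \mathrm{Span}(b_0,\dots,b_{k-2})$ is orthogonal to $d_{k-1}$, we get
\[
1 = \innerprod{b_{k-1}}{d_{k-1}} = \innerprod{\gs{b_{k-1}}}{d_{k-1}} = \lambda\,\|\gs{b_{k-1}}\|_2^2 .
\]
Hence $\lambda = 1/\|\gs{b_{k-1}}\|_2^2$ and $\|d_{k-1}\|_2 = |\lambda|\cdot\|\gs{b_{k-1}}\|_2 = 1/\|\gs{b_{k-1}}\|_2$, which rearranges to the claimed identity $\|\gs{b_{k-1}}\|_2 = 1/\|d_{k-1}\|_2$.

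I expect the only real subtlety — the "main obstacle," such as it is — to be justifying that the intersection $\mathrm{colspan}(B)\cap\mathrm{Span}(b_0,\dots,b_{k-2})^\perp$ is genuinely one-dimensional and equals the span of $\gs{b_{k-1}}$; this follows from a dimension count ($\dim\mathrm{colspan}(B)=k$, and intersecting with the hyperplane-type subspace $\mathrm{Span}(b_0,\dots,b_{k-2})^\perp$ cuts the dimension by exactly $k-1$ because $b_0,\dots,b_{k-2}$ are linearly independent), plus the observation that $\gs{b_{k-1}}$ manifestly lies in both spaces and is nonzero. Everything else is bookkeeping with the defining relations $B^TD = I$. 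If one prefers to avoid re-deriving the dual-basis facts, the cleanest route is simply to invoke Corollary~14 of~\cite{MicciancioLN2012} for the statement that $D$ is the dual basis and that $d_{k-1}$ is parallel to $\gs{b_{k-1}}$, and then supply only the two-line scalar computation above to fix the constant.
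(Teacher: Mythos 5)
Your argument is correct and complete. Note that the paper does not actually prove this lemma: it is stated as ``derived from Corollary~14 of~\cite{MicciancioLN2012}'' and invoked as a black box, so there is no in-paper proof to compare yours against. Your derivation is the standard duality argument underlying that corollary: $B^TD = B^TB(B^TB)^{-1} = I_k$ gives the biorthogonality relations $\innerprod{b_i}{d_j}=\delta_{ij}$; the two constraints $d_{k-1}\in\mathrm{colspan}(B)$ and $d_{k-1}\perp b_0,\dots,b_{k-2}$ confine $d_{k-1}$ to the one-dimensional subspace spanned by $\gs{b_{k-1}}$ (your dimension count is right, since $b_0,\dots,b_{k-2}$ are linearly independent and span a $(k-1)$-dimensional subspace of the $k$-dimensional $\mathrm{colspan}(B)$, and $\gs{b_{k-1}}$ is a nonzero vector lying in the intersection); and $\innerprod{b_{k-1}}{d_{k-1}}=1$ fixes the proportionality constant to $1/\|\gs{b_{k-1}}\|_2^2$, since $b_{k-1}-\gs{b_{k-1}}$ lies in $\mathrm{Span}(b_0,\dots,b_{k-2})$ and hence is orthogonal to $d_{k-1}$. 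The only wrinkle is your parenthetical that $\pi(d_{k-1})=d_{k-1}$ is ``not what we want'': that identity is in fact true (it restates $d_{k-1}\perp\mathrm{Span}(b_0,\dots,b_{k-2})$), it is just not by itself sufficient; the phrasing could be tightened, but nothing in the proof depends on it.
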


\paragraph{GSO of circulant matrices.}
Let $C\in \R^{n\times n}$ be a real circulant matrix, defined as 
\begin{equation}\label{eqn:defC}
C:= \pmat{ c_0 & c_1 & c_2 & ... & c_{n-1} \\
	c_{n-1} & c_0 & c_1 & ... & c_{n-2} \\
	c_{n-2} & c_{n-1} & c_0 & ... & c_{n-3} \\
	... & ... & ... & ... & ... \\
	c_{1} & c_{2} & c_3 & ... & c_0 }.
\end{equation}

\begin{fact} \label{fact:circulant_eigen}
The QFT basis is an eigenbasis of a
circulant matrix, namely,
\begin{equation}\label{eqn:eigendecompC}
C = F_n^{-1} \cdot \Lambda \cdot F_n,
\end{equation}
where $(F_n)_{i,j} :=  \frac{1}{\sqrt{n}}\cdot \omega_n^{ij}$, for $0\leq i,j\leq n-1$; $\Lambda := \diag{\lambda_0, ..., \lambda_{n-1}}$, where $\lambda_i := \sum_{j = 0}^{n-1}c_j \cdot \omega_n^{ij}$. In other words, the eigenvalues of $C$ are the QFT of the first row of $C$. %
\end{fact}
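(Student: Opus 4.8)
The statement to prove is Fact~\ref{fact:circulant_eigen}, which asserts that the DFT matrix $F_n$ diagonalizes a circulant matrix $C$, with eigenvalues equal to the DFT of the first row $(c_0,\dots,c_{n-1})$.

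\textbf{Plan of the proof.} The plan is to verify directly that the columns of $F_n^{-1}$ (equivalently, the standard Fourier vectors) are eigenvectors of $C$, and to read off the corresponding eigenvalues. First I would recall that $C$ can be written compactly as $C = \sum_{\ell=0}^{n-1} c_\ell P^\ell$, where $P$ is the cyclic shift permutation matrix defined by $(P)_{i,j} = 1$ iff $i \equiv j+1 \pmod n$ (so that $P^\ell$ is the shift by $\ell$), and the entry $C_{i,j}$ depends only on $i-j \bmod n$, matching Eqn.~\eqref{eqn:defC}. Then the key observation is that every power of $P$ — hence $C$ — is diagonalized by the same basis, namely the Fourier basis, because $P$ itself is. So it suffices to diagonalize $P$ and then lift to $C$ by linearity.

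\textbf{Key steps in order.} (1) Fix $k \in \{0,\dots,n-1\}$ and let $v^{(k)}$ be the vector with entries $v^{(k)}_j = \omega_n^{jk}/\sqrt n$; note this is precisely the $k$-th column of $F_n^{-1} = F_n^H$ (up to the conjugate, which just relabels $k \mapsto -k$; I would be slightly careful with this sign convention but it does not affect the conclusion since the set of Fourier vectors is permuted). (2) Compute $P v^{(k)}$: the $i$-th entry is $v^{(k)}_{i-1 \bmod n} = \omega_n^{(i-1)k}/\sqrt n = \omega_n^{-k}\, v^{(k)}_i$, so $P v^{(k)} = \omega_n^{-k} v^{(k)}$, i.e. $v^{(k)}$ is an eigenvector of $P$ with eigenvalue $\omega_n^{-k}$ (or $\omega_n^{k}$ under the other convention). (3) Apply this to $C = \sum_\ell c_\ell P^\ell$: $C v^{(k)} = \sum_\ell c_\ell \omega_n^{-k\ell} v^{(k)} = \lambda_k v^{(k)}$ with $\lambda_k = \sum_{\ell=0}^{n-1} c_\ell \omega_n^{k\ell}$ once the index/sign conventions are aligned with the statement. (4) Since the vectors $\{v^{(k)}\}_{k=0}^{n-1}$ form the columns of the unitary matrix $F_n^{-1}$ (orthonormality being the standard Fourier identity $\sum_j \omega_n^{j(k-k')} = n\,\delta_{k,k'}$), we conclude $C F_n^{-1} = F_n^{-1} \Lambda$, i.e. $C = F_n^{-1}\Lambda F_n$, which is Eqn.~\eqref{eqn:eigendecompC}.

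\textbf{Main obstacle.} There is no real mathematical difficulty here; this is a classical fact. The only thing requiring care is bookkeeping of conventions: the exact placement of $\omega_n^{ij}$ versus $\omega_n^{-ij}$ in the definition of $F_n$ given in the excerpt, the direction of the cyclic shift implied by Eqn.~\eqref{eqn:defC}, and hence whether the eigenvalue attached to the $k$-th Fourier vector is $\sum_j c_j\omega_n^{kj}$ or its conjugate. I would pin these down at the start so that the final formula $\lambda_i = \sum_{j=0}^{n-1} c_j \omega_n^{ij}$ comes out exactly as stated. Everything else is a one-line linear-algebra computation plus the orthonormality of the Fourier basis.
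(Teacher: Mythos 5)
The paper states this as a \emph{Fact} and gives no proof at all, treating it as classical background, so there is nothing to compare against; your argument --- writing $C=\sum_\ell c_\ell P^\ell$ for the cyclic shift $P$, diagonalizing $P$ in the Fourier basis, and lifting by linearity --- is the standard proof and is correct. The one caution you flag is genuinely the only place to slip: with the paper's exact conventions ($(F_n)_{i,j}=\omega_n^{ij}/\sqrt n$ and first \emph{row} equal to $c_0,\dots,c_{n-1}$, so $C_{i,j}=c_{(j-i)\bmod n}$), a direct computation gives $(F_n^{-1}\Lambda F_n)_{i,j}=c_{(i-j)\bmod n}$, i.e.\ the identity as displayed actually holds for $C^T$, or equivalently for $C$ after relabeling $\lambda_k\mapsto\lambda_{n-k}$. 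This is the usual row-versus-column circulant convention mismatch; it is immaterial for the paper, which only ever uses the multiset of eigenvalues $\{\sqrt q\cdot\hat f(z)\}_z$, but you are right that pinning the convention down at the outset is the one step that requires care.
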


In our application, we need to compute the lower bound of the length of the $k^{th}$ column of $\gs{C}$, for some $1\leq k\leq n$ such that the first $k$ columns of $C$ are linearly independent. 
Below we present a lemma for general parameter settings. 
For simplicity, the readers can assume we are interested in the range of parameters where $n$ is a polynomial, and $k$ is either equal to $n$ or $n-c$ where $c$ is a constant. 

\begin{lemma}\label{lemma:GSO_cir_ev}
Let $C = F_n^{-1} \cdot \Lambda \cdot F_n$ be a real circulant matrix where $\Lambda := \diag{\lambda_0, ..., \lambda_{n-1}}$, $\lambda_i := \sum_{j = 0}^{n-1} c_j \cdot \omega_n^{ij}$. 
Suppose $\lambda_0, ..., \lambda_{k-1}$ are non-zero and $\lambda_{k}, ..., \lambda_{n-1}$ are zero. 
Then the length of the $k^{th}$ column of $\gs{C}$, i.e., $\|\gs{C}_{k-1}\|_2$, is lower-bounded by 
\begin{enumerate}
\item If $k=n$, then $\|\gs{C}_{n-1}\|_2 \geq \frac{1}{\sqrt{n}}\cdot \min_{i = 0, ..., n-1} |\lambda_i| $. 

\item If $k<n$, then $\|\gs{C}_{k-1}\|_2 \geq \frac{ \sqrt{n} }{ k \cdot 2^{n-k} }\cdot \min_{i = 0, ..., k-1} |\lambda_i| $. 
\end{enumerate}
\end{lemma}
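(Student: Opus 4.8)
The plan is to reduce both cases to Lemma~\ref{lemma:gs_dual}, which says that $\|\gs{C}_{k-1}\|_2 = 1/\|d_{k-1}\|_2$ where $D = \pmat{d_0,\dots,d_{k-1}} = B(B^TB)^{-1}$ and $B = \pmat{C_0,\dots,C_{k-1}}$ is the submatrix of the first $k$ columns of $C$. So the whole task becomes: upper-bound $\|d_{k-1}\|_2$, the norm of the last column of the (Moore--Penrose-like) right inverse of $B$. First I would use the eigendecomposition $C = F_n^{-1}\Lambda F_n$ to express things in the Fourier basis. Write $\Lambda = \diag{\lambda_0,\dots,\lambda_{k-1},0,\dots,0}$ and let $P_k$ be the coordinate projection onto the first $k$ standard basis vectors, so that $B = C\cdot P_k^T = F_n^{-1}\Lambda F_n P_k^T$ (identifying $B$'s column space appropriately). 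Since $\lambda_k,\dots,\lambda_{n-1}=0$, the columns of $C$ live in the $k$-dimensional space spanned by the first $k$ Fourier vectors, which is exactly why only $k$ columns can be independent.

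\textbf{Case $k=n$.} Here $C$ is a full-rank circulant matrix, $B = C$, and $D = C(C^TC)^{-1} = (C^T)^{-1} = (C^{-1})^T$. So $\|d_{n-1}\|_2 = \|(C^{-1})^T_{n-1}\|_2 = \|(C^{-1})_{\text{row }n-1}\|_2 \le \|C^{-1}\|_{\mathrm{op}} = 1/\min_i|\lambda_i|$, using that $C^{-1} = F_n^{-1}\Lambda^{-1}F_n$ has singular values $1/|\lambda_i|$. Actually one must be slightly careful: I want a bound on a single row of $C^{-1}$, and $\|v\|_2 \le \|M\|_{\mathrm{op}}$ only bounds $\|Mv\|_2$ for unit $v$; a row of $C^{-1}$ is $e_{n-1}^T C^{-1}$, whose norm is at most $\|C^{-1}\|_{\mathrm{op}}\|e_{n-1}\|_2 = 1/\min_i|\lambda_i|$. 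That gives $\|d_{n-1}\|_2 \le 1/\min_i|\lambda_i|$, hence $\|\gs{C}_{n-1}\|_2 \ge \min_i|\lambda_i|$ --- but the claimed bound has an extra $1/\sqrt n$, which must come from a different (weaker but more convenient) estimate, e.g.\ bounding $\|d_{n-1}\|_2 \le \sqrt{n}\cdot\|d_{n-1}\|_\infty$ or going through $\|D\|_F$. I would reconcile this by writing $d_{n-1}$ explicitly in the Fourier basis: $d_{n-1} = F_n^{-1}\Lambda^{-1} F_n e_{n-1}$; since $F_n e_{n-1}$ has all entries of modulus $1/\sqrt n$, we get $\|d_{n-1}\|_2^2 = \sum_i |\lambda_i|^{-2}\cdot (1/n) \le (n/n)\cdot \max_i|\lambda_i|^{-2} = 1/\min_i|\lambda_i|^2$, giving the stronger bound; the $\sqrt n$ loss in the statement is just slack, so proving the stronger inequality suffices.

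\textbf{Case $k<n$.} Now $B$ is $n\times k$ with $\mathrm{rank}\,k$, and $d_{k-1}$ is the last column of $B(B^TB)^{-1}$. The clean way: $d_{k-1} = B(B^TB)^{-1}e_{k-1}$, and one checks $\|d_{k-1}\|_2 = 1/\|\gs{b_{k-1}}\|_2$ is also equal to $\mathrm{dist}(b_{k-1}, \mathrm{span}(b_0,\dots,b_{k-2}))^{-1}$ --- wait, that's circular. Instead I would bound $\|(B^TB)^{-1}\|_{\mathrm{op}} = 1/\sigma_{\min}(B)^2$ and then $\|d_{k-1}\|_2 \le \|B\|_{\mathrm{op}}/\sigma_{\min}(B)^2 = \kappa(B)/\sigma_{\min}(B)$ where $\kappa$ is the condition number. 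The point is to lower bound $\sigma_{\min}(B) = \sigma_k(B)$, the smallest singular value of the $k$-column submatrix of $C$, in terms of $\min_{i<k}|\lambda_i|$. The singular values of $C$ restricted to its first $k$ columns are \emph{not} simply $|\lambda_i|$ because column selection does not commute with the Fourier change of basis. This interlacing/submatrix estimate is the crux. I expect to handle it via a Cauchy-interlacing-type argument or by an explicit inverse: since $\mathrm{colspan}(B)$ lies in the span of the first $k$ Fourier vectors $V_k := F_n^{-1}P_k^T$, write $B = V_k \Lambda_k F_n P_k^T / \sqrt{\text{stuff}}$... more precisely $B = V_k M$ where $M = \Lambda_k (F_k)$ and $F_k$ here denotes the top-left $k\times k$ block of $F_n$ (a submatrix of the Fourier matrix, \emph{not} the $k$-point DFT). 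Then $\sigma_{\min}(B) \ge \sigma_{\min}(\Lambda_k)\sigma_{\min}(F_k^{\text{block}}) = \min_{i<k}|\lambda_i|\cdot \sigma_{\min}(F_k^{\text{block}})$ since $V_k$ is an isometry, and similarly $\|B\|_{\mathrm{op}} \le \max_{i<k}|\lambda_i|\cdot\|F_k^{\text{block}}\|_{\mathrm{op}}$.

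\textbf{The main obstacle} is thus bounding the smallest singular value of a $k\times k$ top-left submatrix of the $n\times n$ normalized Fourier matrix. A Vandermonde-determinant computation gives an explicit (exponentially small in $n-k$, but nonzero) lower bound: $\sigma_{\min}$ of such a submatrix is at least $c_{n,k}$, and plugging a crude estimate like $\sigma_{\min} \ge \frac{1}{2^{n-k}}\cdot\frac{1}{\sqrt k}$ (or whatever the Vandermonde bound yields) produces the $\frac{\sqrt n}{k\cdot 2^{n-k}}$ form in the statement. Concretely I would: (i) note $F_k^{\text{block}}$ has entries $\frac{1}{\sqrt n}\omega_n^{ij}$, $0\le i,j<k$, a scaled Vandermonde in the nodes $\omega_n^0,\dots,\omega_n^{k-1}$; (ii) bound its smallest singular value from below using $|\det| = \prod_{i<j}|\omega_n^i - \omega_n^j|/n^{k/2}$ together with an upper bound on $\|F_k^{\text{block}}\|_{\mathrm{op}} \le \sqrt{k^2/n} = k/\sqrt n$, via $\sigma_{\min} \ge |\det|/\|M\|_{\mathrm{op}}^{k-1}$; and (iii) lower-bound the product of pairwise chord distances $\prod_{i<j}|\omega_n^i-\omega_n^j|$ --- each factor is at least $2\sin(\pi/n)\cdot(\text{something})$, but the truly crude route is just $|\omega_n^i - \omega_n^j| \ge$ (some constant)$/n$ times combinatorial factors, which after counting $\binom k2$ factors and comparing to the full product for $k=n$ collapses to the $2^{n-k}$-type loss. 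I'd then assemble: $\|\gs{C}_{k-1}\|_2 = 1/\|d_{k-1}\|_2 \ge \sigma_{\min}(B)/\|B\|_{\mathrm{op}}\cdot\sigma_{\min}(B) \ge \frac{\min_{i<k}|\lambda_i|}{2^{n-k}}\cdot\frac{\sqrt n}{k}$ up to constants, matching the claim. The routine parts (Fourier isometry facts, singular-value multiplicativity under isometries) I would cite or dispatch quickly; the Vandermonde lower bound is where the real work lies, and I'd aim only for the stated crude bound rather than anything tight.
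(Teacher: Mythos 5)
Your reduction to Lemma~\ref{lemma:gs_dual} and the decomposition $B=V_k\,\Lambda_k\,R$ with $V_k$ an isometry and $R$ the top-left $k\times k$ block of $F_n$ is exactly the paper's setup, and your treatment of the $k=n$ case is correct --- in fact your computation $\|d_{n-1}\|_2^2=\frac{1}{n}\sum_i|\lambda_i|^{-2}\le \max_i|\lambda_i|^{-2}$ proves a bound \emph{stronger} than the stated one (the paper loses the extra $\sqrt n$ only because it passes through crude $\ell_\infty$ estimates).

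For $k<n$, however, the route you sketch has a genuine gap at exactly the step you flag as the crux. You propose $\sigma_{\min}(R)\ge |\det R|/\|R\|_{\mathrm{op}}^{k-1}$ and then a ``crude'' lower bound on $\prod_{i<j}|\omega_n^i-\omega_n^j|$ by estimating individual chord lengths. This cannot work as described: bounding each of the $\binom{k}{2}$ factors separately (or even using the per-column bound $\prod_{\ell\ne j}|\omega_n^\ell-\omega_n^j|\ge n/2^{n-k}$) yields only $|\det R|\ge 2^{-k(n-k)/2}$, which is exponentially small in $n$ even when $n-k$ is a constant --- useless for the applications, where one needs a $1/\poly(n)$ bound. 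The true value of $|\det R|$ is $1/\poly(n)$ for constant $n-k$ (e.g.\ exactly $n^{-1/2}$ for $k=n-1$), but extracting this from the determinant requires evaluating the \emph{global} double product $\prod_{j}\prod_{\ell}\sin(\pi\ell/n)$ essentially exactly via the identity $\prod_{\ell=1}^{n-1}\sin(\pi\ell/n)=n/2^{n-1}$, not merely bounding each factor. The paper sidesteps this entirely: instead of $\sigma_{\min}(R)$, it bounds the \emph{entries} of the last column of $R^{-H}$ using the closed-form cofactor formula for the inverse of a Vandermonde matrix (Proposition~\ref{prop:van}), where the sine-product identity applies cleanly one column at a time and the only loss is the factor $2^{n-k}$ from the $n-k$ ``missing'' sine terms. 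If you want to salvage your singular-value route, you must carry out the exact telescoping of the sine products in $|\det R|$; the shortcut you describe loses an exponential factor in $k$.
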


\iffull

\begin{proof}
If $k=n$, then let $D:= C\cdot( C^T \cdot C )^{-1} = C^{-T} = F_n^{-1} \cdot \Lambda^{-1} \cdot F_n$.  %
Therefore
\iffalse
\begin{equation*}
\begin{split}
\| d_{n-1} \|_2 
&\leq \| d_{n-1} \|_\infty \cdot \sqrt{n} \\ 
&\leq_{(1)} n \cdot \| L \cdot \Lambda^{-1} \|_{\infty}\cdot \| (F_n^{-1})_{n-1} \|_\infty \cdot \sqrt{n} \\ 
&\leq n \cdot \left(\frac{1}{\sqrt{n}}\cdot \max_{i = 0, ..., n-1} |\lambda_i|^{-1} \right) \cdot \frac{1}{\sqrt{n}} \cdot \sqrt{n} \\
&\leq \sqrt{n}\cdot \max_{i = 0, ..., n-1} |\lambda_i|^{-1}.
\end{split}
\end{equation*}
\fi
\begin{equation*}
\begin{split}
\| d_{n-1} \|_2 
&\leq \| d_{n-1} \|_\infty \cdot \sqrt{n} \\ 
&\leq_{(1)} n \cdot \| F_n^{-1} \cdot \Lambda^{-1} \|_{\infty}\cdot \| (F_n)_{n-1} \|_\infty \cdot \sqrt{n} \\ 
&\leq n \cdot \left(\frac{1}{\sqrt{n}}\cdot \max_{i = 0, ..., n-1} |\lambda_i|^{-1} \right) \cdot \frac{1}{\sqrt{n}} \cdot \sqrt{n} \\
&\leq \sqrt{n}\cdot \max_{i = 0, ..., n-1} |\lambda_i|^{-1}.
\end{split}
\end{equation*}

The inequality (1) follows from the fact that $\|A x\|_\infty \leq n \cdot \|A\|_\infty \|x\|_\infty$, where $A \in \C^{n\times n}$ and $x \in \C^n$; and $d_{n-1} = F_n^{-1} \cdot \Lambda^{-1} \cdot (F_n)_{n-1}$.

Then by Lemma~\ref{lemma:gs_dual}, $\|\gs{C}_{n-1}\|_2 \geq \frac{1}{\sqrt{n}}\cdot \min_{i = 0, ..., n-1} |\lambda_i| $.

If $k<n$, let $C^{(k)}\in \R^{n\times k}$ denote the first $k$ columns of $C$. Then $C^{(k)}$ can be written as
\begin{equation}\label{eqn:eigendecompCk}
C^{(k)} = L \cdot \Lambda^{(k)} \cdot R,
\end{equation}
where $L\in \C^{n\times k}$ denotes the first $k$ columns of $F_n^{-1}$; $\Lambda^{(k)} := \diag{\lambda_0, ..., \lambda_{k-1}}\in \R^{k\times k}$; $R\in \C^{k\times k}$ denotes the upper-left block of $F_n$, i.e., $R_{i,j} = \frac{1}{\sqrt{n}}\cdot \omega_n^{ij}$, for $0\leq i,j\leq k-1$.

Let $D:= C^{(k)}\cdot( {C^{(k)}}^T \cdot C^{(k)} )^{-1} = C^{(k)}\cdot( {C^{(k)}}^H \cdot C^{(k)} )^{-1}$ (the second equality uses the property that $C$ is real), 
then 
\begin{equation}\label{eqn:defD}
D = L \cdot \Lambda^{(k)} \cdot R \cdot \left( R^H \cdot {\Lambda^{(k)}}^H \cdot L^H\cdot L \cdot \Lambda^{(k)} \cdot R \right)^{-1}
= L \cdot {\Lambda^{(k)}}^{-H} \cdot R^{-H},
\end{equation}
where we use the property that $L^H\cdot L = I\in\R^{k\times k}$. 

From Lemma~\ref{lemma:gs_dual} we know that $\|\gs{C}_{k-1}\|_2 = 1/\|d_{k-1}\|_2$. To get an lower bound of $\|\gs{C}_{k-1}\|_2$, it suffices to get an upper bound of $\| d_{k-1} \|_2$. To get an upper bound of $\| d_{k-1} \|_2$, we need to get an upper bound of the entries in the $k^{th}$ column of $R^{-H}$, i.e., $\| R^{-H}_{k-1} \|_\infty$. 
To estimate $\| R^{-H}_{k-1} \|_\infty$, we use the fact that $R^H$ is a Vandermonde matrix. 

\begin{proposition}[\cite{rawashdeh2019simple}]\label{prop:van}
Let $V\in \C^{k\times k}$ be a Vandermonde matrix such that $V_{j,\ell} := c_{\ell}^j $ for $0\leq j, \ell\leq k-1$ where $c_0, ..., c_{k-1}$ are distinct complex numbers. 
Then the $k^{th}$ column of $V^{-1}$ is  
$(V^{-1})_{j,k-1} = (-1)^{k-1} \cdot \frac{ 1 }{ \prod_{ 0\leq \ell\leq k-1, m\neq j }{(c_\ell - c_j)} }$.
\end{proposition}

Plug in Proposition~\ref{prop:van} with $c_\ell = \omega_n^{-\ell}$ for $0\leq \ell\leq k-1$, we have
\begin{align}\label{eqn:Rinv1}
(R^{-H})_{j,k-1} = \sqrt{n}\cdot (-1)^{k-1} \cdot \frac{ 1 }{ \prod_{ 0\leq \ell\leq k-1, \ell\neq j }{(\omega_n^{-\ell} - \omega_n^{-j})} }.
\end{align}

Let us now bound the norm of the denominators. For $ j = 0, ..., k-1$:
\begin{equation*}
\begin{split}
 \left| \prod_{ 0\leq \ell\leq k-1, \ell\neq j }(\omega_n^{-\ell} - \omega_n^{-j}) \right|
&=~ \left| \prod_{ 0\leq \ell\leq k-1, \ell\neq j }(1 - \omega_n^{\ell-j}) \right| \\
&=~ \left| \prod_{ 0\leq \ell\leq k-1, \ell\neq j } 2\sin(\pi (\ell-j)/n) \right|  \\
&=~ \left| \prod_{ \ell \in \{n-j, n-j+1, \cdots, n - 1 \} \cap \{1, \cdots, k - 1 - j\} } 2\sin(\pi \ell/n) \right|  \\
&=_{(1)}~ \left|\frac{n}{ 2^{n - k} \cdot \prod_{\ell = k - j}^{n-j-1} \sin(\pi \ell/n) }\right|,
\end{split}
\end{equation*}
where $(1)$ uses the identity $\prod_{\ell = 1}^{n-1} \sin(\ell\pi/n) = n/2^{n-1}$.

Thus, for all $0\leq j\leq k-1$, $| R^{-H}_{j,k-1} |\leq \sqrt{n} \cdot \frac{1}{n/2^{n-k}} = \frac{2^{n-k}}{\sqrt{n}}$. 

Therefore, we have 
\begin{equation*}
\begin{split}
\| d_{k-1} \|_2 
\leq~& k \cdot \| L \cdot {\Lambda^{(k)}}^{-H} \|_{\infty}\cdot \|R^{-H}_{k-1} \|_\infty \cdot \sqrt{n} \\ 
\leq~& k \cdot \left(\frac{1}{\sqrt{n}}\cdot \max_{i = 0, ..., k-1} |\lambda_i|^{-1} \right) \cdot \frac{2^{n-k}}{\sqrt{n}} \cdot \sqrt{n} \\
\leq~& \frac{k \cdot 2^{n-k} }{\sqrt{n}} \max_{i = 0, ..., k-1} |\lambda_i|^{-1}.
\end{split}
\end{equation*}
By Lemma~\ref{lemma:gs_dual}, $\|\gs{C}_{k-1}\|_2 = 1/\|d_{k-1}\|_2 \geq \frac{ \sqrt{n} }{ k \cdot 2^{n-k} }\cdot \min_{i = 0, ..., k-1} |\lambda_i| $. 
\end{proof}

\else 
Please refer to \Cref{proof:gramschmidt} for the full proof. 
\fi

\section{Quantum Algorithm for Solving the LWE State Problems}\label{sec:filtering_multiple}

Recall in our mini result, every time a ``measurement'' (we do not physically implement the measurement) gives a non-zero result; it provides us with an inequality $\innerprod{u}{a_i} \ne y_i$. The algorithm, therefore, collects enough inequalities and then runs Arora-Ge to learn the secret vector $u$. There are two bottlenecks in the previous algorithm: (1) we are only able to filter out one value for $\innerprod{u}{a_i}$; (2) to run Arora-Ge, one needs to collect many samples (up to roughly $n^{q-1}$). Therefore, it is only possible to provide quantum polynomial-time algorithms for $\QLWE$, $\LWEstate$, and $\sistwo$ for a constantly large modulus $q$.

In this section, we generalize the filtering algorithm in a way that allows us to filter out $q-c$ many possible values of $\innerprod{u}{a_i}$ for some constant $c$ even when $q$ is a polynomially large modulus. In the best possible case, the filtering algorithm can filter out $q-1$ possibilities and get the exact value of $\innerprod{u}{a_i}$. Therefore, to learn the secret vector $u\in\Z_q^n$, one can collect roughly $n$ samples and run Gaussian elimination. However, the probability of filtering out $q-1$ or $q-c$ (for some constant $c$) values depends on the concrete $f$ and is typically very small. We will precisely show when such a probability is non-negligible.

We now provide quantum algorithms for $\LWEstate_{n,m,q,f}$ (cf. Def.~\ref{def:LWEstateproblem}) and $\QLWE_{n,m,q,f}$ (cf. Def.~\ref{def:QLWEproblem}). Let us first present the algorithms for a general error amplitude $f$, then state corollaries for some functions $f$ of special interest. Looking ahead, the results in \S\ref{sec:sis_polyq_const_gap} and \S\ref{sec:EDCP} use a slight modification of the algorithms presented in this section. Namely, in this section we will only show algorithms for functions $f$ which allow us to filter out $q-1$ possible values then use Gaussian elimination, whereas the results in \S\ref{sec:sis_polyq_const_gap} and \S\ref{sec:EDCP} require us to deal with a function $f$ that allows us to filter out $q-c$ possible values then use Arora-Ge.

\subsection{Overview of the general filtering algorithm}

Let $q$ be a polynomially large modulus, $f$ be an arbitrary noise amplitude. Define
\begin{equation*}
\forall v\in\Z_q, ~~\ket{\psi_v} := \sum_{e\in\Z_q}  f(e) \ket{ v + e \bmod q}. 
\end{equation*}
Following the basic notations and ideas in \S\ref{sec:learn_u}, let us now explain how to filter out two possible values for $(u^T A)_i$, say we are filtering out $(u^T A)_i = y_i$ and $(u^T A)_i = y_i+1$ where $y_i$ is a random value in $\Z_q$. 
To do so, let us define a basis $\set{ \ket{\alpha_{i,j}} }_{j = 0}^{q-1}$ where 
\[ \ket{ \alpha_{i,0} } = \ket{ \psi_{y_i} }; \ket{ \alpha_{i,1} } = \ngs{ \ket{ \psi_{y_i+1} } }. \]
The rest of the vectors in the basis are picked arbitrarily as long as they are orthogonal to $\ket{ \alpha_{i,0} }$ and $\ket{ \alpha_{i,1} }$.

Define $U_{y_i} := \sum_{j=0}^{q-1} \ket{ j }\bra{ \alpha_{i,j} }$. Suppose we ``measure'' $U_{y_i}\ket{ \psi_{(u^T A)_i} }$ and get an outcome in $\set{0, 1, ..., q-1}$:
\begin{enumerate}
\item If the outcome is $0$, then $(u^T A)_i $ can be any values in $\Z_q$;  
\item If the outcome is $1$, then we are 100\% sure that $(u^T A)_i \neq y_i$, since if $(u^T A)_i = y_i$, then the measurement outcome must be $0$.
\item If the outcome is $\geq 2$, then we are 100\% sure that $(u^T A)_i$ does not equal to $y_i$ or $y_i+1$. 
\end{enumerate}

The idea can be further generalized by continuing to do normalized Gram-Schmidt orthogonalization.
Suppose for a moment that $\ket{\psi_{y_i+j}}$, for $j = 0, ..., q-1$, are linearly independent. 
Then we define unitary matrices
\begin{equation*}
U_{y_i} := \sum_{j=0}^{q-1} \ket{ j }\bra{ \alpha_{i,j} }, \text{ where } \ket{\alpha_{i,j}} = \ngs{ \ket{\psi_{y_i+j}} }.
\end{equation*}
Following the previous logic, if we ``measure'' $U_{y_i}\ket{ \psi_{(u^T A)_i} }$, only the outcome ``$q-1$'' gives us a definitive answer of $(u^T A)_i$, that is, $(u^T A)_i = y_i +q -1 \pmod q$.

\paragraph{The probability of filtering out $q-1$ values.}
It remains to understand the probability of getting the measurement outcome $q-1$. 
\begin{equation*}
\begin{split}
   &\Pr_{y_i\in\Z_q}[ (u^T A)_i = y_i+q -1 \pmod q~\land~ q-1 \leftarrow M_{\sf st}\circ U_{y_i}\ket{ \psi_{(u^T A)_i} } ] \\
=~& \frac{1}{q}\cdot\sum_{j\in\Z_q} |\bk{\alpha_{i,q-1}}{\psi_{y_i+j}}|^2
=\frac{1}{q}\cdot |\bk{\alpha_{i,q-1}}{\psi_{y_i+q-1}}|^2,
\end{split}
\end{equation*}
where the second equality follows from the fact that $\ket{ \alpha_{i,q-1}}$ is defined to be orthogonal to all the states except $\ket{ \psi_{y_i+q-1}}$. Furthermore, 
\begin{equation*}
|\bk{\alpha_{i,q-1}}{\psi_{y_i+q-1}}| = \left|\ngs{\ket{\psi_{y_i+q-1}}}^\dagger \ket{\psi_{y_i+q-1}}\right|
 = \| \gs{\ket{\psi_{y_i+q-1}}} \|_2,
\end{equation*}
i.e., it is exactly the norm of the Gram-Schmidt of $\ket{\psi_{y_i+q-1}}$. This quantity has been shown in~Lemma~\ref{lemma:GSO_cir_ev} to be related to the minimum of $\hat f$ over $\Z_q$, namely,
\begin{equation*}
 \| \gs{\ket{\psi_{y_i+q-1}}} \|_2 \geq \min_{x = 0, ..., q-1} |\hat f(x)|.
\end{equation*}
Therefore, we are able to use the general filtering technique to achieve polynomial-time quantum algorithms for $\QLWE_{n,m,q,f}$ and $\LWEstate_{n,m,q,f}$ where $q$ is polynomially large and $f$ is a function such that the minimum of $\hat f$ over $\Z_q$ is non-negligible.

\subsection{Quantum algorithm for generating LWE states with general error}\label{sec:QAforLWEstate}

\begin{theorem}\label{thm:solvingLWEstate}
Let $q$ be a polynomially large modulus. 
Let $f:\Z_q\to \R$ be the amplitude for the error state such that the state $\sum_{e\in\Z_q} f(e)\ket{e}$ is efficiently constructible and $\eta:=\min_{z\in\Z_q}|\hat{f}(z)|$ is non-negligible.
Let $m\in \Omega\left( n \cdot q / \eta^2 \right)\subseteq \poly(n)$, there exist polynomial-time quantum algorithms that solve $\LWEstate_{n,m,q,f}$ and $\QLWE_{n,m,q,f}$.
\end{theorem}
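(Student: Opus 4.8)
The plan is to convert the theorem into an algorithmic statement and run the general filtering procedure sketched in the overview, then bootstrap the $\QLWE$ algorithm into a $\LWEstate$ algorithm exactly as in the mini result (Theorem~\ref{thm:LWEstateuncomputeu}). First I would set up the states $\ket{\psi_v}:=\sum_{e\in\Z_q} f(e)\ket{v+e \bmod q}$ and observe, via the QFT, that $\ket{\psi_v}=\QFT_q\ket{\hat\psi_v}$ where $\ket{\hat\psi_v}=\sum_x f(x)\omega_q^{-vx}\ket{x}$; hence $\ket{\psi_v}$ is (up to a known global phase) the $v$-th column of a circulant matrix $M_f$ whose eigenvalues are exactly $\hat f(0),\dots,\hat f(q-1)$ by Fact~\ref{fact:circulant_eigen}. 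Since $\eta=\min_z|\hat f(z)|$ is non-negligible, all eigenvalues are nonzero, so the $q$ vectors $\{\ket{\psi_v}\}_{v\in\Z_q}$ are linearly independent; thus for any offset $y\in\Z_q$ the normalized Gram--Schmidt orthogonalization of $(\ket{\psi_y},\ket{\psi_{y+1}},\dots,\ket{\psi_{y+q-1}})$ gives a genuine orthonormal basis, and I define the unitary $U_y:=\sum_{j=0}^{q-1}\ket{j}\bra{\alpha_j}$ with $\ket{\alpha_j}=\ngs{\ket{\psi_{y+j}}}$.

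The core estimate is the success probability of filtering out $q-1$ values. Following the overview, for fixed $x=(u^TA)_i$ and uniform $y\in\Z_q$,
\[
\Pr_y\big[\,q-1 \leftarrow M_{\sf st}\circ U_y\ket{\psi_x}\ \wedge\ x=y+q-1\,\big]=\tfrac1q\,\|\gs{\ket{\psi_{y+q-1}}}\|_2^2 \geq \tfrac{\eta^2}{q},
\]
where the last inequality is case~1 of Lemma~\ref{lemma:GSO_cir_ev} applied to the circulant matrix $M_f$ (with $k=n$ there replaced by $q$ here, and $\min_i|\lambda_i|=\eta$), giving $\|\gs{\ket{\psi_{y+q-1}}}\|_2\geq \eta/\sqrt q$. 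So each coordinate, independently over its own uniform $y_i$, yields an \emph{exact} equation $(u^TA)_i=y_i+q-1$ with probability $\geq\eta^2/q$ (and, crucially, an outcome $q-1$ is never produced when the true value is anything else). With $m\in\Omega(nq/\eta^2)$ coordinates, a Chernoff bound shows that with overwhelming probability at least $\Omega(n)$ coordinates produce exact linear equations; a union bound over the $q^n$ choices of $u$ (as in Corollary~\ref{lem:random_all_good_measure}) shows this holds simultaneously for every $u$, for an overwhelming fraction of $(A,y_1,\dots,y_m)$. Given $\Omega(n)$ random exact equations $\langle u,a_i\rangle=$ (known value), Gaussian elimination recovers $u$ — but I should be slightly careful that over a general modulus $q$ (not necessarily prime) $\Omega(n)$ random rows span $\Z_q^n$ with overwhelming probability; this is a standard counting argument, or one reduces modulo each prime power dividing $q$ and applies CRT.

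Finally, for $\QLWE$ this directly gives the secret: run the above, measure the flagged coordinates, solve. For $\LWEstate$ I would run the whole thing coherently exactly as in Theorem~\ref{thm:LWEstateuncomputeu} — prepare $\sum_u(\sum_e f(e)\ket{e})$, QFT to get $\sum_u\sum_e\hat f(e)\ket{e}$, add $u^TA$, append the $\ket u$ register, apply each $U_{y_i}$, apply the (unitary, reversible) Gaussian-elimination map conditioned on having enough flagged coordinates to uncompute $\ket u$ into $\ket 0$ up to negligible amplitude on the ``bad'' branch, then apply $\bigotimes_i U_{y_i}^{-1}$ to restore $\bigotimes_i\ket{\psi_{(u^TA)_i}}=\ket{\phi_u}$. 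The main obstacle is the bookkeeping showing the ``bad'' (insufficiently-many-equations, or wrong-solve) amplitude is negligible \emph{uniformly in $u$} so that the uncomputation leaves a state negligibly close to $\sum_u\ket{\phi_u}\otimes\ket0$; this is where the two union bounds (over $u$, and inside Chernoff) and the $m\in\Omega(nq/\eta^2)$ bound are spent, and it mirrors step 3 of the proof of Theorem~\ref{thm:LWEstateuncomputeu} almost verbatim, with Arora--Ge replaced by plain Gaussian elimination since here we filter out all but one value.
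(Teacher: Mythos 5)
Your proposal follows the paper's proof essentially step for step: the same normalized Gram--Schmidt unitaries built from the circulant matrix of shifted error states, the same use of Lemma~\ref{lemma:GSO_cir_ev} (case $k=n$) to lower-bound the last Gram--Schmidt norm by the minimum eigenvalue, the same per-coordinate success probability $\eta^2/q$ with Chernoff plus a union bound over all $u$, Gaussian elimination on the exact equations, and the same coherent uncomputation to pass from $\QLWE$ to $\LWEstate$. One arithmetic slip: with the paper's normalization $\hat f(y)=\frac{1}{\sqrt q}\sum_x\omega_q^{xy}f(x)$, the eigenvalues of the circulant $W^T=(\ket{\psi_y},\dots,\ket{\psi_{y+q-1}})$ are $\sqrt q\,\hat f(z)$, not $\hat f(z)$, so Lemma~\ref{lemma:GSO_cir_ev} gives $\|\gs{\ket{\psi_{y+q-1}}}\|_2\geq\frac{1}{\sqrt q}\cdot\sqrt q\,\eta=\eta$ rather than your intermediate $\eta/\sqrt q$; this is exactly what is needed to make your final bound $\frac1q\|\gs{\cdot}\|_2^2\geq\eta^2/q$ (and hence $m\in\Omega(nq/\eta^2)$) internally consistent, whereas your stated intermediate bound would only yield $\eta^2/q^2$.
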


\begin{proof}
We will describe an algorithm for $\LWEstate_{n,m,q,f}$. The algorithm for $\QLWE_{n,m,q,f}$ appears as a subroutine in the algorithm for $\LWEstate_{n,m,q,f}$.
\begin{enumerate}
	\item The algorithm first prepares the following state: 
	\begin{equation*}
        \bigotimes_{i = 1}^m \left( \sum_{e_i \in \Z_q } f(e_i) \ket{e_i} \right) \otimes \sum_{u \in \Z_q^n} \ket{u}. 
	\end{equation*}
	We abuse the notation of $f$ to let $f(e):= \prod_{i = 1}^{m} f(e_i)$ for $e := (e_1, ..., e_m)$. Then the state above can be written as $\sum_{e\in\Z_q^m} f(e) \ket{e} \otimes \sum_{u\in\Z_q^n} \ket{u}$. 

    \item It then adds $u^T A$ to the $e$ registers in superposition, the state is: 
	\begin{equation}\label{eqn:uA+estates}
          \sum_{u\in\Z_q^n} \sum_{e\in\Z_q^m} f(e) \ket{ u^T A + e} \otimes \ket{u} 
	\end{equation}
    Similarly, let us define 
	\begin{equation}\label{eqn:defpsi_vs}
    \text{for }v\in\Z_q\text{,  }\ket{ \psi_v } := \sum_{e\in\Z_q} f(e) \ket{ (v + e) \bmod q }.
	\end{equation}
	Therefore Eqn~\eqref{eqn:uA+estates} can also be written as 
	\begin{equation}\label{eqn:uA+estate_alts}
          \sum_{u\in\Z_q^n} \bigotimes_{i =1, ..., m}\ket{ \psi_{(u^T A)_i} } \otimes \ket{u}.
	\end{equation}

	\item Pick $m$ uniformly random values $y_1, ..., y_m\in\Z_q$. Construct unitary matrices
	\begin{equation}\label{eqn:U_i_LWE}
	\text{For }1\leq i\leq m, ~~U_{i} := \sum_{j=0}^{q-1} \ket{ j }\bra{ \alpha_{i,j} },\text{ where }\ket{\alpha_{i,j}}:= \ngs{ \ket{ \psi_{y_i+j} } }. %
	\end{equation}

	\item For $i = 1, ..., m$, apply $U_{i}$ to the $i^{th}$ register, we get 
		\begin{eqnarray*}
		    U_i \ket{ \psi_{(u^T A)_i} } &=& U_i \left(  \sum_{j=0}^{q-1} \bk{\alpha_{i,j} }{ \psi_{(u^T A)_i} } \cdot \ket{\alpha_{i,j}} \right)  \\
			&=&  \left(  \sum_{j=0}^{q-1} \bk{\alpha_{i,j} }{ \psi_{(u^T A)_i} } \cdot \ket{ j} \right) =: \sum_{s_{u,i}\in\Z_q} w_{s_{u,i}} \ket{s_{u,i}}.
		\end{eqnarray*}
	
	\item Then we apply the quantum unitary implementation of Gaussian elimination to the superposition $\sum_u \sum_{s_{u}\in\Z_q^m} w_{s_{u}} \ket{s_{u}} := \sum_u \bigotimes_{i=1}^m \sum_{s_{u,i}\in\Z_q} w_{s_{u,i}} \ket{s_{u,i}}$.
	The algorithm $D_{y_1, y_2, \cdots, y_m}$ is described in Algorithm~\ref{alg:GE}. 
		\begin{algorithm}
		\caption{Learning $u$ from $u^T A$}\label{alg:GE}
		\begin{algorithmic}[1]
		\Procedure{$D_{y_1, y_2, \cdots, y_m}$}{$\set{ s_{u,i} }_{1\leq i\leq m}$}
		\For {each $i = 1 , 2, \cdots, m$}
			\If{If $ s_{u,i} = q-1 $ (meaning that $(u^T A)_i = y_i+q-1  \pmod q$)}
				\State{Let $a_i$ and $y_i-1 \bmod q$ } be one sample of the linear system
			\EndIf
		\EndFor
		\State{With overwhelming probability, there are $\geq 2 \cdot n$ random samples (to make sure the linear system is full rank)}
		\State{Run the Gaussian elimination algorithm to learn $u$ and return $u$}
		\EndProcedure
		\end{algorithmic}
		\end{algorithm}
In Lemma~\ref{lemma:mGE}, we prove our parameters guarantee that with overwhelming probability,
\begin{enumerate}
\item There exists a set ${\sf Bad}_u$ such that for all $s_u \in {\sf Bad}_u$, when we apply $D_{y_1, y_2, \cdots, y_m}$ to $s_u$, it does not compute $u$ correctly; 
\item For an overwhelming choice of $y_1, \cdots, y_m$, for all $u$, $\sum_{s_u \in {\sf Bad}_u} |w_{s_u}|^2 = O(q^n e^{-m} + q^{-n}) = \negl(n)$. Here $q^{-n}$ is the probability that the linear system is not full rank with $2 n$ samples.  %
\end{enumerate}
Therefore, {for an overwhelming fraction of $A, y_1, \cdots, y_m$}, the resulting state is:
\begin{eqnarray*}
\ket{\phi}&:=& q^{-n/2} \cdot D_{y_1, y_2, \cdots, y_m} \sum_{u \in \mathbb{Z}_q^n} \sum_{s_u \in \mathbb{Z}_q^m} w_{s_u} \ket{ s_u, u }  \\
	&=&  q^{-n/2} \sum_{u \in \mathbb{Z}_q^n} \left( \sum_{s_u \not\in {\sf Bad}_u} w_{s_u}  \ket{ s_u, 0 } + \sum_{s_u \in {\sf Bad}_u} w_{s_u}  \ket{ s_u, D_{y_1,\cdots,y_m}(s_u) }  \right)  \\
	&=&  q^{-n/2} \sum_{u \in \mathbb{Z}_q^n} \left( \sum_{s_u} w_{s_u}  \ket{ s_u, 0 } + \negl_u(n) \ket{ {\sf err}_u } \right)  \\
	&=&  q^{-n/2} \sum_{u \in \mathbb{Z}_q^n} \sum_{s_u} w_{s_u} \ket{s_u, 0}  + \negl(n) \ket{ {\sf err} }.
\end{eqnarray*}
Here $\negl_u(n)$ returns a complex number whose norm is negligible in $n$, $\ket {{\sf err}_u}$ is some unit vector. Similarly, it is the case for $\negl(n)$ and $\ket{{\sf err}}$.

\item Finally, we just apply $ \bigotimes_{i=1}^m U_{i}^{-1}$ to uncompute the projections and get 
\begin{equation*}
	 \bigotimes_{i=1}^m U_{i}^{-1} \ket{ \phi } = \sum_{u\in\Z_q^n} \bigotimes_{i =1, ..., m}\ket{ \psi_{(u^T A)_i} } \otimes \ket{0}  + \negl(n) \ket{{\sf err}'}.
\end{equation*}
\end{enumerate}
Thus, with overwhelming probability, we get a state close to $\sum_{u\in\Z_q^n} \sum_{e\in\Z_q^m} f(e) \ket{u^T A + e}$.  It completes the description of our algorithm. 

\paragraph{The analysis. }
Let us begin with an explanation of the properties of the unitary matrices $U_i$ defined in Eqn~\eqref{eqn:U_i_LWE}. Recall from Eqn.~\eqref{eqn:defpsi_vs} that
\begin{equation*}
\ket{ \psi_v } = \sum_{e\in\Z_q} f(e) \ket{ (v+e) \mod q}.
\end{equation*}

Let $W_i := \sum_{j = 0}^{q-1}\ket{j}\bra{\psi_{y_i+j}}$.
In other words, $W^T_i = \begin{pmatrix} \ket{\psi_{y_i}}, \cdots, \ket{\psi_{y_i + q-1}}\end{pmatrix}$.
Then $U_i^T = \ngs{W_i^T}$.  
We would like to show that the length of the GSO of $\ket{\psi_{y_i+q-1}}$, i.e., the length of the last column of $\gs{W_i^T}$, is non-negligible. 
\begin{lemma}
$\|\gs{ \ket{\psi_{y_i+q-1}} }\|_2 \geq \min_{z\in\Z_q} |\hat f(z)| = \eta $.
\end{lemma}
\begin{proof}
Note that $W_i^T$ is a circulant matrix. The eigenvalues of $W_i^T$ are $\set{ \sqrt{q}\cdot \hat{f}(z) }_{z \in\Z_q}$ (see Fact \ref{fact:circulant_eigen}). Therefore, by applying Lemma~\ref{lemma:GSO_cir_ev}, we have 
$\|\gs{ \ket{\psi_{y_i+q-1}} }\|_2 \geq \min_{z\in\Z_q} |\hat{f}(z)| = \eta$.
\end{proof}

Next, we relate the GSO of $\ket{\psi_{y_i+q-1}}$ to the probability of getting desirable samples in Algorithm~\ref{alg:GE}.
\begin{lemma}\label{lemma:GSOandprob}
For any fixed $x_1, \cdots, x_m \in\Z_q$, 
\begin{align*} \Pr_{y_1, \cdots, y_m \in \Z_q}\left[ z \geq \Omega\left( m\cdot (\eta^2/q) \right)  \,\wedge\, \forall i, s_i \gets M_{\sf st} \circ U_{y_i} \ket {\psi_{x_i}}\right] \geq 1 - O(e^{-m}),
\end{align*}
where $z$ is defined as the number of outcomes such that $ s_{i} = q-1 $ among all $s_1, \cdots, s_m$ and $M_{\sf st}$ is a measurement operator in the computational basis. 
\end{lemma}
\begin{proof}
For $i = 1, ..., m$, we have
\begin{eqnarray*}
& & \Pr_{y_i}[ y_i+q-1 = x_i ]\cdot \Pr\left[ s_{i} = q-1  \,\wedge\, s_i \gets M_{\sf st} \circ U_{y_i} \ket {\psi_{x_i}} \mid y_i+q-1 = x_i \right] \\
&=& \frac{1}{q}\cdot |\bk{\alpha_{i,q-1}}{\psi_{y_i+q-1}}|^2 = \frac{1}{q}\cdot \|\gs{ \ket{\psi_{y_i+q-1}} }\|_2^2  \geq \frac{\eta^2}{q}.
\end{eqnarray*}
The lemma then follows Chernoff bound.
\end{proof}

\begin{lemma}\label{lemma:mGE}
When $m\in \Omega\left( n \cdot q / \eta^2 \right)\subseteq \poly(n)$, for an overwhelming fraction of all possible $A, y_1, \cdots, y_m$, we have: for all $u$,
$\sum_{s_u \in {\sf Bad}_u} |w_{s_u}|^2 \leq \negl(n)$.
\end{lemma}
\begin{proof}
It follows from Lemma~\ref{lemma:GSOandprob} that when $m\in \Omega\left( n \cdot q / \eta^2 \right)$, we have $\geq 2\cdot n$ samples where $(u^T A)_i = y_i-1 \pmod q$ with overwhelming probability. Thus, we can use Gaussian elimination to compute $u$. Therefore $\sum_{s_u \in {\sf Bad}_u} |w_{s_u}|^2 \leq \negl(n)$.
\end{proof}
This completes the proof of Theorem~\ref{thm:solvingLWEstate}.
\end{proof}

\subsection{Examples of error distributions of special interest}\label{sec:QAforLWEstate_special}

We give some examples of error amplitude $f$ where $\min_{y\in\Z_q} |\hat{f}(y)|$ is non-negligible and $q$ is polynomially large.
The first example is where $f$ is the bounded uniform distribution. 

\begin{corollary}\label{thm:solvingLWEstate_Buniform}
Let $q$ be a polynomially large modulus. Let $B\in\Z$ such that $0<2B+1<q$ and $\gcd(2B+1, q) = 1$. 
Let $f:\Z_q\to \R$ be $f(x) := 1/\sqrt{2B+1}$ where $x\in [-B, B]\cap\Z$ and $0$ elsewhere.
Let $m\in \Omega\left( n \cdot q^4 \cdot(2B+1) \right)\subseteq \poly(n)$, there exist polynomial-time quantum algorithms that solve $\LWEstate_{n,m,q,f}$ and $\QLWE_{n,m,q,f}$.
\end{corollary}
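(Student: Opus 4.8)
The plan is to obtain Corollary~\ref{thm:solvingLWEstate_Buniform} as a direct instantiation of Theorem~\ref{thm:solvingLWEstate}. Two hypotheses of that theorem must be verified for this particular $f$: that the error state $\sum_{e\in\Z_q}f(e)\ket{e}$ is efficiently (approximately) constructible, and that $\eta:=\min_{y\in\Z_q}|\hat f(y)|$ is non-negligible and in fact large enough that $\Omega\!\left(n\cdot q/\eta^2\right)$ is contained in $O\!\left(n\cdot q^4\cdot(2B+1)\right)$. Constructibility is immediate: the state is a uniform superposition over the contiguous interval $[-B,B]\cap\Z$ reduced mod $q$, and since $q\in\poly(n)$ it lives in a $\poly(n)$-dimensional space, so its preparation unitary can be approximated within negligible error by $\poly(n)$ elementary gates via the statement recalled in Section~\ref{sec:prelim}.

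The substantive step is the lower bound on $\eta$. Substituting the definition of $f$ into the DFT gives, for every $y\in\Z_q$,
\[
\hat f(y) \;=\; \frac{1}{\sqrt{q(2B+1)}}\sum_{x=-B}^{B}\omega_q^{xy},
\]
so $|\hat f(0)| = \sqrt{(2B+1)/q}\ge 1/\sqrt q$, and for $y\neq 0$ the geometric sum evaluates to a Dirichlet kernel, giving
\[
|\hat f(y)| \;=\; \frac{1}{\sqrt{q(2B+1)}}\cdot\frac{\bigl|\sin(\pi(2B+1)y/q)\bigr|}{\bigl|\sin(\pi y/q)\bigr|}.
\]
Here the hypothesis $\gcd(2B+1,q)=1$ is exactly what is needed: it guarantees $(2B+1)y\not\equiv 0\pmod q$ whenever $y\neq 0$, so the numerator equals $|\sin(\pi t/q)|$ for $t=(2B+1)y\bmod q\in\{1,\dots,q-1\}$, hence is at least $\sin(\pi/q)=\sin(\pi\min(t,q-t)/q)\big|_{\min=1}\ge 2/q$ (using concavity of $\sin$ on $[0,\pi/2]$); the denominator is at most $1$. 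Therefore $|\hat f(y)|\ge \frac{2}{q\sqrt{q(2B+1)}}$ for every $y$, i.e. $\eta\ge 2/(q^{3/2}\sqrt{2B+1})$, which is non-negligible, and $n\cdot q/\eta^2 \le \tfrac14\, n\, q^4\,(2B+1)$. Consequently $m\in\Omega\!\left(n\cdot q^4\cdot(2B+1)\right)$ implies $m\in\Omega\!\left(n\cdot q/\eta^2\right)$, and Theorem~\ref{thm:solvingLWEstate} then yields the claimed polynomial-time quantum algorithms for $\LWEstate_{n,m,q,f}$ and $\QLWE_{n,m,q,f}$.

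I expect the only place the argument could go wrong is the role of the coprimality condition: without $\gcd(2B+1,q)=1$ some nonzero $y$ would make the Dirichlet numerator vanish, forcing $\hat f(y)=0$ and collapsing the filtering step (the last Gram--Schmidt vector of the circulant matrix $M_f$ from Eqn.~\eqref{eqn:M_f} would be zero). So I would isolate the numerator bound as a short standalone claim, checking carefully that $t=(2B+1)y\bmod q$ ranges over nonzero residues as $y$ does, and that $|\sin(\pi t/q)|\ge\sin(\pi/q)$ for all such $t$ via $|\sin(\pi t/q)|=\sin(\pi\min(t,q-t)/q)$. Everything else is bookkeeping with the constant factors.
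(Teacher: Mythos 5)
Your proposal is correct and follows essentially the same route as the paper's own proof: instantiate Theorem~\ref{thm:solvingLWEstate}, evaluate $\hat f$ as a Dirichlet kernel, and use $\gcd(2B+1,q)=1$ to lower-bound the numerator by $\sin(\pi/q)$, yielding $\eta\ge\Omega\!\left(1/(q^{3/2}\sqrt{2B+1})\right)$. Your treatment is in fact slightly more careful than the paper's on the point you flagged (reducing $(2B+1)y$ mod $q$ before bounding the sine, and noting constructibility of the error state explicitly), but the argument is the same.
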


\begin{proof}
The QFT of $f$ is
\begin{equation}\label{eqn:FTofUB}
\forall y\in\Z_q, ~~\hat f(y) := \sqrt{\frac{ 1 }{q\cdot (2B+1)}}\cdot \sum_{x = -B}^B \omega_q^{xy}
   = \sqrt{\frac{ 1 }{q\cdot (2B+1)}}\cdot \frac{\sin\left( \frac{2\pi}{q} \cdot\frac{2B+1}{2} \cdot y \right)}{\sin\left( \frac{2\pi}{q}\cdot \frac{y}{2} \right)}.  
\end{equation}
Here we use the identity: $1 + 2 \cos x + \cdots + 2 \cos {n x} = \sin{\left((n + \frac{1}{2}) x\right)} / \sin{\left(\frac{x}{2} \right)}$.

Note that when $y= 0$, $\hat f(y) = \sqrt{\frac{ 2B+1 }{q}}$. When $y\in\set{1, ..., q-1}$, the denominator satisfies $0<\sin\left( \frac{2\pi}{q}\cdot \frac{y}{2} \right)\leq 1$; since $\gcd(2B+1, q) = 1$, we have $\frac{(2B+1)y}{q}\notin\Z$ for any $y \in \set{1, ..., q-1}$,  the numerator satisfies $\left| \sin\left( \frac{2\pi}{q} \cdot\frac{2B+1}{2} \cdot y \right) \right|\geq \left| \sin\left( \frac{\pi}{q} \right) \right| > \frac{1}{q}$. 

Therefore $\eta=\min_{y\in\Z_q}|\hat{f}(y)|\geq \sqrt{\frac{ 1 }{q\cdot (2B+1)}}\cdot\frac{ 1 }{q}$.
The corollary follows by plugging $\eta\geq \sqrt{\frac{ 1 }{q\cdot (2B+1)}}\cdot\frac{ 1 }{q}$ in Theorem~\ref{thm:solvingLWEstate}. 
\end{proof}

\begin{remark}
When $\gcd(2B+ 1,q) = v$ for some $v>1$, we have $\frac{(2B+1)y}{q}\in\Z$ for $q/v-1$ values of $y \in \set{1, ..., q-1}$. Therefore $\hat{f}(y)$ defined in Eqn.~\eqref{eqn:FTofUB} is 0 on $q/v-1$ values. It is not clear to us how to extend our algorithm to the case where $\gcd(2B+ 1,q) >1$.
\end{remark}

Other examples of $f$ where $\min_{y\in\Z_q} |\hat{f}(y)|$ is non-negligible and $q$ is polynomially large include Laplace and super-Gaussian functions. Their $q$-DFT is easier to express by first taking the continuous Fourier transform (CFT) of $f$, denoted as $g$, then discretize to obtain the DFT. Namely, for $y\in\Z_q$, $\hat{f}(y) = \frac{ \sum_{z\in y+q\Z} g(z/q) }{ \sum_{z\in \Z} g(z/q) }$. Let $0<B<q/n^c$ for some $c>0$. %
\begin{enumerate}
	\item Laplace: $f(x) = e^{-|x/B|}$, the CFT of $f$ is $g(y) \propto \frac{2}{1+4(\pi B y)^2}$. 
	\item Super-Gaussian: For $0<p< 2$, $f(x) = e^{-|x/B|^p}$, the CFT of $f$ is asymptotic to $g(y) \propto -\frac{ \pi^{-p-\frac{1}{2}} |By|^{-p-1} \Gamma(\frac{p+1}{2}) }{ \Gamma(-\frac{p}{2}) }$ (see, for example, \cite{miller2019kissing}).
\end{enumerate}

\iffull

\section{Solving \texorpdfstring{$\sistwo$}{SIS-INF} with Polynomial Moduli}\label{sec:sis_polyq_const_gap}

Let us now present our quantum algorithm for solving $\sistwo$. 

\begin{theorem}\label{thm:SISwithpolymod}
Let $c>0$ be a constant integer, $q>c$ be a polynomially large prime modulus. Let $m\in \Omega\left( (q-c)^{3} \cdot n^{{c + 1}} \cdot q\cdot \log q \right)\subseteq \poly(n)$, there is a polynomial time quantum algorithm that solves $\sistwo_{n,m,q,\frac{q-c}{2}}$. 
\end{theorem}

The algorithm uses the quantum reduction from SIS to LWE given in Lemma~\ref{lemma:SIStoLWEstate}. To generate the LWE state needed, we slightly modify the algorithm in \S\ref{sec:filtering_multiple} as follows. Let the LWE noise amplitude be the DFT of the bounded uniform distribution over a support of size $q-c$. In the algorithm for $\LWEstate$, we filter out $q-c$ possible values for some constant $c$ and then use Arora-Ge to learn the secret vector. The reason we only filter out $q-c$ values instead of $q-1$ values is explained in the analysis of the algorithm in Lemma~\ref{lemma:rank}.

\begin{theorem}\label{thm:theLWEstatementneeded}
Let $q$ be a polynomially large prime modulus. Let $B\in\Z$ be such that $q-(2B+1) = c$ is a constant. 
Let $f:Z_q\to \R$ be the bounded uniform distribution over $[-B, B]\cap\Z$. 
Let $m\in \Omega\left( (q-c)^{3} \cdot n^{{c + 1}} \cdot q\cdot \log q \right)\subseteq \poly(n)$.
There exist polynomial time quantum algorithms that solve $\LWEstate_{n,m,q,\hat f}$ and $\QLWE_{n,m,q,\hat f}$. 
\end{theorem}
\begin{proof}
Let $B$ be the bound of the infinity norm of the SIS solution such that $q-(2B+1) = c$ is a constant. Let $\cB:= [-B, B]\cap \Z$. Our goal is to generate an LWE state where the error distribution is the quantum Fourier transformation of the $B$-bounded uniform state. 

The algorithm for generating the LWE state is given as follows.
\begin{enumerate}
    \item The algorithm first prepares the following state: 
	\begin{equation*}
        \sum_{x \in \cB^m }  \ket{x} \otimes \sum_{u \in \mathbb{Z}_q^n} \ket{u},
	\end{equation*}
    where we can view the state on $x$ registers as $\sum_{x\in\Z_q^m}  f(x) \ket{x}$, $ f(x) = 1$ if each entry of $x$ is in $\cB^m$ and $ f(x) = 0$ otherwise. This state can be efficiently generated.   
    
    \item It then applies ${\sf QFT}_q^m$ on the $x$ registers and gets: 
	\begin{equation*}
        \left( {\sf QFT}_q^m \sum_{x \in \cB^m}  \ket{x} \right) \otimes \sum_{u \in \mathbb{Z}_q^n} \ket{u} = \left( \sum_{e\in\Z_q^m} \hat f(e)  \ket{e}\right) \otimes  \left(\sum_u  \ket{u}\right). 
	\end{equation*}
    where $\hat f(e) = \frac{1}{\sqrt{q}}\cdot \sum_{x\in\Z_q^m} \omega_q^{e x}  f(x)$. %
    
    \item It then adds $u^T A$ to the $e$ registers in superposition, the state is: 
	\begin{equation}\label{eqn:uA+estate}
          \sum_{u\in\Z_q^n} \sum_{e\in\Z_q^m} \hat f(e)  \ket{ u^T A + e} \otimes \ket{u}. 
	\end{equation}
    Similarly, let us define %
	\begin{equation}\label{eqn:defpsi_v}
    \text{for }v\in\Z_q\text{,  }\ket{ \psi_v } := \sum_{e\in\Z_q} \hat f(e) \ket{ (v + e) \bmod q },
	\end{equation}
	where we abuse the notations of $e$ and $\hat f(e)$ to represent a value and a function on $\Z_q$ instead of $\Z_q^m$. Therefore Eqn~\eqref{eqn:uA+estate} can also be written as 
	\begin{equation}\label{eqn:uA+estate_alt}
          \sum_{u\in\Z_q^n} \bigotimes_{i =1, ..., m}\ket{ \psi_{(u^T A)_i} } \otimes \ket{u}. 
	\end{equation}

	\item Pick $m$ uniformly random values $y_1, ..., y_m\in\Z_q$. For each $i\in 1, ..., m$, construct a unitary $U_{i} := \sum_{j=0}^{q-1} \ket{ j }\bra{ \alpha_{i,j} }$ where for $j = 0, ..., q-1$, 
	\begin{equation}\label{eqn:U_i}
		\ket{ \alpha_{i,j} } := 
		\begin{cases}
		\ngs{ \ket{ \psi_{y_i+j} } }, & \text{ for } 0\leq j \leq 2B; \\
		\text{An arbitrary $q$-dim unit vector orthogonal to } \set{ \ket{\alpha_{i,k}} }_{k = 0}^{j-1}, & \text{ for } 2B+1\leq j \leq q-1;
		\end{cases}
	\end{equation}

	\item For $i = 1, ..., m$, apply $U_{i}$ to the $i^{th}$ register, we get 
		\begin{eqnarray*}
		    U_i \ket{ \psi_{(u^T A)_i} } &=& U_i \left(  \sum_{j=0}^{q-1} \bk{\alpha_{i,j} }{ \psi_{(u^T A)_i} } \cdot \ket{\alpha_{i,j}} \right)  \\
			&=&  \left(  \sum_{j=0}^{q-1} \bk{\alpha_{i,j} }{ \psi_{(u^T A)_i} } \cdot \ket{ j } \right) =: \sum_{s_{u,i}\in\Z_q} w_{s_{u,i}} \ket{s_{u,i}}.
		\end{eqnarray*}
	
	\item Then we apply the quantum unitary implementation of the classical algorithm in~\cite{DBLP:conf/icalp/AroraG11} to the superposition $\sum_u \sum_{s_{u}\in\Z_q^m} w_{s_{u}} \ket{s_{u}} := \sum_u \bigotimes_{i=1}^m \sum_{s_{u,i}\in\Z_q} w_{s_{u,i}} \ket{s_{u,i}}  $. 
	Let the algorithm $D_{y_1, y_2, \cdots, y_m}$ be the following: 
		\begin{algorithm}
		\caption{Learning $u$ from $u^T A$}\label{alg:D3}
		\begin{algorithmic}[1]
		\Procedure{$D_{y_1, y_2, \cdots, y_m}$}{$\set{ s_{u,i} }_{1\leq i\leq m}$}
		\For {each $i = 1 , 2, \cdots, m$}
			\If{If $ s_{u,i} = 2B $ (meaning that $(u^T A)_i\in \set{ y_i+2B, ..., y_i+q-1}$)}
				\State{Let $a_i$ and $y_i$ } be a sample of LWE
			\EndIf
		\EndFor
		\State{Run the Arora-Ge algorithm to learn $u$ and return $u$}
		\EndProcedure
		\end{algorithmic}
		\end{algorithm}

	In Lemma~\ref{lemma:mAG} we prove our parameters guarantee that with overwhelming probability,
\begin{enumerate}
\item There exists a set ${\sf Bad}_u$ such that for all $s_u \in {\sf Bad}_u$, when we apply this algorithm to $s_u$, it does not compute $u$ correctly; 
\item $\sum_{s_u \in {\sf Bad}_u} |w_{s_u}|^2 \leq \negl(n)$, for an overwhelming choice of $A, y_1, \cdots, y_m$. %
\end{enumerate}
Therefore, the resulting state is:
\begin{eqnarray*}
\ket{\phi}&:=& q^{-n/2} \cdot  D_{y_1, y_2, \cdots, y_m} \sum_{u \in \mathbb{Z}_q^n} \sum_{s_u \in \mathbb{Z}_q^m} w_{s_u} \ket{ s_u, u }  \\
	&=& q^{-n/2} \cdot \sum_{u \in \mathbb{Z}_q^n} \left( \sum_{s_u \not\in {\sf Bad}_u} w_{s_u}  \ket{ s_u, 0 } + \sum_{s_u \in {\sf Bad}_u} w_{s_u}  \ket{ s_u, D_{y_1,\cdots,y_m}(s_u) }  \right)  \\
	&=& q^{-n/2} \cdot \sum_{u \in \mathbb{Z}_q^n} \left( \sum_{s_u} w_{s_u}  \ket{ s_u, 0 } + \negl_u(n) \ket{ {\sf err}_u } \right)  \\
	&=& q^{-n/2} \cdot \sum_{u \in \mathbb{Z}_q^n} \sum_{s_u} w_{s_u} \ket{s_u, 0}  + \negl(n) \ket{ {\sf err} }.
\end{eqnarray*}
Here $\negl_u(n)$ is a complex number whose norm is negligible in $n$, $\ket {{\sf err}_u}$ is some unit vector. Similarly, it is the case for $\negl(n)$ and $\ket{\sf err}$.

\item Finally, we just apply $ \bigotimes_{i=1}^m U_{i}^{-1}$ to uncompute the projections and get 
\begin{equation*}
	 \bigotimes_{i=1}^m U_{i}^{-1} \ket{ \phi } = \sum_{u\in\Z_q^n} \bigotimes_{i =1, ..., m}\ket{ \psi_{(u^T A)_i} } \otimes \ket{0}  + \negl(n) \ket{{\sf err}'}.
\end{equation*}
\end{enumerate}
So with overwhelming probability, we get a state close to $\sum_{u\in\Z_q^n} \sum_{e\in\Z_q^m} \hat f(e) \ket{u^T A + e}$.  It completes the description of our algorithm. 

\paragraph{The analysis. }
Let us begin with an explanation of the properties of the unitary matrices $U_i$ defined in Eqn.~\eqref{eqn:U_i}. Recall from Eqn.~\eqref{eqn:defpsi_v} that
\begin{equation*}
\ket{ \psi_v } = \sum_{e\in\Z_q} \sum_{x = -B}^{B} \sqrt{\frac{1}{q}}\cdot \sqrt{\frac{1}{2B+1}} \cdot \omega_q^{e x}  \ket{ v+e }. 
\end{equation*}

Let $W_i := \sum_{j = 0}^{q-1}\ket{j}\bra{\psi_{y_i+j}}$.
In other words, $W^T_i = \begin{pmatrix} \ket{\psi_{y_i}}, \cdots, \ket{\psi_{y_i + q-1}}\end{pmatrix}$.
Then the first $2B+1$ columns of $U_i^T$ are the same as the first $2B+1$ columns of $\ngs{W_i^T}$. 
Let us first understand why we choose ``$2B+1$'' columns. 
\begin{lemma}\label{lemma:rank}
The rank of $W_i$ is $2B+1$.
\end{lemma}
\begin{proof}
By the definition of $\ket{ \psi_{v} }$, 
\[ \QFT_q^{-1} \ket{ \psi_{v} } = \sum_{x = -B}^{B} \sqrt{\frac{1}{2B+1}} \cdot \omega_q^{-v x} \ket{ x }.\]
Therefore, if we define $Q \in \C^{q\times q}$ as (we think of the indexes of $Q$ as values in $\Z_q$)
\begin{equation*}
Q_{j,\ell} = \begin{cases}\sqrt{\frac{1}{2B+1}} \cdot \omega_q^{j\ell} & \text{ if } y_i-B\leq \ell\leq y_i+B \\
			 0 & \text{ else }
			 \end{cases}.
\end{equation*}
Then $W_i^T = \QFT_q \cdot Q $. Therefore the rank of $W_i$ is $2B+1$.
\end{proof}

This explains why we define $\ket{ \alpha_{i,j} }$ in Eqn~\eqref{eqn:U_i} to be the normalized GSO of the first $2B+1$ vectors in $\set{ \ket{\psi_{y_i+j}} }_{j = 0, ..., q-1}$ plus $q-(2B+1)$ arbitrarily orthogonal vectors - we can only guarantee the first $2B+1$ columns are linearly independent. It also explains why we choose to let ``$s_{u,i} = 2B$'' be the successful condition in Algorithm~\ref{alg:D3} - numbers in $\set{2B+1, ..., q-1}$ will never be the outcome of $U_i\ket{\psi_v}$ for any $v\in\Z_q$ since any $\psi_v$ is in the span of the first $2B+1$ vectors of $U_i$.

Next we show that the length of the GSO of $\ket{\psi_{y_i+2B}}$, i.e., the length of the $(2B+1)^{th}$ column of $\gs{W_i^T}$, is non-negligible. 
\begin{lemma}
$\|\gs{ \ket{\psi_{y_i+2B}} }\|_2 \geq \frac{ q }{ (2B+1)^{1.5} \cdot 2^{q - 2B - 1} } $.
\end{lemma}
\begin{proof}
To bound $ \|\gs{ \ket{\psi_{y_i+2B}} }\|_2 $, we note that $W_i^T$ is a circulant matrix. 
By Fact \ref{fact:circulant_eigen}, the norm of non-zero eigenvalues of $W_i^T$ are all equal to $\sqrt{q}\cdot \frac{1}{\sqrt{2B+1}}$.
Therefore, by applying Lemma~\ref{lemma:GSO_cir_ev} with $k=2B+1$, $\lambda_i = \sqrt{\frac{q}{2B+1}}$, %
for $i = 0, ..., 2B$, we have $\|\gs{ \ket{\psi_{y_i+2B}} }\|_2 \geq \frac{ \sqrt{q} }{ (2B+1) \cdot 2^{q - 2B - 1} }\cdot \sqrt{\frac{q}{2B+1}} = \frac{ q }{ (2B+1)^{1.5} \cdot 2^{q - 2B - 1} }$.
\end{proof}

Next, we relate $\|\gs{ \ket{\psi_{y_i+2B}} }\|_2$ to the probability of getting desirable samples in Algorithm~\ref{alg:D3}.
\begin{lemma}\label{lemma:GSOandprob_sis}
For any fixed $x_1, \cdots, x_m \in\Z_q$, 
\begin{align*} \Pr_{y_1, \cdots, y_m \in \Z_q}\left[ z \geq \Omega\left( m\cdot \frac{ q }{ (2B+1)^{3} \cdot 2^{2c} } \right)  \,\wedge\, \forall i, s_i \gets M_{\sf st} \circ U_{y_i} \ket {\psi_{x_i}}\right] \geq 1 - O(e^{-m}),
\end{align*}
where $z$ is defined as the number of outcomes such that $ s_{i} = 2B $ among all $s_1, \cdots, s_m$ and $M_{\sf st}$ is a measurement operator in the computational basis. 
\end{lemma}
\begin{proof}
For any fixed $\ket{\psi_{x_i}}$ and $y_i$ such that $x_i \in\{ y_i + 2 B, \cdots, y_i + q - 1\}$, we bound the probability that the measurement gives $2 B$. Only in this case, we can get information: namely, $x_i$ is in the set $G = \{y_i + 2 B, \cdots, y_i + q - 1\}$. %

For $i = 1, ..., m$, we have 
\begin{eqnarray*}
& & \Pr_{y_i}[ x_i\in G ]\cdot \Pr\left[ s_{i} = 2B  \,\wedge\, s_i \gets M_{\sf st} \circ U_{y_i} \ket {\psi_{x_i}} \mid x_i\in G \right] \\
&\geq& \Pr_{y_i}[ x_i = y_i+2B ]\cdot \Pr\left[ s_{i} = 2B  \,\wedge\, s_i \gets M_{\sf st} \circ U_{y_i} \ket {\psi_{x_i}} \mid x_i = y_i+2B \right] \\
&=& \frac{1}{q}\cdot |\bk{\alpha_{i,2B}}{\psi_{y_i+2B}}|^2 \\ 
&=& \frac{1}{q}\cdot \|\gs{ \ket{\psi_{y_i+2B}} }\|_2^2  \\
&\geq& \frac{ q }{ (2B+1)^{3} \cdot 2^{2c} }. 
\end{eqnarray*}

The lemma then follows Chernoff bound.
\end{proof}

\begin{lemma}\label{lemma:mAG}
Let $c = q - 2B - 1$ be a constant. When $m\in \Omega\left( (2B+1)^{3} \cdot n^{{c + 1}} \cdot q\cdot \log q \right)\subseteq \poly(n)$, the following holds for an overwhelming fraction of all $A, y_1, \cdots, y_m$: for every $u \in \Z_q^n$, $\sum_{s_u \in {\sf Bad}_u} |w_{s_u}|^2 \leq \negl(n)$. 
\end{lemma}
\begin{proof}
For each $i = 1, ..., m$, when $s_{u,i} = 2B$, it means that $(u^T A)_i\in \set{ y_i+2B, ..., y_i+q-1}$. 
Therefore, by setting $a_i$ and $y_i$ as an LWE sample, we know the error is in the set of $\set{2B, ..., q-1}$ of size ${c + 1}$.

Then, by Corollary~\ref{coro:AG11}, we know that when $m\in \Omega\left( (2B+1)^{3} \cdot n^{{c + 1}} \cdot q\cdot \log q \right)$, the Arora-Ge algorithm has sufficiently many samples for solving LWE with $c$ possible error terms, therefore $\sum_{s_u \in {\sf Bad}_u} |w_{s_u}|^2 \leq \negl(n)$.
\end{proof}
This completes the proof of Theorem~\ref{thm:theLWEstatementneeded}.
\end{proof}

\begin{proof}[Proof of Theorem~\ref{thm:SISwithpolymod}]
The proof of Theorem~\ref{thm:SISwithpolymod} follows Theorem~\ref{thm:theLWEstatementneeded} and the SIS to LWE reduction in Lemma~\ref{lemma:SIStoLWEstate}.
\end{proof}

\section{Solving Variants of Dihedral Coset Problems}\label{sec:EDCP}

We have already provided the background of the extrapolated dihedral coset problem (EDCP) in the introduction. Here let us recall the definition (cf.~Def.~\ref{def:EDCP}).

\begin{definition}[Extrapolated Dihedral Coset Problem]
Let $n\in\N$ be the dimension, $q\geq 2$ be the modulus, and a function $D:\Z_q\to \R$, consists of $m$ input states of the form
\[ \sum_{j\in\Z_q} D(j) \ket{j} \ket{ x+j\cdot s}, \]
where $x \in \Z_q^n$ is arbitrary and $s\in\Z_q^n$ is fixed for all $m$ states. We say that an algorithm solves $\EDCP_{n,m,q,D}$ if it outputs $s$ with probability $\poly(1/(n \log q))$ in time $\poly(n \log q)$.
\end{definition}

We show polynomial time quantum algorithms that solve EDCP with the following parameter settings.

\begin{theorem}\label{thm:DCPwithpolymod_fg}
Let $q$ be a polynomially large modulus. 
Let $f:\Z_q\to \R$ be such that the state $\sum_{e\in\Z_q} f(e)\ket{e}$ is efficiently constructible and $\eta:=\min_{z\in\Z_q}|\hat{f}(z)|$ is non-negligible. 
Let $m\in \Omega\left( n \cdot q / \eta^2 \right)\subseteq \poly(n)$. 
There is a polynomial time quantum algorithm that solves $\EDCP_{n,m,q,\hat f}$
\end{theorem}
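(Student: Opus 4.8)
The plan is to observe that Theorem~\ref{thm:DCPwithpolymod_fg} follows by composing the (folklore, implicit in~\cite{DBLP:conf/pkc/BrakerskiKSW18}) quantum reduction from $\EDCP$ to $\QLWE$ with the algorithm for $\QLWE$ from Theorem~\ref{thm:solvingLWEstate}. Concretely, I would first turn each of the $m$ given $\EDCP_{n,m,q,\hat f}$ samples into a single $\QLWE_{n,1,q,f}$ sample with the same secret $s$, and then invoke Theorem~\ref{thm:solvingLWEstate}.

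For the reduction, take one $\EDCP$ sample $\sum_{j\in\Z_q}\hat f(j)\ket{j}\ket{x+j\cdot s}$ and apply $\QFT_q^n$ to the second register. Using $\QFT_q^n\ket{v}=\frac{1}{\sqrt{q^n}}\sum_{y\in\Z_q^n}\omega_q^{\innerprod{v}{y}}\ket{y}$, the state becomes
\[
\frac{1}{\sqrt{q^n}}\sum_{y\in\Z_q^n}\omega_q^{\innerprod{x}{y}}\left(\sum_{j\in\Z_q}\hat f(j)\,\omega_q^{j\innerprod{s}{y}}\ket{j}\right)\otimes\ket{y}.
\]
Since $\sum_j|\hat f(j)|^2=\sum_e|f(e)|^2=1$ by Parseval, measuring the $y$-register yields a uniformly random $a_i\in\Z_q^n$ together with the post-measurement state $\sum_{j\in\Z_q}\hat f(j)\,\omega_q^{j\innerprod{s}{a_i}}\ket{j}$; the factor $\omega_q^{\innerprod{x}{a_i}}$ is a global phase and is discarded, even though $x$ may differ between samples. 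Applying $\QFT_q^{-1}$ to this single $\Z_q$-register and invoking Fourier inversion $\frac{1}{\sqrt q}\sum_j\hat f(j)\,\omega_q^{-jz}=f(z)$ with $z=e-\innerprod{s}{a_i}$, one obtains $\sum_{e\in\Z_q}f(e)\ket{\innerprod{s}{a_i}+e\bmod q}$, which is exactly a $\QLWE_{n,1,q,f}$ sample for secret $s$ with uniformly random public vector $a_i$.

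Carrying out this transformation on all $m$ input $\EDCP$ samples produces $m$ independent $\QLWE_{n,m,q,f}$ samples with a common secret $s$ and public vectors $a_1,\dots,a_m$ uniform in $\Z_q^n$. Since $m\in\Omega\!\left(n\cdot q/\eta^2\right)$ and the state $\sum_{e\in\Z_q}f(e)\ket{e}$ is efficiently constructible with $\eta=\min_{z}|\hat f(z)|$ non-negligible, Theorem~\ref{thm:solvingLWEstate} gives a polynomial-time quantum algorithm that recovers $s$ with overwhelming probability, hence with probability $\poly(1/(n\log q))$, as required. The only steps needing care are the Fourier-transform bookkeeping — checking that the two QFTs compose to produce precisely the state $\ket{\psi_{\innerprod{s}{a_i}}}$ with error amplitude $f$, rather than a reflected or rescaled variant — and verifying that the arbitrary shift $x$ contributes only an irrelevant per-sample global phase; both are routine once set up correctly, and there is no genuinely hard step beyond the appeal to Theorem~\ref{thm:solvingLWEstate}.
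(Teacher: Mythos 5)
Your proposal is correct and follows essentially the same route as the paper: the BKSW-style reduction (the paper's Lemma~\ref{lemma:EDCPtoLWE}) turning each EDCP sample into a $\QLWE$ sample via a QFT on the $\Z_q^n$ register, a measurement yielding a uniform $a_i$, and a final QFT on the $\Z_q$ register, followed by an appeal to Theorem~\ref{thm:solvingLWEstate}. The only (immaterial) difference is that you apply $\QFT_q^{-1}$ in the last step to land exactly on error amplitude $f$, whereas the paper applies $\QFT_q$ and obtains the reflected amplitude with a negated secret, which is equivalent.
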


\begin{theorem}\label{thm:DCPwithpolymod}
Let $c>0$ be a constant integer, $q>c$ be a polynomially large prime modulus. Let $m\in \Omega\left( (q-c)^{3} \cdot n^{{c + 1}} \cdot q\cdot \log q \right)\in \poly(n)$, there is a polynomial time quantum algorithm that solves $\EDCP_{n,m,q,D}$ where $D$ is the uniform distribution on $[0,q-c)\cap\Z$.
\end{theorem}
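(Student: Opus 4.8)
The plan is to derive Theorem~\ref{thm:DCPwithpolymod} by composing (i) a trivial renormalization of the input distribution, (ii) the quantum reduction from $\EDCP$ to LWE-like states of Brakerski et al.~\cite{DBLP:conf/pkc/BrakerskiKSW18}, and (iii) the solver of Theorem~\ref{thm:theLWEstatementneeded}. First I would reduce the distribution $D$, uniform on $[0,q-c)\cap\Z$, to the $B$-bounded uniform distribution $f$ on $[-B,B]\cap\Z$ with $2B+1=q-c$: since $D(j)=f(j-B)$, substituting $j=j'+B$ in a sample $\sum_{j\in\Z_q}D(j)\ket{j}\ket{x+js}$ and then applying the unitary $\ket{k}\mapsto\ket{k-B}$ to the index register yields $\sum_{j'}f(j')\ket{j'}\ket{(x+Bs)+j's}$, an $\EDCP_{n,m,q,f}$ sample with the same secret $s$ and a still-arbitrary offset $x+Bs$. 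So it suffices to solve $\EDCP_{n,m,q,f}$.

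Next I would invoke the $\EDCP\to\QLWE$ reduction on each of the $m$ samples $\sum_{j\in\Z_q}f(j)\ket{j}\ket{x+j\cdot s}$: apply $\QFT_q^n$ to the second register and measure it. The residual first register $\sum_j f(j)\,\omega_q^{j\langle s,y\rangle}\ket{j}$ has norm $\sum_j|f(j)|^2$, independent of $\langle s,y\rangle$, so the outcome $y\in\Z_q^n$ is uniform; dropping the global phase $\omega_q^{\langle x,y\rangle}$ and applying $\QFT_q$ to the first register gives $\sum_{e\in\Z_q}\hat f(e)\,\ket{\langle -y,s\rangle+e}$, which is exactly a $\QLWE_{n,1,q,\hat f}$ sample with uniform public vector $a:=-y$ and secret $u:=s$. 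Performing this independently on all $m$ samples produces an instance of $\QLWE_{n,m,q,\hat f}$ with independent uniform $a_i$ and secret $s$. Since $q$ is prime and $0<2B+1<q$ we have $\gcd(2B+1,q)=1$, so Theorem~\ref{thm:theLWEstatementneeded} applies and recovers $s$ in quantum polynomial time whenever $m\in\Omega\!\big((q-c)^3 n^c q\log q\big)$, which is exactly the regime in the statement. (Theorem~\ref{thm:DCPwithpolymod_fg} does not help here: casting the input distribution as $\hat f$ would force $f$ to be the inverse DFT of a bounded-support uniform distribution, so $\min_z|\hat f(z)|=\min_z|D(z)|=0$ and its non-negligibility hypothesis fails; one really needs the variant Theorem~\ref{thm:theLWEstatementneeded}, which filters out $q-c$ of the $q$ possible inner products and then runs Arora-Ge on the remaining $c$-valued error.)

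I expect the only real work — the theorem is otherwise a concatenation of known ingredients — to be carrying out the second step rigorously rather than merely citing it: one must verify that the $\QLWE$ instance produced really has (a) public vectors $a_i$ that are uniform and mutually independent, which hinges precisely on the $\langle s,y\rangle$-independence of the residual first-register norm noted above, and (b) a single secret $s$ shared across all $m$ samples despite the arbitrary per-sample offsets $x^{(i)}$ (these disappear into global phases). Given that, the number of samples demanded by Theorem~\ref{thm:theLWEstatementneeded} coincides by design with the $m$ in Theorem~\ref{thm:DCPwithpolymod}, and composing the renormalization, the reduction, and the solver finishes the proof.
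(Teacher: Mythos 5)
Your proposal is correct and follows essentially the same route as the paper, which proves this theorem by composing the $\EDCP\to\QLWE$ reduction (Lemma~\ref{lemma:EDCPtoLWE}) with the solver of Theorem~\ref{thm:theLWEstatementneeded}. You are in fact slightly more careful than the paper's one-line proof: you explicitly handle the shift from the uniform distribution on $[0,q-c)\cap\Z$ to the centered bounded uniform $f$ on $[-B,B]\cap\Z$ (absorbing it into the arbitrary offset $x$), and you verify that the measured $y$ is uniform so the resulting $\QLWE$ public vectors are independent and uniform — both details the paper leaves implicit.
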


We use the quantum reduction from EDCP to LWE of Brakerski et al.~\cite{DBLP:conf/pkc/BrakerskiKSW18}. Let us recall their reduction.
\begin{lemma}\label{lemma:EDCPtoLWE}
Let $n,m,q$ be integers. Let $f: \Z_q \to \R$. If there is a polynomial time quantum algorithm that solves $\QLWE_{n, m, q, \hat{f}}$, then there is a polynomial time quantum algorithm that solves $\EDCP_{n,m,q,f}$.
\end{lemma}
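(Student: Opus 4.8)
The plan is to implement the (implicit) quantum reduction of Brakerski et al.: show that a single $\EDCP_{n,m,q,f}$ sample can be converted, by local unitaries plus one measurement, into a single $\QLWE_{n,m,q,\hat f}$ sample carrying the same secret, and then invoke the assumed $\QLWE$ solver on the $m$ resulting samples.

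First I would take one EDCP sample $\sum_{j\in\Z_q} f(j)\ket{j}\ket{x+j\cdot s}$, where $x\in\Z_q^n$ is arbitrary and unknown, and apply $\QFT_q^n$ to the second register, turning $\ket{x+j\cdot s}$ into $q^{-n/2}\sum_{a\in\Z_q^n}\omega_q^{\langle a,x\rangle}\,\omega_q^{j\langle a,s\rangle}\ket{a}$. Measuring the second register yields a uniformly random $a\in\Z_q^n$ — the outcome is uniform because $\sum_j|f(j)|^2=1$ — and collapses the first register, up to the global phase $\omega_q^{\langle a,x\rangle}$, to $\sum_{j\in\Z_q} f(j)\,\omega_q^{j\langle a,s\rangle}\ket{j}$.

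Next I would apply $\QFT_q$ to this register. With the transform conventions fixed in the preliminaries, the amplitude of $\ket{y}$ becomes $\tfrac1{\sqrt q}\sum_j \omega_q^{j(y+\langle a,s\rangle)}f(j)=\hat f\!\left(y+\langle a,s\rangle\right)$, so after reindexing $v=y+\langle a,s\rangle$ the state equals $\sum_{v\in\Z_q}\hat f(v)\,\ket{\langle -a,s\rangle+v \bmod q}$ (still up to the same global phase). This is exactly an LWE-like state for the uniformly random vector $-a\in\Z_q^n$, amplitude $\hat f$, and secret $s$, so I would output the pair $\bigl(-a\bmod q,\ \text{this register}\bigr)$ as a $\QLWE$ sample.

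Finally, doing this independently to each of the $m$ EDCP samples produces $m$ independent $\QLWE_{n,m,q,\hat f}$ samples with common secret $s$; independence holds because the per-sample coset shifts $x$ affect only an irrelevant global phase. Running the assumed polynomial-time $\QLWE_{n,m,q,\hat f}$ solver on these samples recovers $s$ with probability $\poly(1/(n\log q))$ in time $\poly(n\log q)$, which I then output, establishing the lemma. There is no real obstacle here, only two bookkeeping points that need care: (i) getting the $\QFT$ sign conventions right so that the measured vector enters the LWE state with the correct sign \emph{without} having to assume $\hat f$ is symmetric — handled by reporting $-a$ rather than $a$; and (ii) verifying that the unknown shift $x$ genuinely decouples into a global phase, which is what makes both the uniformity of the LWE vector and the mutual independence of the samples go through.
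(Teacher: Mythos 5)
Your proposal is correct and follows essentially the same reduction as the paper's proof of Lemma~\ref{lemma:EDCPtoLWE}: apply $\QFT_q^n$ to the coset register, measure it to obtain a uniform $a$ while the unknown shift $x$ decouples into a global phase, then apply $\QFT_q$ to the label register to obtain the LWE-like state with amplitude $\hat f$, and feed the $m$ resulting samples to the assumed $\QLWE$ solver. The only difference is your explicit handling of the sign (reporting $-a$ as the LWE vector), a bookkeeping point the paper leaves implicit and which is harmless since $-a$ is uniform whenever $a$ is.
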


\begin{proof}
The proof is the same as the proof of \cite[Theorem~4]{DBLP:conf/pkc/BrakerskiKSW18} except that we generalize the function $f$ in $\EDCP_{n,m,q,f}$ from discrete Gaussian to a general function. 

Given an instance of $\EDCP_{n,m,q,f}$
\begin{equation}  
	\set{ \sum_{j\in\Z_q} f(j) \ket{j} \ket{ x_i+j\cdot s} }_{i = 1, ..., m} 
\end{equation}  
For each $i = 1, ..., m$, first apply $\QFT_q^n$ over the second register, which gives
\begin{equation}  
	\sum_{a_i\in\Z_q^n} \sum_{j\in\Z_q} \omega_q^{\innerprod{a_i}{x_i+j\cdot s}}  f(j) \ket{j} \ket{ a_i} 
\end{equation}  
Then we measure the second register and omit a phase of $\omega_q^{\innerprod{a_i}{x_i}}$, we have
\begin{equation}  
	a_i\la U(\Z_q^n),  \sum_{j\in\Z_q} \omega_q^{\innerprod{a_i}{j\cdot s}}  f(j) \ket{j},
\end{equation}  
Apply $\QFT_q$ over $\ket{j}$, which gives
\begin{equation}  
	a_i\la U(\Z_q^n),~~ 
	\sum_{z\in\Z_q} \sum_{j\in\Z_q} \omega_q^{j\cdot (\innerprod{a_i}{s}+z)}  f(j) \ket{z}
=	\sum_{e\in\Z_q}   \hat{f}(e) \ket{ e-\innerprod{a_i}{s} },
\end{equation}  
where the equality is obtained by a change of variable $e = \innerprod{a_i}{s}+z \mod q$ and the definition of $\QFT_q$. This completes the proof of the lemma.
\end{proof}

\begin{proof}[Proof of Theorems~\ref{thm:DCPwithpolymod_fg} and~\ref{thm:DCPwithpolymod}]
They are the immediate applications of Lemma~\ref{lemma:EDCPtoLWE} on Theorem~\ref{thm:solvingLWEstate} and Theorem~\ref{thm:theLWEstatementneeded}.
\end{proof}

\fi

\iffull
\section*{Acknowledgement}
We sincerely thank G\'{a}bor Ivanyos for telling us the results in~\cite{DBLP:conf/esa/IvanyosPS18}. We would also like to thank Luowen Qian, L\'{e}o Ducas, and the anonymous reviewers for their helpful comments.
\fi

\bibliography{ref}

\appendix

\ifllncs
\section{Proof of \Cref{lemma:GSO_cir_ev}}
\label{proof:gramschmidt}

\fi 

\newcommand{\iter}[1]{^{(#1)}}

\section{An algorithm for solving SIS with non-trivial \texorpdfstring{$\ell_\infty$-norm}{infinite norm} bounds}\label{sec:SIS_infinite_alg}

We sketch an algorithm that we heard from Regev (personal communication) that solves $\sistwo_{n,m,q,T}$ when the modulus $q$ is a composite number and $m$ is very large. %
The idea of the algorithm is as follows. Let $\mat{A}\in\Z_q^{n\times m}$ denote the public matrix for SIS.
Assume $q = 2^k$. The algorithm starts by finding a combination of column vectors in $\mat{A}$ that is equal to $\ary{0}$ mod $2$. This zeros out the LSB (least significant bits). Then we find a combination of those combinations that makes the second bit zero, etc. Each time the effective width of $\mat{A}$ shrinks by a factor of $n+1$, so $m = (n+1)^{k}$ is needed to get a solution with $\ell_\infty$-norm 1. 

Here is a formal description of the algorithm that generalizes the idea to any composite $q$.

\begin{theorem}\label{thm:sisinfff}
Let $n$ be an integer. Let $q = \prod_{i\in[k]}p_i$ for some $k>1$ and (possibly composite and duplicated) factors $p_i$. Let $m = n^k$, $T = \prod_{i\in[k]} \lowerrounding{p_i/2}$. 
There is a classical algorithm that solves $\sistwo_{n-1, m, q, T}$ in time $\poly(m)$.
\end{theorem}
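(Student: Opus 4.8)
The plan is to prove Theorem~\ref{thm:sisinfff} by the layered elimination sketched just before it, made precise and, crucially, extended to arbitrary (possibly composite, possibly repeated) factors $p_i$. Write $q_0 = 1$ and $q_j = p_1 p_2 \cdots p_j$ for $1 \le j \le k$, so $q_k = q$, and fix once and for all integer representatives $a_1, \ldots, a_m \in \Z^{n-1}$ of the columns of $A$. I would maintain, for $i = 0, 1, \ldots, k$, a list $L_i$ of exactly $n^{k-i}$ vectors in $\Z^{n-1}$ satisfying the invariant: (a) every $v \in L_i$ satisfies $v \equiv 0 \pmod{q_i}$; (b) every $v \in L_i$ can be written $v = \sum_t \gamma_{v,t}\, a_t$ with $|\gamma_{v,t}| \le \prod_{l=1}^{i} \lfloor p_l/2 \rfloor$; (c) the supports $\{t : \gamma_{v,t} \ne 0\}$ of distinct elements of $L_i$ are pairwise disjoint; and (d) the coefficient vector $(\gamma_{v,t})_t$ of each $v \in L_i$ is nonzero. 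The base case $L_0 = (a_1, \ldots, a_m)$ with $\gamma_{a_t,\cdot} = e_t$ satisfies (a)--(d) trivially, using $q_0 = 1$ and $m = n^k$.

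The elimination step from $L_{i-1}$ to $L_i$: partition $L_{i-1}$ (which has $n^{k-i+1} = n\cdot n^{k-i}$ elements) arbitrarily into $n^{k-i}$ groups of $n$. For a group $\{v_1, \ldots, v_n\}$, by (a) each $w_j := v_j / q_{i-1} \in \Z^{n-1}$; let $M \in (\Z/p_i\Z)^{(n-1)\times n}$ have columns $\bar w_1, \ldots, \bar w_n$. Since $|(\Z/p_i\Z)^{n-1}| = p_i^{\,n-1} < p_i^{\,n} = |(\Z/p_i\Z)^{n}|$, the map $c \mapsto Mc$ is not injective, so there is a nonzero $\bar c \in (\Z/p_i\Z)^n$ with $M\bar c \equiv 0$; such a $\bar c$ is found in time $\poly(n,\log p_i)$ by computing a Hermite normal form of the integer kernel lattice $K = \{c \in \Z^n : Mc \equiv 0 \bmod p_i\}$ (which strictly contains $p_i\Z^n$ by the cardinality bound) and taking a basis vector of $K$ not lying in $p_i\Z^n$. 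Lift each coordinate of $\bar c$ to $c_j \in \Z$ with $|c_j| \le \lfloor p_i/2 \rfloor$ — possible because every residue mod $p_i$ has a representative of absolute value at most $\lfloor p_i/2 \rfloor$ — and observe that at least one $c_j \ne 0$ since $\bar c \ne 0$. Place $v' := \sum_{j=1}^n c_j v_j$ into $L_i$; doing this for every group gives $|L_i| = n^{k-i}$.

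Next I would check that (a)--(d) propagate. For (a): $v' = q_{i-1}\sum_j c_j w_j$ and $\sum_j c_j w_j \equiv M\bar c \equiv 0 \pmod{p_i}$, so $v' \equiv 0 \pmod{q_i}$ as $q_{i-1}p_i = q_i$. For (b),(c): by (c) for $L_{i-1}$ the group members have disjoint supports, so the coefficient of $a_t$ in $v'$ is $c_{j(t)}\,\gamma_{v_{j(t)},t}$ for the unique $j(t)$ (if any) whose support contains $t$, of absolute value at most $\lfloor p_i/2 \rfloor \cdot \prod_{l=1}^{i-1}\lfloor p_l/2\rfloor = \prod_{l=1}^i \lfloor p_l/2 \rfloor$; and the support of $v'$ lies in the union of its group's supports, so supports of $v'$s from different groups stay disjoint since the groups partition $L_{i-1}$. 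For (d): choose $j_0$ with $c_{j_0}\ne 0$; by (d) for $L_{i-1}$ there is $t$ with $\gamma_{v_{j_0},t}\ne 0$, hence the coefficient $c_{j_0}\gamma_{v_{j_0},t}$ of $a_t$ in $v'$ is nonzero because $\Z$ is an integral domain. Therefore $L_k$ is a single vector $v^\ast$ with $v^\ast \equiv 0 \pmod q$, $v^\ast = Ax$ for some $x \in \Z^m$ with $\|x\|_\infty \le \prod_{l=1}^k \lfloor p_l/2 \rfloor = T$, and $x \ne 0$; so $Ax \equiv 0 \pmod q$ and $x$ solves $\sistwo_{n-1,m,q,T}$. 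For the running time, step $i$ processes $n^{k-i}$ groups at cost $\poly(n,\log q)$ each (intermediate integers have bit length $O(\log m + \log q)$ since the ultimate coefficients are at most $T<q$), so the total is $\sum_{i=1}^k n^{k-i}\cdot \poly(n,\log q) = \poly(n^k,\log q) = \poly(m)$.

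The point requiring the most care is the treatment of composite $p_i$: the elimination step uses no field structure, relying only on (i) the counting bound $p_i^{n-1} < p_i^{n}$ to produce a nonzero kernel element over $\Z/p_i\Z$, (ii) the elementary fact that every residue mod $p_i$ has a representative of absolute value $\le \lfloor p_i/2 \rfloor$, and (iii) polynomial-time integer linear algebra (Hermite/Smith normal form) to actually exhibit such an element — these replace the linear-dependence argument one would use when $p_i$ is prime. The other subtlety is purely bookkeeping: invariant (c) (disjoint supports, because we always combine only \emph{within} cells of a partition) is exactly what keeps the $\ell_\infty$ bound multiplicative rather than picking up spurious factors of $n$, so that the product $T = \prod_l \lfloor p_l/2 \rfloor$ is attained; and invariant (d) together with integrality of $\Z$ is what makes the final output nonzero.
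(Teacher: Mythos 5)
Your proof is correct and follows essentially the same layered-elimination strategy as the paper's: partition the current columns into blocks of $n$, cancel one factor $p_i$ per round with integer coefficients bounded by $\lfloor p_i/2\rfloor$, and exploit the disjoint supports of the block combinations to keep the $\ell_\infty$ bound multiplicative, yielding $T=\prod_i\lfloor p_i/2\rfloor$. You are in fact more careful than the paper on two points it glosses over: exhibiting a nonzero kernel vector modulo a \emph{composite} $p_i$ (your counting-plus-Hermite-normal-form argument, where Gaussian elimination over a field does not directly apply) and explicitly verifying via your invariant (d) that the final coefficient vector is nonzero.
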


Note that when all the factors of $q$ are $2$ and $3$, say $q = 2^c$, $m = n^c$ where $c$ is a constant, then $\|\ary{x}\|_\infty = 1$ which is the smallest possible $\ell_\infty$-norm one can get. 

\begin{proof}
The algorithm runs the following procedure recursively for $k$ times. 
Define the initial values as $\mat{A}\iter{1}:=\mat{A}$, $m\iter{1}:=m$, $q\iter{1}:=q$. 
For $1\leq i\leq k$:
\begin{enumerate}
\item Partitions $\mat{A}\iter{i}$ in $m\iter{i}/n$ blocks, each block is an $(n-1)\times n$-dimensional matrix. In other words we let $\mat{A}\iter{i} = [\mat{A}\iter{i}_1, ..., \mat{A}\iter{i}_{m\iter{i}/n}]$. 
\item For $1\leq j\leq m\iter{i}/n$, compute a non-zero vector $\ary{z}_j\in\Z^n$ such that		
        \begin{equation*}
			\mat{A}\iter{i}_j \cdot \ary{z}_j = \ary{0} \pmod{p_i}  \text{    and       } \|\ary{z}_j\|_\infty \leq \lowerrounding{p_i/2}
		\end{equation*}
		Note that such vectors are efficiently computable by solving linear systems over $\Z_{p_i}$.
\item Put $\set{ \ary{z}_1, ..., \ary{z}_{m\iter{i}/n} }$ into a matrix $\mat{Y}_i\in \Z^{m\iter{i}\times (m\iter{i}/n)}$ as follows:
	\begin{equation}
		\mat{Y}_i := \pmat{ \ary{z}_1 &  &   &  \\ & \ary{z}_2 &   &  \\ &   & ... & \\ &  &   & \ary{z}_{m\iter{i}/n}},
	\end{equation}
	where the empty spots are zero.
	Note that $\mat{A}\iter{i} \cdot \mat{Y}_i = \mat{0}_{(n-1)\times (m\iter{i}/n)} \pmod{p_i}$ and $\|\mat{Y}_i\|_\infty \leq \lowerrounding{p_i/2}$. 
\item Let $q\iter{i+1} := q\iter{i}/p_i$, $m\iter{i+1} := m\iter{i}/n$, $\mat{A}\iter{i+1}:= \mat{A}\iter{i} \cdot \mat{Y}_i / p_i \mod{q\iter{i+1}}$, and send the new instance $\mat{A}\iter{i}$, $m\iter{i}, q\iter{i}$ to the next iteration.
\end{enumerate}
After $k$ iterations we let $\ary{y}:= \mat{Y}_1 \cdot ... \cdot \mat{Y}_k \in \Z^{m}$ be the final SIS solution.

Let us first verify that $\mat{A}\ary{y} = \ary{0}\pmod q$. Note that 
\begin{align*}
	\mat{A}\ary{y} = \mat{A} \cdot \mat{Y}_1 \cdot ... \cdot \mat{Y}_k 
	&= p_1\cdot \mat{A}\iter{2} \cdot \mat{Y}_2 \cdot ... \cdot \mat{Y}_k    \\
	&= p_1\cdot p_2\cdot  \mat{A}\iter{3} \cdot \mat{Y}_3 \cdot ... \cdot \mat{Y}_k  \\ 
	& \ \ \vdots\\%&= ...   \\
	&= p_1\cdot ... \cdot p_{k-1}\cdot  \mat{A}\iter{k} \cdot \mat{Y}_k  \\
	&= p_1\cdot ... \cdot p_{k}\cdot \ary{v} = \ary{0} \pmod q
\end{align*}
where $\ary{v}\in\Z^(n-1)$ is some integer vector.

We now verify that $\| \ary{y} \|_\infty \leq \prod_{i\in[k]} \lowerrounding{p_i/2}$. 
Let $\mat{W}_1:=\mat{Y}_1$, $\mat{W}_{i} := \mat{W}_{i-1} \cdot \mat{Y}_i$, for $2\leq i\leq k$. Then 
\begin{align*}
	\ary{y} = \mat{Y}_1 \cdot \mat{Y}_2 \cdot ... \cdot \mat{Y}_k 
	&= \mat{W}_2 \cdot \mat{Y}_3 \cdot ... \cdot \mat{Y}_k    \\
	& \ \ \vdots\\
	&= \mat{W}_{k-1} \cdot \mat{Y}_k = \mat{W}_{k}
\end{align*}
For $2\leq i\leq k$, observe that if 
\begin{enumerate}
\item each row of $\mat{W}_{i-1}$ has at most 1 non-zero entry;
\item each row of $\mat{Y}_i$ has at most 1 non-zero entry;
\end{enumerate}
then each row of $\mat{W}_{i}$ has at most 1 non-zero entry, and $\|\mat{W}_{i}\|_\infty\leq \|\mat{W}_{i-1}\|_\infty\cdot \|\mat{Y}_{i}\|_\infty$. The proof completes by making the observation above through $i = 2, ..., k$. 
\end{proof}

The algorithm presented above essentially runs Gaussian elimination for smaller moduli recursively and then put the solutions together. If we replace Gaussian elimination by better algorithms for solving $\sistwo$ in the recursive steps (for example, using the quantum algorithms in \Cref{sec:sis_polyq_const_gap}), then \Cref{thm:sisinfff} can be  generalized as follows.

\begin{theorem}\label{thm:sisinfffgen}
Let $q = \prod_{i\in[k]}p_i$ for some $k>1$ and (possibly composite and duplicated) factors $p_i$. Let $m = \prod_{i\in[k]}m_i$ where $m_1, ..., m_k\in\N$ are length parameters. Let $\beta = \prod_{i\in[k]}\beta_i$ where $\beta_1, ..., \beta_k\in\N$ are threshold parameters. If there exist algorithms that solve 
$\sistwo_{n,m_i,p_i,\beta_i}$ in time $poly(m_i)$, for $i = 1, ..., k$, then there is an algorithm that solves $\sistwo_{n,m,q,\beta}$ in time polynomial in $m$. The resulting algorithm is quantum if one of the algorithms for $\sistwo_{n,m_i,p_i,\beta_i}$ is quantum.
\end{theorem}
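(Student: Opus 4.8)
The plan is to run the same iterative ``digit‑by‑digit'' reduction used in the proof of Theorem~\ref{thm:sisinfff}, but to replace its Gaussian‑elimination step (which solves $\sistwo_{n-1,n,p_i,\lowerrounding{p_i/2}}$) by a black‑box invocation of the hypothesized algorithm for $\sistwo_{n,m_i,p_i,\beta_i}$. Because that subroutine works in full dimension $n$ rather than forcing the dimension down by one, the composed procedure solves $\sistwo_{n,m,q,\beta}$ rather than its $(n-1)$‑dimensional analogue. Initialize $\mat{A}\iter{1}:=\mat{A}$, $m\iter{1}:=m$, $q\iter{1}:=q$, and observe that $m\iter{i}=\prod_{j\ge i}m_j$ and $q\iter{i}=\prod_{j\ge i}p_j$ will always be integers divisible by $m_i$ and $p_i$ respectively. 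In iteration $i=1,\dots,k$: partition $\mat{A}\iter{i}\in\Z_{q\iter{i}}^{n\times m\iter{i}}$ into $m\iter{i}/m_i$ blocks $\mat{A}\iter{i}=[\mat{A}\iter{i}_1,\dots,\mat{A}\iter{i}_{m\iter{i}/m_i}]$ of size $n\times m_i$; reduce each block modulo $p_i$ and run the algorithm for $\sistwo_{n,m_i,p_i,\beta_i}$ to obtain a nonzero $\ary{z}_j\in\Z^{m_i}$ with $\|\ary{z}_j\|_\infty\le\beta_i$ and $\mat{A}\iter{i}_j\ary{z}_j\equiv\ary{0}\pmod{p_i}$; assemble the $\ary{z}_j$ into a block‑diagonal matrix $\mat{Y}_i\in\Z^{m\iter{i}\times(m\iter{i}/m_i)}$ exactly as in Theorem~\ref{thm:sisinfff}, so that $\mat{A}\iter{i}\mat{Y}_i\equiv\mat{0}\pmod{p_i}$ and $\|\mat{Y}_i\|_\infty\le\beta_i$; then set $q\iter{i+1}:=q\iter{i}/p_i$, $m\iter{i+1}:=m\iter{i}/m_i$, and $\mat{A}\iter{i+1}:=(\mat{A}\iter{i}\mat{Y}_i)/p_i\bmod q\iter{i+1}$, where the division is exact and, since $q\iter{i}=p_i q\iter{i+1}$, the result is independent of the integer lift chosen for $\mat{A}\iter{i}$. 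As $m=\prod_i m_i$, after $k$ iterations $m\iter{k+1}=1$, and we output $\ary{y}:=\mat{Y}_1\mat{Y}_2\cdots\mat{Y}_k\in\Z^m$.

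For correctness, the telescoping identity $\mat{A}\mat{Y}_1\cdots\mat{Y}_i\equiv p_1\cdots p_i\cdot\mat{A}\iter{i+1}\mat{Y}_{i+1}\cdots\mat{Y}_k\pmod q$, proved by induction just as in Theorem~\ref{thm:sisinfff} (using $p_i\mid q\iter{i}$ at each step), gives $\mat{A}\ary{y}=p_1\cdots p_k\cdot\ary{v}=q\ary{v}\equiv\ary{0}\pmod q$ for some integer vector $\ary{v}$. For the norm bound, set $\mat{W}_1:=\mat{Y}_1$ and $\mat{W}_i:=\mat{W}_{i-1}\mat{Y}_i$; each $\mat{Y}_i$ has at most one nonzero entry per row, this property is preserved by the products $\mat{W}_i=\mat{W}_{i-1}\mat{Y}_i$, and whenever it holds one has $\|\mat{W}_i\|_\infty\le\|\mat{W}_{i-1}\|_\infty\cdot\|\mat{Y}_i\|_\infty$, so $\|\ary{y}\|_\infty=\|\mat{W}_k\|_\infty\le\prod_i\beta_i=\beta$. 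Finally $\ary{y}\ne\ary{0}$: the columns of each $\mat{Y}_i$ are supported on disjoint blocks of coordinates and are individually nonzero, hence linearly independent, so $\mat{Y}_i$ has full column rank; since $\mat{Y}_k$ is itself a nonzero vector, a downward induction on $i$ shows $\mat{Y}_i\mat{Y}_{i+1}\cdots\mat{Y}_k\ne\ary{0}$ for every $i$, and in particular $\ary{y}\ne\ary{0}$.

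For efficiency, iteration $i$ makes $m\iter{i}/m_i=\prod_{j>i}m_j\le m$ calls to the $\sistwo_{n,m_i,p_i,\beta_i}$ subroutine, so the total number of subroutine calls over all $k$ levels is at most $km$, each costing $\poly(m_i)\subseteq\poly(m)$ time; the remaining bookkeeping (block extraction, the products $\mat{A}\iter{i}\mat{Y}_i$ and $\mat{Y}_1\cdots\mat{Y}_k$, and the exact divisions) is classical and runs in $\poly(m)$ time, so the overall running time is $\poly(m)$. If one of the subroutines is quantum and succeeds only with probability $\poly(1/m_i)$, first amplify its success probability to $1-\negl(m)$ by independent repetition, then take a union bound over the $O(km)$ invocations; the composed algorithm solves $\sistwo_{n,m,q,\beta}$ with overwhelming probability and is quantum whenever any constituent subroutine is.

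The structure of the argument is already present in Theorem~\ref{thm:sisinfff}, so no step is genuinely hard; the points that require care are (i) checking that $\mat{A}\iter{i+1}$ is well defined modulo $q\iter{i+1}$ independently of the integer lift of $\mat{A}\iter{i}$, which is exactly where $q\iter{i}=p_i q\iter{i+1}$ is used, and (ii) tracking that it is the per‑row sparsity of the $\mat{Y}_i$ — not any dimension count — that makes the $\ell_\infty$ bounds multiply to $\prod_i\beta_i$ and simultaneously forces $\ary{y}\ne\ary{0}$. Both points follow the template of the proof of Theorem~\ref{thm:sisinfff} essentially verbatim, under the substitutions $\lowerrounding{p_i/2}\mapsto\beta_i$, block width $n\mapsto m_i$, and Gaussian elimination $\mapsto$ the assumed $\sistwo_{n,m_i,p_i,\beta_i}$ solver.
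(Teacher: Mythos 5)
Your proof is correct and is exactly the argument the paper intends: Theorem~\ref{thm:sisinfffgen} is stated without an explicit proof, the preceding sentence indicating only that one should rerun the recursion from Theorem~\ref{thm:sisinfff} with the assumed $\sistwo_{n,m_i,p_i,\beta_i}$ solvers in place of Gaussian elimination, and that is precisely what you do, supplying the well-definedness of the division by $p_i$, the per-row sparsity that makes the $\ell_\infty$ bounds multiply, the nonvanishing of $\ary{y}$, and the running-time accounting that the paper omits. The one caveat --- shared with the paper's implicit argument --- is that $\sistwo$ is an average-case problem over a uniformly random matrix, and for $i\geq 2$ the blocks of $\mat{A}\iter{i} \bmod p_i$ produced by the recursion are not obviously uniformly distributed, so invoking the subroutines as black boxes on them strictly requires either worst-case solvers or an added argument that (near-)uniformity is preserved across iterations.
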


\end{document}